\definecolor{darkblue}{RGB}{0,30,160}
\newcommand{\ctb}[1]{\textcolor{darkblue}{\textsf{\textup{#1}}}}
\newcommand{\ctbmath}[1]{\textcolor{darkblue}{#1}}
\newcommand\fv{\textsf{\textup{fv}}}
\newcommand\dom{\textsf{\textup{dom}}}
\newcommand\vrange{\textsf{\textup{vrange}}}
\newcommand{\invred}{\longleftarrow}
\newcommand\redd{\mathrel{{\red}^*}}
\newcommand\iredd{\mathrel{{}^*{\invred}}}
\newcommand{\Dedukti}{\textsc{Dedukti}}
\newcommand{\Type}{\textsf{\textup{Type}}}
\newcommand{\Kind}{\textsf{\textup{Kind}}}
\newcommand{\red}{\longrightarrow}
\newcommand{\trans}[1]{\llbracket #1 \rrbracket}
\newcommand{\UPP}{$\mathbb{UPP}$}
\newcommand{\DkCheck}{\textsc{DkCheck}}
\newcommand{\Agda}{\textsc{Agda}}
\newcommand{\Predicativize}{\textsc{Predicativize}}
\newcommand{\Universo}{\textsc{Universo}}
\newcommand\peq{\overset{\scriptscriptstyle?}{=}}
\newcommand\mgu{m.g.u.\@}
\keywords{Type Theory, Impredicativity, Predicativity, Proof Translation, Universe Polymorphism, Universe-Polymorphic Elaboration, Unification for Universe Levels, Agda, Dedukti}
\begin{document}

\title[Sharing proofs with predicative theories]{Sharing proofs with predicative theories \texorpdfstring{\\}{}through universe-polymorphic elaboration}

\author[T.~Felicissimo]{Thiago Felicissimo\lmcsorcid{0009-0000-1074-9275}}
\author[F.~Blanqui]{Frédéric Blanqui\lmcsorcid{0000-0001-7438-5554}}

\address{Université Paris-Saclay, INRIA project Deducteam, Laboratoire Méthodes Formelles, ENS Paris-Saclay, 91190 France}

\email{thiago.felicissimo@inria.fr, frederic.blanqui@inria.fr}

\begin{abstract}
  As the development of formal proofs is a time-consuming task, it is important to devise ways of sharing the already written proofs to prevent wasting time redoing them. One of the challenges in this domain is to translate proofs written in proof assistants based on impredicative logics to proof assistants based on predicative logics, whenever impredicativity is not used in an essential~way.

  In this paper we present a transformation for sharing proofs with a core predicative system supporting prenex universe polymorphism. It consists in trying to elaborate each term  into a predicative universe-polymorphic term as general as possible. The use of universe polymorphism is justified by the fact that  mapping each universe to a fixed  one in the target theory is not sufficient in most cases. During the elaboration, we need to solve unification problems in the equational theory of universe levels. In order to do this, we give a complete characterization of when a single equation admits a most general unifier. This characterization is then employed in a partial algorithm which uses a constraint-postponement strategy for trying to solve unification problems.

  The proposed translation is of course partial, but in practice allows one to translate many proofs that do not use impredicativity in an essential way.  Indeed, it was implemented in the tool \textsc{Predicativize} and then used to translate semi-automatically many non-trivial developments from \textsc{Matita}'s library to \Agda{}, including proofs of Bertrand's Postulate and Fermat's Little Theorem, which (as far as we know) were not available in \Agda{} yet.
\end{abstract}

\maketitle

\section{Introduction}

An important achievement of the research community in logic is the invention of proof assistants. Such tools allow for interactively writing proofs, which are then checked automatically and  can then be reused in other developments. Proof assistants do not only help mathematicians to make sure that their proofs are indeed correct, but are also used to verify the correctness of safety-critical software.

\paragraph*{Interoperability of proof assistants} Unfortunately, a proof written in a proof assistant cannot be directly reused in another one, which makes each tool isolated in its own library of proofs. This is specially the case when considering two proof assistants with incompatible logics, as in this case simply translating from one syntax to another  would not work. Therefore, in order to share proofs between systems it is often required to do logical transformations.

A naïve approach to share proofs from a proof assistant  $ A $ to a proof assistant $ B $ is to define a transformation acting directly on the syntax of $ A $ and then implement it using the codebase of $ A $. However, this code would be highly dependent on the implementation of $ A $ and can easily become outdated if the codebase of $ A $ evolves. Moreover, if there is another proof assistant $ A' $ whose logic is very similar to the one of $ A $ then this transformation would have to be implemented once again in order to be used with $ A' $ --- the translation is \textit{implementation-dependent}.

\paragraph*{Logical Frameworks \& Dedukti} A better solution is instead to first define the logics of all proof assistants in a common formalism, a \textit{logical framework}. Then, proof transformations can be defined uniformly \textit{inside} the logical framework. Hence, such transformations do not depend on the implementations anymore, but instead on the \textit{logics} that are implemented.

The logical framework \Dedukti{} \cite{dedukti} is a good candidate for a system where multiple logics can be encoded, allowing for logical transformations to be defined uniformly inside \Dedukti{}. Indeed, first, the framework was already shown to be sufficiently expressive to encode the logics of many proof assistants \cite{lmcs:10959}. Moreover, previous works have shown how proofs can be transformed inside \Dedukti{}. For instance, Thiré describes in \cite{sttfa} a transformation to translate a proof of Fermat's Little Theorem from the Calculus of Inductive Constructions to Higher Order Logic (HOL), which can then be exported to multiple proof assistants such as \textsc{HOL}, \textsc{PVS}, \textsc{Lean}, etc. Géran also used \Dedukti{} to export the formalization of Euclid's Elements Book 1 in \textsc{Coq} \cite{geocoq} to several proof assistants \cite{yoan}.

\paragraph*{(Im)Predicativity}

One of the challenges in proof system interoperability is sharing proofs coming from impredicative proof assistants (the majority of them) with predicative ones such as \Agda{}. Indeed, impredicativity, which states that propositions can refer to entities of arbitrary sizes, is a logical principle absent from  predicative systems. It is therefore clear that any proof that uses impredicativity in an essential way cannot be translated to a predicative system. Nevertheless, one can wonder if most proofs written in impredicative systems really use impredicativity and, if not, how one could devise a way for trying to detect this  and translate them to predicative systems.

\paragraph*{A predicativization transformation}

In this paper, we tackle this problem by proposing a transformation that tries to do precisely this. Our translation works by forgetting all the universe information of the initial impredicative term, and then trying to elaborate it into a predicative universe-polymorphic term as general as possible. The need for universe polymorphism arises from the fact that mapping each universe to a unique one in the target theory does not work in most cases --- this is explained in details in Section~\ref{sec:firstlook}.

\paragraph*{Universe level unification}

During the translation, we need to solve level unification problems which are generated when elaborating the impredicative term into a universe-polymorphic one. We therefore develop a (partial) unification algorithm for the equational theory of universe levels. This is done by first giving a novel and complete characterization of which single equations admit a most general unifier (\mgu{}), along with an explicit description of a \mgu{} when it exists. This characterization is then employed in an algorithm implementing a \textit{constraint-postponement} strategy: at each step, we look for an equation admitting a \mgu{} and solve it while applying the obtained substitution to the other equations, in the hope of bringing new ones to the fragment admitting a \mgu{} The given algorithm is \textit{partial} in the sense that, when the unification problem is not a singleton, it may fail to find a \mgu{} even in cases that there is one --- see for instance Example~\ref{exa:mgu}. Our practical results show nevertheless that it is sufficiently powerful for our needs.

\paragraph*{The implementation}

Our predicativization algorithm was  implemented on top of the \DkCheck{} type-checker for \Dedukti{} with the tool \textsc{Predicativize} (available at \url{https://github.com/Deducteam/predicativize}), allowing for the translation of proofs inside \Dedukti{}.  Our tool works in a semi-automatic manner: most of the translation is handled by the proposed algorithm, yet some intermediate steps that are  harder to automate currently require some user intervention. The translated proofs can then be exported to \Agda{}, the main proof assistant based on predicative type theory.

\paragraph*{Translating Matita's arithmetic library}

The tool has been used to translate to the proof assistant \textsc{Agda} the \textit{whole} of \textsc{Matita}'s arithmetic library, making many important mathematical developments available to \textsc{Agda} users. In particular, this work has led to (as far as we know) the first ever proofs in \Agda{} of Fermat's Little Theorem, stating that for $p\in\mathbb{N}$ prime and $n\in\mathbb{N}$ coprime to $p$ we have $n^{p-1}$ equal to $ 1$ modulo $p$, and of Bertrand's Postulate,\footnote{Which, despite its name, is actually a theorem and not a postulate.} stating that for all positive $n\in\mathbb{N}$ one can always find a prime number~$p$ with $n < p \leq 2n$.

The proof of Bertrand's Postulate in \textsc{Matita}  had even been the subject of a whole journal publication~\cite{bertrand}, evidencing its complexity and importance. Thanks to \textsc{Predicativize}, the same hard work did not have to be repeated to make it available in \textsc{Agda}, as the transformation allowed the translation of the whole proof without \textit{any} need of specialist knowledge about it.

\paragraph*{Outline}

We start in Section \ref{sec:dedukti} with an introduction to \Dedukti, before moving to Section \ref{sec:firstlook}, where we present informally the problems that appear when translating proofs to predicative systems. We then introduce in Section \ref{sec:upp} a predicative universe-polymorphic system, which is a subsystem of \Agda{} and is used as the target of the translation.  This is followed by Section \ref{sec:alg}, in which we present the elaboration algorithm. Section \ref{sec:solving} then contributes with a complete characterization of equations admitting a \mgu{}, which is then used to give an algorithm for universe level unification. We then introduce the tool \textsc{Predicativize} in Section \ref{sec:predicativize}, and describe the translation of \textsc{Matita}'s arithmetic library in Section \ref{sec:matita}. Finally,  Section \ref{sec:conc} concludes and discusses  future work.

\paragraph*{Related version}

A preliminary version of this work~\cite{DBLP:conf/csl/FelicissimoBB23} was published in the proceedings of the 31st EACSL Annual Conference
on Computer Science Logic. This journal version contains a number of improvements, among which are the following:

\begin{enumerate}
  \item A main novelty with respect to \cite{DBLP:conf/csl/FelicissimoBB23} is that we provide a complete characterization of when a single equation between universe levels admit a \mgu{} (most general unifier), along with an explicit description of such a \mgu{} This characterization then allows us to give a better algorithm for level unification, which in particular is  complete for singleton problems, whereas the original algorithm is not.

  \item The confluence proof of $\mathbb{UPP}$ in \cite{DBLP:conf/csl/FelicissimoBB23} relied on the \textit{ad hoc} restriction that level variables could only be replaced by levels. We make this condition more precise and integrate it in the definition of \textsc{Dedukti} by adopting a presentation featuring \textit{confinement}, a technique first proposed in \cite{assaf:hal-01515505} which allows to isolate a first-order subset of terms from the higher-order part of the syntax.
  \item Finally, most of the text has been rewritten in order to improve the presentation.
\end{enumerate}

\section{Dedukti}
\label{sec:dedukti}

  \begin{figure}
    \begin{mathpar}
      \inferrule[EmptyCtx]
      {  }
      { \cdot \vdash }
      \and
      \inferrule[ExtCtx]
      { \Gamma \vdash A : \Type }
      { \Gamma, x : A \vdash }
      \and
      \inferrule[ExtCtxC]
      { \Gamma \vdash A : \Type }
      { \Gamma, i : A \vdash }
      \\
      \inferrule[Sort]
      {\Gamma \vdash}
      {\Gamma \vdash \Type : \Kind}
      \and
      x : A \in \Gamma
      \inferrule[Var]
      { \Gamma \vdash }
      {\Gamma \vdash x : A}
      \and
      i : A \in \Gamma
      \inferrule[VarC]
      { \Gamma \vdash }
      {\Gamma \vdash i : A}
      \\
      \begin{matrix}
        &c : A \in \Sigma_\mathbb{T} \text{ or }\hfill\\
        &c :  A :=  u \in \Sigma_\mathbb{T}
      \end{matrix}\hspace{0.4em}
      \inferrule[Cons]
      {\Gamma \vdash}
      {\Gamma \vdash c : A}
      \and
        f : \Delta \to A \in \Sigma_\mathbb{T}
       \inferrule[ConsC]
       {\Gamma \vdash \vec{l}:\Delta}
       {\Gamma \vdash f(\vec{l}) : A[\vec{i}_\Delta \mapsto \vec{l}]}
      \and
      A \equiv B
      \inferrule[Conv]
      {\Gamma \vdash t : A \\ \Gamma \vdash B : s}
      {\Gamma \vdash t : B}
      \and
      A \equiv B
      \inferrule[ConvC]
      {\Gamma \vdash l : A \\ \Gamma \vdash B : \Type}
      {\Gamma \vdash l : B}
      \and
      \inferrule[Arrow]
      {\Gamma\vdash A : \Type\\\Gamma, x : A \vdash B : s}
      {\Gamma \vdash (x : A) \to B : s}
      \and
      \inferrule[ArrowC]
      {\Gamma\vdash A : \Type\\\Gamma, i : A \vdash B : s}
      {\Gamma \vdash (i : A) \to B : s}
      \and
      \inferrule[Abs]
      {\Gamma\vdash A : \Type\\\\\Gamma, x : A \vdash B : s \\ \Gamma, x : A \vdash t : B }
      {\Gamma \vdash x. t : (x : A) \to B}
      \and
      \inferrule[AbsC]
      {\Gamma\vdash A : \Type\\\\\Gamma, i : A \vdash B : s \\ \Gamma, i : A \vdash t : B }
      {\Gamma \vdash i. t : (i : A) \to B}
      \\
      \inferrule[App]
      {\Gamma \vdash t : (x : A) \to B \\ \Gamma \vdash u : A}
      {\Gamma \vdash t~u : B[x \mapsto u]}
      \and
      \inferrule[AppC]
      {\Gamma \vdash t : (i : A) \to B \\ \Gamma \vdash l : A}
      {\Gamma \vdash t~l : B[i \mapsto l]}
    \end{mathpar}
    \caption{Typing rules of \Dedukti}
    \label{fig:typing-dk}
\end{figure}

In this work we use \Dedukti{} \cite{dedukti, lmcs:10959} as the framework in which we express the various type theories we use and define our proof transformation. Therefore, we start with a quick introduction to this system. For a reader familiar with \Dedukti{}, see Remark \ref{rem:dk-comparison} for a comparison of our presentation of \Dedukti{} with more standard ones --- in particular, note that we use a version with \textit{confinement} \cite{assaf:hal-01515505}.

The syntax of  \Dedukti{} is defined by the following grammars. Here, $ c $~ranges over a set of constants $\mathcal{C}$ and $f$ ranges over a set of confined constants $ \mathcal{F} $. Similarly, $ x $~ranges over an infinite set of variables $\mathcal{V}$, whereas $i$ ranges over an infinite set of confined variables $\mathcal{I}$. We assume that the sets $\mathcal{V}$, $\mathcal{I}$, $\mathcal{C}$ and $\mathcal{F}$ are pairwise disjoint, and that each confined constant $f$ comes with an arity $n \in \mathbb{N}$.

Because of confinement, abstraction and application come in two flavors. First, we have $x.t$ and $t~u$ for the usual abstraction and application. Then, we also have $i.t$ and $t~l$ for abstracting a confined variable or applying a regular term to a confined term --- note therefore that confined terms are not terms, but can appear in the right side of an application. Accordingly, we also have the dependent functions types $(x:A) \to B$ for the regular case, and $(i : A) \to B$ for the confined case. Whenever the variable $x$ does not appear free in $B$, we abbreviate $(x:A)\to B$ as just $A \to B$.
\begin{align*}
  &(\textsc{Sorts}) &s,s' &::= \Type \mid\Kind\\
  &(\textsc{Terms and Types}) &A, B, t, u &::= x \mid c \mid s \mid (x : A) \to B \mid x.t \mid t~u \mid (i :A)\to B \mid i.t \mid t~l  \\
  &(\textsc{Confined Terms}) &l, l' &::= i \mid f(l_{1},...,l_{n}) \hspace{3em} \text{where }\textsf{arity}(f)=n
\end{align*}

A \textit{substitution} $\theta$ is a finite set of pairs $x\mapsto t$ or $i \mapsto l$ where each variable occurs at most once --- note that regular variables can only be mapped to regular terms, and confined variables only to confined terms. We write $t[\theta]$ or $l[\theta]$ for its application to a term or confined term, and $\dom(\theta)$ for the set of variables that are assigned to a term or confined term by $\theta$.

A \textit{context} $ \Gamma $ is a finite sequence of entries of the form $ x : A $ or $i:A$ where each variable occurs at most once. A \textit{signature} $ \Sigma $ is a finite sequence of entries of the form $ c : A $ or $ c : A := t$ or $f : (i_{1}:B_{1}..i_{n}:B_{n}) \to A$, where we must have $\textsf{arity}(f)=n$. We adopt the convention of writing the names of constants of the signature in \ctb{blue sans serif font}\footnote{Note that the letters $c$ and $f$ themselves are not written in blue because they are not really names, but rather variables of the metalanguage for referring to constant names.}.

A \textit{rewrite system} $\mathcal{R}$ is a set of \textit{rewrite rules}, which are pairs of the form $c~\vec{t} \red u$ with $\fv(u)\subseteq \fv(c~\vec{t})$. Given a signature $\Sigma$, we also consider the $ \delta $ rules allowing for the unfolding of definitions: we have $ c \red t \in \delta $  for each $ c  :A := t \in \Sigma $. We then denote by $ \red_\mathcal{R} $ the closure by context and substitution of $ \mathcal{R} $, and by $ \red_\delta $ the closure by context of $ \delta $. Finally, we define $\red_{\beta}$ as the closure by context of $(x.t)u \red t[x\mapsto u]$, $\red_{\beta_{c}}$ as the closure by context of $(i.t)l \red t[i \mapsto l]$, and we write $\red_{\beta\beta_{c}\mathcal{R}\delta}$ for $\red_\beta  \cup \red_{\beta_{c}}  \cup \red_\mathcal{R} \cup \red_\delta$.

Rewriting allows us to define equality by computation, but not all equalities  can be defined  like this  in a well-behaved way, e.g. the commutativity of some operator. Therefore, we also consider \textit{rewriting modulo equations} \cite{huet1980confluent,bezem2003term,blanqui03rta}. If $\mathcal{E}$ is a set of equations of the form $l \simeq l'$, we write $\simeq_{\mathcal{E}}$ for its reflexive-symmetric-transitive closure by context and substitution --- note that we only allow equations between confined terms. Because $\mathcal{R}$ and $\mathcal{E}$ are usually kept fixed, in the following we write $\red$ for $\red_{\beta\beta_{c}\mathcal{R}\delta}$, $\simeq$ for $\simeq_{\mathcal{E}}$ and $\equiv$ for the reflexive-symmetric-transitive closure of $\red\cup\simeq$.

One central notion in \Dedukti{} is that of \textit{theory}, which is a triple $  \mathbb{T} = (\Sigma_\mathbb{T}, \mathcal{R}_\mathbb{T},  \mathcal{E}_\mathbb{T}) $ where $ \Sigma_\mathbb{T} $ is a signature and $\mathcal{R}_\mathbb{T}$ and $\mathcal{E}_\mathbb{T}$ are respectively sets of rewrite rules and equations, containing only constants and confined constants declared in $ \Sigma_\mathbb{T} $. Theories are used to define in \Dedukti{} the object logics in which we work (for instance, predicate logic). When working in some theory, we consider untyped terms containing only constants declared in the theory --- that is, we assume that the sets $\mathcal{C}$ and $\mathcal{F}$ contain exactly the constants and confined constants declared in $\Sigma_\mathbb{T}$. Given a theory~$\mathbb{T}$, we define the typing rules of \Dedukti{} as the ones of Figure \ref{fig:typing-dk}. There, we write $\Gamma \vdash \vec{l}:\Delta$ for $\Gamma \vdash l_{k}: B_{k}[i_{1}\mapsto l_{1}..i_{k-1}\mapsto l_{k-1}]$ for all $k=1..n$ when $\Delta = i_{1}:B_{1}..i_{n}:B_{n}$. Note also that the signature  and the conversion relation $\equiv$ are the ones defined by the theory~$\mathbb{T}$.   Finally, whenever the underlying theory is not clear from the context, we write $\Gamma \vdash^{\mathbb{T}} t : A$.

A signature entry $c : A$ or $c : A := t$ or $f : \Delta \to A$ is valid in $\mathbb{T}$ if, respectively, $\vdash^{\mathbb{T}}A : s$ or $\vdash^{\mathbb{T}} t : A$ or $\Delta \vdash^{\mathbb{T}} A : \Type$. A theory $\mathbb{T}$ is then said to be well-formed if each entry in $\Sigma_{\mathbb{T}}$ is valid in $(\Sigma', \mathcal{R}', \mathcal{E}')$, where $\Sigma'$ is the prefix of $\Sigma_{\mathbb{T}}$ preceding the entry in question, and $\mathcal{R}', \mathcal{E}'$ are the restrictions of $\mathcal{R}_{\mathbb{T}}, \mathcal{E}_{\mathbb{T}}$ to rules and equations only containing constants in $\Sigma'$.

\begin{rem}\label{rem:dk-comparison}
  Compared with most presentations of \textsc{Dedukti} \cite{dedukti, lmcs:10959}, ours feature three relevant differences:

  \begin{enumerate}
    \item We consider a syntax with non-annotated abstractions. As shown in \cite{10.5555/645891.671431}, this leads to undecidable type checking. However, in this article we only work with encodings in which the only terms of interest are the $\beta$-normal forms~\cite{felicissimo:LIPIcs.FSCD.2022.25}. For these terms, the omission of such annotations does not jeopardize the decidability of type checking.%
    \item Like some other works \cite{assaf:hal-01515505, guillaume, blanqui22fscd}, we consider a version of \Dedukti{} with rewriting modulo. This is essential to support some equations which cannot be oriented into rewrite rules.
    \item As mentioned previously, we consider a version with \textit{confinement}. This notion, first introduced in \cite{assaf:hal-01515505}, allows to syntactically isolate a first-order part of the syntax from the higher-order one. In \cite{assaf:hal-01515505} it is used to provide a confluence criterion for non-left-linear rewriting. In a similar vein, we use it to allow for non-linear equations in $\mathcal{E}$ without jeopardizing the \textit{Church-Rosser modulo} property --- see Remark~\ref{rem:sim} for further discussion on this point.

  \end{enumerate}
\end{rem}

We recall the following basic metaproperties of \Dedukti{}.

\begin{prop}[Basic metaproperties]\label{prop:basic-meta}~
  Let us write $\Gamma \sqsubseteq \Gamma'$ when $\Gamma$ is a subsequence~of~$\Gamma'$, and let $\Gamma\vdash\mathcal{J}$ range over the typing judgment forms $\Gamma\vdash$ or $\Gamma\vdash t:A$ or $\Gamma\vdash l:A$.
  \begin{description}
  \item[\quad Weakening]  Suppose  $ \Gamma \sqsubseteq \Gamma' $ and $ \Gamma'\vdash$. Then $ \Gamma \vdash \mathcal{J} $ implies $\Gamma' \vdash \mathcal{J}$.
  \item[\quad Substitution property] If $\Gamma,x:B,\Gamma'\vdash \mathcal{J}$ and $\Gamma\vdash u:B$ then $\Gamma,\Gamma'[x\mapsto u]\vdash \mathcal{J}[x\mapsto u]$. If $\Gamma,i:B,\Gamma'\vdash \mathcal{J}$ and $\Gamma\vdash l:B$ then $\Gamma,\Gamma'[i \mapsto l]\vdash \mathcal{J}[i\mapsto l]$.
\end{description}
  In the following points, suppose that the underlying theory is well-formed.
  \begin{description}
  \item[\quad Validity] If $ \Gamma \vdash t : A $ then either $ A = \Kind $ or $ \Gamma \vdash A :s $ for some sort $s$. If $\Gamma\vdash l : A$ then $\Gamma \vdash A : \Type$.
    \item[\quad Subject reduction for $\delta$] If $ \Gamma \vdash t : A $ and $ t \red_\delta t' $ then $ \Gamma \vdash t' : A $
  \end{description}
  We say that injectivity of dependent products holds when $(x : A) \to B \equiv (x : A') \to B'$ implies $A \equiv A'$ and $B \equiv B'$, and $(i : A) \to B \equiv (i : A') \to B'$ implies $A \equiv A'$ and $B \equiv B'$.
  \begin{description}
    \item[\quad Subject reduction for $\beta$ and $\beta_{c}$] If injectivity of dependent products holds, then $ \Gamma \vdash t : A $ and $ t \red_{\beta\beta_{c}} t' $ imply $ \Gamma \vdash t' : A $.
  \end{description}
  A rule $l \red r\in\mathcal{R}$ is said to preserve typing whenever $\Gamma \vdash l[\theta] : A$ implies $\Gamma \vdash r[\theta] :A$, for every $\theta, \Gamma, A$.
  \begin{description}
    \item[\quad Subject reduction for $\mathcal{R}$] If every rule in $\mathcal{R}$ preserves typing, then $\Gamma \vdash t : A$  and $ t \red_{\mathcal{R}} t' $ imply $ \Gamma \vdash t' : A $.
  \end{description}
\end{prop}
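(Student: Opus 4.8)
The plan is to prove the statements in the order listed, each by a structural induction and reusing the earlier ones, after first recording two purely syntactic facts: the composition identities for substitutions (e.g.\ $B[x\mapsto u][\theta]=B[\theta][x\mapsto u[\theta]]$ when $x\notin\dom(\theta)$, and likewise for confined variables), and the stability of $\equiv$ under substitution --- the latter being immediate since $\red_{\beta\beta_{c}\mathcal{R}\delta}$ and $\simeq_{\mathcal{E}}$ are defined as closures by substitution and by arbitrary contexts, so congruent (confined) terms plugged into a context stay congruent. One bookkeeping point pervades everything: by confinement every statement has a regular and a confined version, to be proved simultaneously, but since confined terms contain only confined variables and constants, a regular substitution never reaches inside a confined term, which keeps the two layers from interfering.

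\textbf{Weakening} and the \textbf{substitution property} come first and do not use well-formedness. Both are inductions on the derivation of $\Gamma\vdash\mathcal{J}$, simultaneously over the judgment forms $\Gamma\vdash$, $\Gamma\vdash t:A$, $\Gamma\vdash l:A$ (with $\Gamma\vdash\vec{l}:\Delta$ handled alongside). For weakening only \textsc{Var}/\textsc{VarC} (membership is preserved by $\Gamma\sqsubseteq\Gamma'$), the context-forming and binder rules (extend both contexts, use $\Gamma'\vdash$), and \textsc{Conv}/\textsc{ConvC} (the relevant types are unchanged) need attention. For the substitution property the key case is again the variable rule: when the substituted variable is the declared one, invoke $\Gamma\vdash u:B$ and weakening, noting $x\notin\fv(B)$ so that $B[x\mapsto u]=B$; the \textsc{App}/\textsc{AppC} cases use the composition identity, and \textsc{Conv}/\textsc{ConvC} use stability of $\equiv$ under substitution.

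Next I would prove a \textbf{generation lemma}: $\Gamma\vdash t~u:C$ yields $A,B$ with $\Gamma\vdash t:(x:A)\to B$, $\Gamma\vdash u:A$ and $C\equiv B[x\mapsto u]$; $\Gamma\vdash x.t:T$ yields $A',B',s$ with $\Gamma\vdash A':\Type$, $\Gamma,x:A'\vdash B':s$, $\Gamma,x:A'\vdash t:B'$ and $T\equiv(x:A')\to B'$; $\Gamma\vdash c:A$ yields a declaration $c:A_{c}$ or $c:A_{c}:=u$ in $\Sigma_{\mathbb{T}}$ with $A\equiv A_{c}$; and likewise for the confined constructors --- each an immediate induction collapsing the trailing uses of \textsc{Conv}. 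From generation and the substitution property, \textbf{validity} follows by induction on $\Gamma\vdash t:A$: for \textsc{Var}/\textsc{Cons}/\textsc{ConsC} the type is well-sorted because the context is well-formed and, using well-formedness of $\mathbb{T}$ and weakening (also with respect to signature prefixes), the declared types are well-sorted; \textsc{Abs}/\textsc{AbsC} re-apply \textsc{Arrow}/\textsc{ArrowC} to the premises; \textsc{App}/\textsc{AppC} apply generation to the IH $\Gamma\vdash(x:A)\to B:s$ to get $\Gamma,x:A\vdash B:s$ and then substitute.

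The three subject reduction statements then share one shape: induction on the context closure of the rewrite relation. \textbf{For $\delta$}, the root case $c\red u$ uses generation ($A\equiv A_{c}$), well-formedness of $\mathbb{T}$ and weakening to type $u$ at $A_{c}$, and \textsc{Conv} (licensed by validity); the congruence cases peel off the head constructor with generation, apply the IH to the reducing subterm, re-apply the same rule, and re-sort the type with \textsc{Conv} (using stability of $\equiv$ under substitution when the redex sits in an argument). \textbf{For $\mathcal{R}$}, the congruence cases are identical and the root case $l[\theta]\red r[\theta]$ is exactly the hypothesis that rules preserve typing. \textbf{For $\beta$ and $\beta_{c}$}, the congruence cases are again identical, and the root case $(x.t)u\red t[x\mapsto u]$ is where the assumed injectivity of dependent products enters: generation gives $\Gamma\vdash x.t:(x:A)\to B$ and $\Gamma\vdash u:A$, generation on the abstraction gives $\Gamma,x:A'\vdash t:B'$ with $(x:A')\to B'\equiv(x:A)\to B$, injectivity yields $A'\equiv A$ and $B'\equiv B$, so \textsc{Conv} gives $\Gamma\vdash u:A'$, the substitution property gives $\Gamma\vdash t[x\mapsto u]:B'[x\mapsto u]$, and a last \textsc{Conv} (via $B'[x\mapsto u]\equiv B[x\mapsto u]\equiv C$ and validity) concludes; the $\beta_{c}$ case is identical with confined binders. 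The thing to watch --- rather than a true obstacle --- is that no step secretly needs confluence of $\equiv$: the generation lemma is phrased entirely up to $\equiv$ and never extracts the shape of a type from a conversion, so it holds unconditionally, and the one genuine ``shape'' fact, injectivity of products, is taken as an explicit hypothesis precisely where it is used.
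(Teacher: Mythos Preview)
Your proposal is correct and follows the standard route for these metaproperties of $\lambda\Pi$-calculus modulo rewriting. The paper itself does not spell out a proof: it simply refers to the theses of Blanqui and Saillard and notes that the arguments there adapt straightforwardly to the present variant with confinement. Your sketch is precisely such an adaptation, so there is nothing to compare --- you have written out what the paper leaves implicit.
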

\begin{proof}
We refer to \cite{frederic-phd, saillard15phd} for detailed proofs --- even if there the definition of the typing system is not exactly the same, the proofs for the variant used here are straightforward adaptions of their proofs.
\end{proof}

\subsection{Defining type theories in Dedukti}
\label{subsec:pts}

\newcommand\Ty{\ctb{Ty}}
\newcommand\Tm{\ctb{Tm}}
\newcommand\U{\ctb{U}}
\newcommand\nTo{\ctb{$\hspace{0.2em}\leadsto\hspace{0.2em}$}}

We briefly review how one can define type theory with Russell-style universes and dependent products \cite{felicissimo:LIPIcs.FSCD.2022.25}\footnote{Other approaches also exists \cite{dowek2007}, however as argued in \cite{felicissimo:LIPIcs.FSCD.2022.25} they lead to less well-behaved encodings.}. Given a set $\mathfrak{S}$ of \textit{sorts}, we start by declaring the following constants:%
\begin{align*}
&\ctb{Ty}_{s} : \Type &&\text{for $s \in \mathfrak{S}$}\\
&\ctb{Tm}_{s} : \ctb{Ty}_{s} \to \Type &&\text{for $s \in \mathfrak{S}$}
\end{align*}

Then, an object type $A$ at sort $s$ is represented as an element of $\Ty_{s}$, and an object term $t$ of type $A$ is represented as an element of $\Tm_{s}~A$, where $s$ is the sort of $A$. That is, we write $A : \Ty_{s}$ to represent the object-level judgment form $A~\textsf{type}_{s}$, and $t : \Tm_{s}~A$ to represent the object-level judgment form $t :_{s} A$. As an example, if we wanted to add natural numbers at sort $s_{0} \in \mathfrak{S}$, we would add constants $\ctb{Nat} : \Ty_{s_{0}}$, $\ctb{0}: \Tm_{s_{0}}~\ctb{Nat}$, $\ctb{S}: \Tm_{s_{0}}~\ctb{Nat} \to \Tm_{s_{0}}~\ctb{Nat}$, and then a constant for its elimination principle along with its corresponding rewrite rules.

We now add universes. To do this, we suppose we are given a functional relation $\mathfrak{A} \subseteq \mathfrak{S}^{2}$, relating a sort to its successor, and then postulate a constant $\ctb{U}_{s}$ in $\Ty_{s'}$ for each $(s,s') \in \mathfrak{A}$ to represent the universe for the sort $s$. The defining property of the universe for $s$ is that its object terms should correspond somehow to object types at sort $s$. One way of achieving this is by postulating a definitional isomorphism $\Tm_{s'}~\ctb{U}_{s} \simeq \ctb{Ty}_{s}$, which defines \textit{Coquand-style universes}~\cite{coquand-style, kaposi2019gluing, altenkirch2019setoid}. But because in \textsc{Dedukti} we have type-level rewrite rules, we can replace this isomorphism with an identification using a rewrite rule, yielding \textit{Russell-style universes}. In this style of type universes, object terms typed by the universe for $s$ become really the same as the object types at sort $s$.
\begin{align*}
  &\ctb{U}_{s} : \ctb{Ty}_{s'}&&\text{for $(s,s') \in \mathfrak{A}$}\\
  &\ctb{Tm}_{s'}~\ctb{U}_{s} \red \ctb{Ty}_{s}&&\text{for $(s,s') \in \mathfrak{A}$}
\end{align*}

\newcommand\nPi{\ctb{$\Pi$}}
\newcommand\nLam{\ctb{$\lambda$}}
\newcommand\nApp{\ctb{$@$}}

Let us now define dependent products. We now suppose we are given a relation $\mathfrak{R} \subseteq \mathfrak{S}^{3}$, which is functional when seen as  $\mathfrak{R}\subseteq\mathfrak{S}^{2}\times \mathfrak{S}$. Then, for each triple $(s, s',s'')\in\mathfrak{R}$, we postulate a constant $\nPi_{s, s'}$ mapping $A : \Ty_{s}$ and $ B : \Tm_{s}~A \to \Ty_{s'}$ to an element of $\Ty_{s''}$. Then we add abstraction $\nLam_{s,s'}$, mapping $t:(x : \Tm_{s}~A) \to \Tm_{s'}~(B~x)$ to an element of $\Tm_{s''}~(\nPi_{s,s'}~A~B)$, and application $\nApp_{s,s'}$, mapping $t : \Tm_{s''}~(\nPi_{s, s'}~A~B)$ and $u : \Tm_{s}~A$ to an element of $\Tm_{s'}~(B~u)$.
\begin{align*}
  &\nPi_{s,s'} : (A : \Ty_{s}) \to (B : \Tm_{s}~A \to \Ty_{s'}) \to \Ty_{s''}&&\text{for $(s,s',s'') \in \mathfrak{R}$}\\
    &\nLam_{s,s'} : (A : \Ty_{s}) \to (B : \Tm_{s}~A \to \Ty_{s'}) \to \\
    &\hspace{1.5em}((x : \Tm_{s}~A) \to \Tm_{s'}~(B~x)) \to \Tm_{s''}~(\nPi_{s,s'}~A~B)&&\text{for $(s,s',s'') \in \mathfrak{R}$}\\
    &\nApp_{s,s'} : (A : \Ty_{s}) \to (B : \Tm_{s}~A \to \Ty_{s'}) \to \\
    &\hspace{1.5em}(t : \Tm_{s''}~(\nPi_{s,s'}~A~B)) \to (u : \Tm_{s}~A) \to \Tm_{s'}~(B~u)&&\text{for $(s,s',s'') \in \mathfrak{R}$}
\end{align*}

Finally, we need to state the associated computation rule of dependent products. The most natural choice would be to take the following rule.
\begin{align*}
&  \nApp_{s,s'}~A~B~(\nLam_{s,s'}~A~B~t)~u \red t~u &&\text{for $(s,s',s'') \in \mathfrak{R}$}
\end{align*}However, this rule is non left-linear, and thus the rewrite system it generates is non-confluent on raw terms \cite{klop2022combinatory}. Instead, the standard solution is to \textit{linearize} \cite{blanqui05mscs, saillard15phd, mellies1996generic} the rule, by replacing the second occurrences of variables $A, B$ by new fresh variables $A', B'$.
\begin{align*}
  &  \nApp_{s,s'}~A~B~(\nLam_{s,s'}~A'~B'~t)~u \red t~u &&\text{for $(s,s',s'') \in \mathfrak{R}$}
\end{align*}
Note that the left-hand side is not well-typed anymore, but this is not a problem. Indeed, the important property we can show is that for all well-typed instances of the left-hand side, $A$ becomes convertible to $A'$ and $B$ becomes convertible to $B'$, which then guarantees that the corresponding instance of the right-hand side is well-typed with the same type as the left one.

This concludes the definition of the theory, which is summarized in Figure \ref{fig:dk-pts}. In the following, we adopt some conventions to make the notation lighter. First, we write $\nPi_{s, s'}  x : A. B$ for $\nPi_{s, s'}~A~(x. B)$, or $A \hspace{0.2em}\ctbmath{\leadsto}_{s,s'}\hspace{0.1em} B$ when $x \not \in \fv(B)$. Moreover, in order to keep examples readable it will be essential to treat some arguments  as implicit: namely, we will write $\Tm~A$ instead of $\Tm_{s}~A$, $\nPi x : A.B$ instead of $\nPi_{s,s'} x : A.B$, $t_{\nApp} u$ instead of $\nApp_{s,s'}~A~B~t~u$ and $\nLam x.t$ instead of $\nLam_{s,s'}~A~B~(x.t)$. Nevertheless, note that this is just an informal notation used in examples, and in actual \Dedukti{} terms all these arguments should be spelled out --- alternatively, one can switch to a framework with support for \textit{erased arguments}, such~as~\cite{felicissimo2023generic}.

\begin{figure}
  \begin{align*}
    &\ctb{Ty}_{s} : \Type &&\text{for $s \in \mathfrak{S}$}\\
    &\ctb{Tm}_{s} : \ctb{Ty}_{s} \to \Type &&\text{for $s \in \mathfrak{S}$}\\
    &\ctb{U}_{s} : \ctb{Ty}_{s'}&&\text{for $(s,s') \in \mathfrak{A}$}\\
    &\ctb{Tm}_{s'}~\ctb{U}_{s} \red \ctb{Ty}_{s}&&\text{for $(s,s') \in \mathfrak{A}$}\\
    &\nPi_{s,s'} : (A : \Ty_{s}) \to (B : \Tm_{s}~A \to \Ty_{s'}) \to \Ty_{s''}&&\text{for $(s,s',s'') \in \mathfrak{R}$}\\
    &\nLam_{s,s'} : (A : \Ty_{s}) \to (B : \Tm_{s}~A \to \Ty_{s'}) \to \\
    &\hspace{1.5em}((x : \Tm_{s}~A) \to \Tm_{s'}~(B~x)) \to \Tm_{s''}~(\nPi_{s,s'}~A~B)&&\text{for $(s,s',s'') \in \mathfrak{R}$}\\
    &\nApp_{s,s'} : (A : \Ty_{s}) \to (B : \Tm_{s}~A \to \Ty_{s'}) \to \\
    &\hspace{1.5em}(t : \Tm_{s''}~(\nPi_{s,s'}~A~B)) \to (u : \Tm_{s}~A) \to \Tm_{s'}~(B~u)&&\text{for $(s,s',s'') \in \mathfrak{R}$}\\
    &  \nApp_{s,s'}~A~B~(\nLam_{s,s'}~A'~B'~t)~u \red t~u &&\text{for $(s,s',s'') \in \mathfrak{R}$}
  \end{align*}
  \caption{\Dedukti{} theory defined by specification $(\mathfrak{S}, \mathfrak{A}, \mathfrak{R})$}
  \label{fig:dk-pts}
\end{figure}

\begin{rem}\label{rem:dk-pts} In the Pure Type Systems (PTS) \cite{pts} literature, the triple $(\mathfrak{S}, \mathfrak{A}, \mathfrak{R})$ is known as a (functional) \textit{PTS specification}, and each such specification defines a PTS. The main result of \cite{felicissimo:LIPIcs.FSCD.2022.25} is that the theory in Figure \ref{fig:dk-pts} defines an adequate encoding of the associated PTS in \Dedukti{}.
\end{rem}

\section{An informal look at predicativization}
\label{sec:firstlook}

In this informal section we present the problem of proof predicativization and discuss the challenges that arise through the use of examples. Even though the examples might be simplistic, they showcase real problems we found during our first predicativization attempt of Fermat's little theorem library in HOL --- some of them being already noted in~\cite{delort:hal-02985530}.

We first start by defining the impredicative theory $\mathbb{I}$ and the predicative theory $\mathbb{P}$, which will serve respectively as source and target of our proof transformation. They are defined by instantiating the theory of Figure \ref{fig:dk-pts} with the following specifications --- following Remark~\ref{rem:dk-pts}, they can equivalently be seen as the Pure Type Systems defined by these specifications.
\begin{align*}
  &\mathfrak{S}_{\mathbb{I}} := \{ \Omega, \square\} & &\mathfrak{S}_{\mathbb{P}} := \mathbb{N}\\
  &\mathfrak{A}_{\mathbb{I}} := \{ (\Omega,  \square)\} & & \mathfrak{A}_{\mathbb{P}} := \{(n, n+1) \mid n \in \mathbb{N}\}\\
  &\mathfrak{R}_{\mathbb{I}} := \{ (\Omega,\Omega,\Omega),(\square,\Omega,\Omega), (\square, \square, \square)\} & & \mathfrak{R}_{\mathbb{P}} := \{(n,m, \textsf{max}\{n,m\}) \mid n, m \in \mathbb{N}\}
\end{align*}

Note that in the theory $\mathbb{I}$ we have $(\square,\Omega,\Omega)\in\mathfrak{R}_{\mathbb{I}}$, and thus for $\Gamma \vdash A : \Ty_{\square}$ and $\Gamma, x : \Tm~A \vdash B : \Ty_{\Omega}$ we have $\Gamma \vdash \nPi x :A.B : \Ty_{\Omega}$. Therefore, the sort $\Omega$ is closed under dependent products indexed over types in $\square$, a larger sort, so $\mathbb{I}$ is indeed an impredicative theory. Finally, we note that $\mathbb{P}$ is a subtheory of the one implemented in \textsc{Agda}, whereas $\mathbb{I}$ is a subtheory of the ones implemented in \textsc{Coq}, \textsc{Matita}, \textsc{Isabelle}, etc, justifying why they are of interest.

In the following, let $\Phi$ be a signature without confined constant declarations. We say that $\Phi$ is well-formed in a theory $\mathbb{T}$ when the theory $(\Sigma',\mathcal{R}_{\mathbb{T}},\mathcal{E}_{\mathbb{T}})$ is well-formed, where $\Sigma':=\Sigma_{\mathbb{T}},\Phi$. We then call $\Phi$ a \textit{local} signature, in contrast to $\Sigma_{\mathbb{T}}$ which is \textit{global}. We  write names of constants in the local signature in \textsf{sans serif black} in order to distinguish them from names in the global signature, which are still written in \ctb{sans serif blue}.

Then, the problem of proof predicativization consists in defining a transformation such that, given a local signature $ \Phi $ well-formed in $\mathbb{I}$, we obtain a local signature $ \Phi' $ well-formed in $\mathbb{P}$  --- a suitable transformation should of course preserve the structure of the statements in $\Phi$, however we leave the precise relationship between $\Phi$ and $\Phi'$ vague at this point. Stated informally, we would like to translate constants declarations $c : A$ (which represent axioms) and constant definitions $c:A:=t$ (which also represent proofs) from $\mathbb{I}$ to $\mathbb{P}$ --- confined constant declarations $f: \Delta\to A$ cannot occur in $\Phi$ and will be of no interest here. Note in particular that such a transformation is not applied to a single term but to a sequence of constants and definitions, which can be related by dependency. This dependency, as we shall see, turns out to be a major issue for the translation.

Now that our basic notions are explained, let us  try to predicativize proofs. For our first step,  consider a very simple development showing that for every object term $ A $ in $ \ctb{U}_{\Omega} $ we can build an object term in $ A \nTo A $ --- this is actually just the polymorphic identity function at sort  $ \Omega $.  Here we adopt an \textsc{Agda}-like syntax to display entries of the local signature.
\begin{flalign*}
  &\textsf{id} : \Tm~(\nPi A : \U_{\Omega}. A \nTo A)\\
  &\textsf{id} := \nLam A. \nLam x. x
\end{flalign*}

To translate this simple development, the first idea that comes to mind is to define a mapping on sorts: the sort $ \Omega $ is mapped to $ 0 $ and the universe $ \square $ is mapped to $ 1 $. If this mapping defined a \textit{specification morphism}\footnote{That is, if $(s, s') \in \mathfrak{A}_{\mathbb{I}}$ implied $(\phi(s),\phi(s'))  \in \mathfrak{A}_{\mathbb
  P}$ and $(s,s',s'') \in \mathfrak{R}_{\mathbb{I}}$ implied $(\phi(s),\phi(s'),\phi(s'')) \in \mathfrak{R}_{\mathbb{P}}$, where $\phi : \mathfrak{S}_{\mathbb{I}} \to \mathfrak{S}_{\mathbb{P}}$ is the sort mapping.} then this transformation would always produce a valid definition in $\mathbb{P}$~\cite[Lemma 4.2.6]{geuvers1993logics}. Unfortunately, it is easy to check that it does not define a specification morphism (worse, no function $\mathfrak{S}_{\mathbb{I}} \to \mathfrak{S}_{\mathbb{P}}$ defines a specification morphism). Nevertheless, this does not mean that it cannot produce something well-typed in $\mathbb{P}$ in \textit{some} cases. For instance, by applying it to \textsf{id} we get the following entry,\footnote{Modulo the recomputation of some omitted sort annotations.} which is actually well-typed in $\mathbb{P}$.
\begin{flalign*}
  &\textsf{id} : \Tm~(\nPi A : \U_{0}. A \nTo A)\\
  &\textsf{id} := \nLam A. \nLam x. x
\end{flalign*}

This naïve approach however quickly fails when considering other cases. For instance, suppose now that one adds the following definition.
\begin{align*}
  &\textsf{id-to-id} : \Tm~((\nPi A : \U_{\Omega}. A \nTo A) \nTo (\nPi A : \U_{\Omega}. A \nTo A))\\
  &\textsf{id-to-id} := \textsf{id}_{\nApp}(\nPi A : \U_{\Omega}. A \nTo A)
\end{align*}

If we try to perform the same syntactic translation as before, we get the following result.
\begin{align*}
  &\textsf{id-to-id} : \Tm~((\nPi A : \U_{0}. A \nTo A) \nTo (\nPi A : \U_{0}. A \nTo A))\\
  &\textsf{id-to-id} := \textsf{id}_{\nApp}(\nPi A : \U_{0}. A \nTo A)
\end{align*}

However, one can verify that this term is not well typed. Indeed, in the original term one quantifies over all elements of $ \U_{\Omega} $ in  $\nPi A : \U_{\Omega}. A \nTo A $, and because of impredicativity this object term stays at $ \U_{\Omega} $. However, in $\mathbb{P}$ quantifying over all elements of the universe $ \U_{0} $ in $ \nPi A : \U_{0}. A \nTo A $ lifts its overall object type to $ \U_{1} $. As $ \textsf{id} $ expects something of object type $ \U_{0} $, then $  \textsf{id}_{\nApp}(\nPi A : \U_{0}. A \nTo A)$ is not well-typed.

The takeaway lesson from this first try is that impredicativity introduces a kind of \textit{typical ambiguity}, as it allows us to put in a single universe $ \U_{\Omega} $ types which, in a predicative setting, would have to be stratified and placed in larger universes. Therefore, we should not translate every occurrence of $ \Omega $ as $ 0 $ naively as we did, but try to compute for each occurrence of $ \Omega $ some natural number $ n $ such that replacing it by $ n $ would produce a valid term in $\mathbb{P}$. In other words, we should erase all sort information and then \textit{elaborate} it  into a well-typed term in $\mathbb{P}$.

Thankfully, performing such kind of transformations is exactly the goal of the tool \textsc{Universo}~\cite{thire}. To understand how it works, let us come back to the previous example. \Universo{} starts here by replacing each sort by a fresh metavariable representing a natural number.
\begin{align*}
  &\textsf{id} : \Tm~(\nPi A : \U_{i_{1}}. A \nTo A)\\
  &\textsf{id} := \nLam A. \nLam x. x\\
  &\textsf{id-to-id} : \Tm~((\nPi A : \U_{i_{2}}. A \nTo A) \nTo (\nPi A : \U_{i_{3}}. A \nTo A))\\
  &\textsf{id-to-id} := \textsf{id}_{\nApp}(\nPi A : \U_{i_{4}}. A \nTo A)
\end{align*}

Then, in the following step \Universo{} tries to elaborate the term into a well-typed one in $\mathbb{P}$. To do so, it first tries to  typecheck it and generates constraints in the process. These constraints are then given to a SMT solver, which is used to compute for each metavariable $ i $ a natural number so that the local signature is valid in $\mathbb{P}$. For instance, applying \Universo{} to our previous example would produce the following local signature, which is indeed valid with respect to $\mathbb{P}$.
\begin{align*}
  &\textsf{id} : \Tm~(\nPi A : \U_{1}. A \nTo A)\\
  &\textsf{id} := \nLam A. \nLam x. x\\
  &\textsf{id-to-id} : \Tm~((\nPi A : \U_{0}. A \nTo A) \nTo (\nPi A : \U_{0}. A \nTo A))\\
  &\textsf{id-to-id} := \textsf{id}_{\nApp}(\nPi A : \U_{0}. A \nTo A)
\end{align*}

By using \textsc{Universo} it is possible to go much further than with the naïve method shown before. Still, this approach also fails when being employed with real libraries. To see the reason, consider the following minimum example, in which one uses $\textsf{id}$ twice to build another element of the same type.
\begin{align*}
  &\textsf{id}' : \Tm~(\nPi A : \U_{\Omega}. A \nTo A)\\
  &\textsf{id}' := \textsf{id}_{\nApp}(\nPi A : \U_{\Omega}. A \nTo A)_{\nApp} \textsf{id}
\end{align*}

If we repeat the same procedure as before, we get the following entries, which when type checked generate unsolvable constraints.
\begin{align*}
  &\textsf{id} : \Tm~(\nPi A : \U_{i_{1}}. A \nTo A)\\
  &\textsf{id} := \nLam A. \nLam x. x\\
  &\textsf{id}' : \Tm~(\nPi A : \U_{i_{2}}. A \nTo A)\\
  &\textsf{id}' := \textsf{id}_{\nApp}(\nPi A : \U_{i_{3}}. A \nTo A)_{\nApp} \textsf{id}
\end{align*}

The reason is that the application $ \textsf{id}_{\nApp}(\nPi A : \U_{i_{3}}. A \nTo A)_{\nApp} \textsf{id}$  forces $ i_{1} $ to be both $ i_{3} $ and $ i_{3} +1  $, which is of course impossible. Therefore, the takeaway lesson from this second try is that impredicativity does not only hide the fact that types need to be stratified, but also the fact that they need to be usable at multiple levels of this stratification. Indeed, in our example we would like to use $\textsf{id}$ at $\nPi A : \U_{i_{3}}. A \nTo A$ and  $\nPi A : \U_{i_{3}+1}. A \nTo A$. In practice, when trying to translate libraries using \textsc{Universo} we found that at very early stages a translated proof or object was already needed at multiple universes at the same time, causing the translation to fail.

Therefore, in order to properly compensate for the lack of impredicativity, we should not translate entries by fixing once and for all their universes, but instead we should let them vary by using \textit{universe polymorphism}~\cite{typechecking-with-universes,coq}. This feature, present in some type theories (and also in the one of \textsc{Agda}~ \cite{agda}), allows defining terms containing universe variables, which can later be instantiated at various concrete universes.

Our translation will then work by first computing for each definition or declaration its set of constraints. However, instead of assigning concrete values to metavariables, we perform \textit{unification} which allows us to solve constraints in a symbolic way. The result will then be a universe-polymorphic term, which will be usable at multiple universes when translating the next entries. In order to define this formally, we first start in the next section by refining the target type theory $\mathbb{P}$ with universe polymorphism.

\section{A universe-polymorphic predicative theory}
\label{sec:upp}

\newcommand\nLvl{\ctb{Lvl}}
\newcommand\nZero{\ctb{$0$}}
\newcommand\nSucc{\ctb{S}}
\newcommand\nMax{\ctb{$\hspace{0.2em}\sqcup\hspace{0.2em}$}}

In this section we define $\mathbb{UPP}$, a theory which refines $\mathbb{P}$ by internalizing sort annotations and allowing for prenex universe polymorphism \cite{typechecking-with-universes,guillaume}. This is in particular a subsystem of the one underlying the \textsc{Agda} proof assistant \cite{agda}.

The main change with respect to $\mathbb{P}$ is that, instead of indexing constants externally, we index them inside the framework \cite{assaf,sterling2019algebraic}. To do this, we first introduce in \Dedukti{} a syntax for  \textit{universe levels} (which is just the terminology used in the literature for predicative sorts, i.e. $\mathbb{N}$), using the following declarations. Note that the constants $\nZero, \nSucc, \nMax$ are declared as confined, which will be needed later to show the confluence of the theory.
\begin{align*}
  &\nLvl : \Type &&\nSucc : (i:\nLvl) \to \nLvl\hspace{1em}\text{(confined)}\\
  &\nZero : () \to \nLvl\hspace{1em}\text{(confined)} &&\nMax : (i:\nLvl, i':\nLvl) \to \nLvl\hspace{1em}\text{(confined)}
\end{align*}

In the following, we write $\nMax$ in infix notation, $\nSucc$ in curryfied notation, and consider $\nMax$ as having a lower precedence than  $\nSucc$  --- for instance, $\nSucc~i\nMax \nSucc~j$ should be parsed as $\nMax (\nSucc(i), (\nSucc(j)))$. These constant declarations yield the following grammar of confined terms, which from now on we refer to as \textit{levels}.\[
  l, l' ::= i \mid \nZero \mid \nSucc~l \mid l \nMax l'
\]

The definitions of Figure \ref{fig:dk-pts} are then replaced by the following ones. Note that the two rewrite rules are presented in linearized form, in order for them to be left-linear.
\begin{align*}
  &\ctb{Ty} : (l : \nLvl) \to\Type\\
    &\ctb{Tm} : (l : \nLvl)  \to \ctb{Ty}~l \to \Type\\
    &\ctb{U} : (l : \nLvl) \to \ctb{Ty}~(\nSucc~l)\\
    &\ctb{Tm}~l'~(\ctb{U}~l) \red \Ty~l\\
    &\nPi : (l~l' : \nLvl) \to (A : \Ty~l) \to (B : \Tm~l~A \to \Ty~l') \to \Ty~(l \nMax l')\\
    &\nLam : (l~l' : \nLvl) \to(A : \Ty~l) \to (B : \Tm~l~A \to \Ty~l') \to \\
    &\hspace{1.5em}((x : \Tm~l~A) \to \Tm~l'~(B~x)) \to \Tm~(l \nMax l')~(\nPi~l~l'~A~B)\\
    &\nApp :  (l~l' : \nLvl) \to(A : \Ty~l) \to (B : \Tm~l~A \to \Ty~l') \to \\
    &\hspace{1.5em}(t : \Tm~(l \nMax l')~(\nPi~l~l'~A~B)) \to (u : \Tm~l~A) \to \Tm~l'~(B~u)\\
    &  \nApp~l~l'~A~B~(\nLam~l''~l'''~A'~B'~t)~u \red t~u
\end{align*}

In the following, we adopt a subscript notation for levels and write $\Ty_{l}$, $\Tm_{l}$, $\ctb{U}_{l}$, $\nPi_{l, l'}$, $\nLam_{l,l'}$ and $\nApp_{l, l'}$ to improve clarity. We also continue to write $\nPi_{l, l'} x : A. B$ for $\nPi_{l,l'}~A~(x.B)$ or $A \hspace{0.2em}\ctbmath{\leadsto}_{l,l'}\hspace{0.1em} B$ when $x \not \in \fv(B)$. Finally, in order to keep examples readable we also reuse our convention for implicit arguments and write $\Tm~A$ for $\Tm_{l}~A$,  $\nPi x : A . B$ for $\nPi_{l,l'} x :A.B$,  $\nLam x.t$ for $\nLam_{l,l'}~A~B~(x.t)$, and $t_{\nApp} u$ for $\nApp_{l, l'}~A~B~t~u$.

Universe polymorphism can now be represented directly with the use of the framework's function type \cite{assaf}. Indeed, if a definition contains free level variables, it can be made universe-polymorphic by abstracting over such variables. The following example illustrates~this.

\begin{exa}
  The universe-polymorphic identity function  is given  by
  \begin{align*}
    &\ctb{id} : (i : \nLvl) \to \Tm~(\nPi A : \ctb{U}_{i}. A \nTo A)\\
    &\ctb{id} := i. \nLam A. \nLam a. a
  \end{align*}
  This then allows to use $ \textsf{id} $ at any universe level: for instance, we can obtain the polymorphic identity function at the level $ \nZero$ with the application $ \textsf{id}~\nZero$, which has type $ \Tm~(\nPi A : \ctb{U}_{\nZero}. A \nTo A)$.
\end{exa}

\begin{rem}
Note that unlike some other proposals \cite{bezem2023type} there is no object-level operation for universe level abstraction, which is instead handled by the framework function type. Therefore, universe-polymorphic definitions are best understood as schemes rather than actual terms in the object logic.%

\end{rem}

In order to finish the definition of the theory we need to specify the definitional equality satisfied by levels, which is the one generated by the following equations~\cite{agda}.\footnote{Some authors consider a version of this theory without the neutral element $\nZero$~\cite{bezem2023type}; here we stick to the variant  used in \Agda{}, which includes the $\nZero$. } This then concludes the definition of the theory $\mathbb{UPP}$.
\begin{align*}
  i_{1}\nMax (i_{2} \nMax i_{3}) &\simeq (i_{1}\nMax i_{2}) \nMax i_{3} &\nSucc~(i_{1} \nMax i_{2}) &\simeq \nSucc~i_{1} \nMax \nSucc~i_{2} &i \nMax \nZero &\simeq i\\
  i_{1}\nMax i_{2}&\simeq i_{2} \nMax i_{1} &i \nMax \nSucc~i &\simeq \nSucc~i &i \nMax i &\simeq i
\end{align*}

As we will see later with Proposition~\ref{prop:semanticlvl}, the definition of $\simeq$ ensures us that two levels are convertible exactly when they are arithmetically equivalent, allowing us for instance to exchange $\nSucc~i \nMax i \nMax \nZero$ and $\nSucc~i$.

\subsection{Metatheory of $\mathbb{UPP}$} We now look at the metatheory of $\mathbb{UPP}$.

\begin{prop}
The theory $\mathbb{UPP}$ is well-formed.
\end{prop}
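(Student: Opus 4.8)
The plan is to unfold the definition of well-formedness and verify it directly. Unlike $\mathbb{P}$, which is an instance of the specification-parametrised construction of Figure~\ref{fig:dk-pts}, the theory $\mathbb{UPP}$ internalises sorts as levels and is not such an instance, so no generic argument applies and we check each signature entry by hand. Concretely, $\mathbb{UPP}$ has exactly the ten entries $\nLvl$, $\nZero$, $\nSucc$, $\nMax$, $\ctb{Ty}$, $\ctb{Tm}$, $\ctb{U}$, $\nPi$, $\nLam$, $\nApp$, and it suffices to produce, for each of them in this order, a typing derivation witnessing its validity in the theory whose signature is the preceding prefix (the restrictions $\mathcal{R}'$, $\mathcal{E}'$ of the rewrite system and equations to that prefix being, as explained below, irrelevant to these derivations).

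For the level layer everything is immediate. The entry $\nLvl : \Type$ is valid since $\cdot \vdash \Type : \Kind$ by \textsc{EmptyCtx} and \textsc{Sort}. As $\nZero$, $\nSucc$, $\nMax$ are confined constants with codomain $\nLvl$, their validity amounts respectively to $\cdot \vdash \nLvl : \Type$, to $i : \nLvl \vdash \nLvl : \Type$ and to $i : \nLvl, i' : \nLvl \vdash \nLvl : \Type$, which all follow from \textsc{Cons} (the entry $\nLvl : \Type$ lying in the prefix), \textsc{ExtCtxC} and weakening.

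For the remaining entries one builds the derivations bottom-up. The outer binders $(l : \nLvl) \to \cdots$ are introduced with \textsc{ArrowC} and the binders over the ordinary variables $A$, $B$, $t$, $u$ with \textsc{Arrow}; applications of $\ctb{Ty}$, $\ctb{Tm}$ or $\nPi$ to a level argument use \textsc{AppC}, while applications to an ordinary argument use \textsc{App}; and every level expression such as $\nSucc~l$ or $l \nMax l'$ is typed as a confined term via \textsc{ConsC} from \textsc{VarC}. Checking $\ctb{Ty}$, $\ctb{Tm}$, $\ctb{U}$ is direct. For $\nPi$ one shows successively that $\ctb{Ty}~l : \Type$, that $\ctb{Tm}~l~A \to \ctb{Ty}~l' : \Type$ and that $\ctb{Ty}~(l \nMax l') : \Type$, whence the whole product is a type by \textsc{ArrowC}/\textsc{Arrow}; for $\nLam$ and $\nApp$ the only new points are typing $(x : \ctb{Tm}~l~A) \to \ctb{Tm}~l'~(B~x)$ and typing $\ctb{Tm}~(l \nMax l')~(\nPi~l~l'~A~B)$, the latter using that $\nPi~l~l'~A~B : \ctb{Ty}~(l \nMax l')$ is read off directly from the already-declared type of $\nPi$. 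A useful observation throughout is that \textsc{Conv} and \textsc{ConvC} are never invoked: in every application the argument has syntactically the expected domain type. Consequently the restricted rewrite systems and equation sets attached to the prefixes play no role, and their scoping side-conditions hold trivially anyway.

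The only point that needs attention is the bookkeeping forced by confinement: one must consistently apply the confined-variable rules (\textsc{ArrowC}, \textsc{AppC}, \textsc{VarC}, \textsc{ConsC}) to the level arguments and the ordinary rules (\textsc{Arrow}, \textsc{App}, \textsc{Var}) to the $A$, $B$, $t$, $u$ arguments, and ensure that $\nZero$, $\nSucc$, $\nMax$ are only ever applied to confined terms. There is no deeper obstacle: in contrast to confluence or subject reduction (handled separately), well-formedness imposes no global condition, so the proof reduces to these finitely many routine derivations.
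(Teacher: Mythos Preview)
Your proposal is correct and takes essentially the same approach as the paper: the paper's proof simply states that well-formedness ``can be easily verified manually, or with the help of \textsc{Lambdapi}'', and your plan spells out precisely that manual verification. Your additional observations---that \textsc{Conv}/\textsc{ConvC} are never needed and hence the restricted $\mathcal{R}'$, $\mathcal{E}'$ play no role---are accurate and make explicit what the paper leaves implicit.
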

\begin{proof}
Can be easily verified manually, or with the help of \textsc{Lambdapi}~\cite{lambdapi}.
\end{proof}

In the following, we consider $\mathbb{UPP}$ extended with an arbitrary local signature $\Phi$ well-formed in $\mathbb{UPP}$. Therefore, the following  $\delta$ rules are the ones of $\Phi$.

\begin{prop}\label{prop:confluence}
The rewrite system $\mathcal{R}_{\mathbb{UPP}}$ is confluent together with the rules $\beta$, $\beta_{c}$ and $\delta$.
\end{prop}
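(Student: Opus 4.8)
The plan is to prove confluence of $\red$ (that is, of $\red_{\beta\beta_{c}\mathcal{R}\delta}$ with $\mathcal{R}=\mathcal{R}_{\mathbb{UPP}}$) by showing that this system is \emph{orthogonal} --- left-linear and critical-pair-free --- and then to add back the level equations $\simeq_{\mathcal{E}_{\mathbb{UPP}}}$ and recover the Church-Rosser-modulo-$\simeq$ property that conversion ultimately relies on, using the \emph{confinement} technique of \cite{assaf:hal-01515505}. Orthogonality is the natural route here because $\beta$, $\beta_{c}$ and $\delta$ fail to terminate on raw terms, so Newman's lemma is not available. The reason confinement enters the picture at all is that $\mathcal{E}_{\mathbb{UPP}}$ contains non-left-linear equations such as $i\nMax i\simeq i$ and $i\nMax\nSucc~i\simeq\nSucc~i$, and it is precisely to keep these under control --- as flagged in Remark~\ref{rem:dk-comparison} --- that $\nZero$, $\nSucc$ and $\nMax$ were declared confined.

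First I would check left-linearity, which holds for every rule: $\beta$, $\beta_{c}$ and each $\delta$-rule $c\red t$ are left-linear, the rule $\ctb{Tm}~l'~(\ctb{U}~l)\red\Ty~l$ is, and the $\nApp/\nLam$ rule $\nApp~l~l'~A~B~(\nLam~l''~l'''~A'~B'~t)~u\red t~u$ is left-linear exactly because it was stated in linearized form with the fresh variables $l'',l''',A',B'$; this step is essential, since by Klop's counterexample \cite{klop2022combinatory} the non-linearized variant already fails to be confluent together with $\beta$. Next I would check that there are no critical pairs. The left-hand sides have pairwise distinct heads ($\beta$ and $\beta_{c}$ an application of an abstraction, the first rule of $\mathcal{R}_{\mathbb{UPP}}$ the constant $\ctb{Tm}$, the second $\nApp$, and every $\delta$-rule a distinct constant of $\Phi$ --- none of them $\ctb{Tm}$, $\ctb{U}$, $\nApp$ or $\nLam$, since $\Phi$ is a valid signature over $\mathbb{UPP}$); moreover no left-hand side has a proper non-variable subterm unifiable with another left-hand side --- the only candidate, the subterm $\nLam~l''~l'''~A'~B'~t$ inside the $\nApp/\nLam$ rule, is headed by $\nLam$ and so overlaps nothing. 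Hence the system is orthogonal, and confluence of $\red$ follows from the standard confluence theorem for orthogonal higher-order rewrite systems \cite{klop2022combinatory}.

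The hard part is the last step: reinstating $\simeq_{\mathcal{E}_{\mathbb{UPP}}}$ and upgrading to Church-Rosser modulo $\simeq$. The key observation is that the confined fragment is \emph{rewrite-inert}: a level is built only from $i$, $\nZero$, $\nSucc$ and $\nMax$, none of which heads a rewrite rule, and since a level contains neither an abstraction nor an application no $\beta$- or $\beta_{c}$-step can create a redex inside it; so a confined subterm is never rewritten, only replaced up to $\simeq$ or duplicated and erased wholesale by an outer $\beta_{c}$-step. This is exactly the situation addressed by the confinement-based Church-Rosser-modulo criterion of \cite{assaf:hal-01515505}, so what is left is to check that its hypotheses hold here --- essentially the orthogonality just established together with the fact that $\simeq_{\mathcal{E}_{\mathbb{UPP}}}$ is a congruence on the confined fragment, where the arithmetical characterization of level equality from Proposition~\ref{prop:semanticlvl} is available should a coherence argument need to decide it. I expect the main obstacle to be this coherence bookkeeping rather than the rewriting combinatorics: one must make sure that whenever a $\beta_{c}$-step duplicates a level $l$ with $l\simeq l'$, the resulting copies can be re-synchronised by further $\simeq$-steps --- which is what confinement guarantees, and what the conference version instead enforced by an ad hoc restriction on substitutions.
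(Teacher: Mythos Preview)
Your first two paragraphs correctly prove the proposition and match the paper's approach: the paper's proof is the single line ``Follows from the fact that the rewrite rules define an orthogonal combinatory rewrite system'', and your detailed verification of left-linearity and absence of critical pairs is precisely what underlies that claim.

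Your third paragraph, however, overshoots the statement. Proposition~\ref{prop:confluence} asks only for confluence of $\red_{\beta\beta_{c}\mathcal{R}\delta}$ \emph{without} the level equations; upgrading to Church-Rosser modulo $\simeq_{\mathcal{E}_{\mathbb{UPP}}}$ is the content of a separate corollary, derived from Propositions~\ref{prop:confluence}, \ref{prop:sim} and~\ref{prop:abstract-cr}. More to the point, the paper does \emph{not} reach Church-Rosser modulo by invoking a confinement-based criterion from \cite{assaf:hal-01515505} as a black box. Instead it proves directly that $\simeq$ is a simulation with respect to $\red$ (Proposition~\ref{prop:sim}) and then combines this with plain confluence via an elementary abstract-rewriting lemma (Proposition~\ref{prop:abstract-cr}). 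Confinement enters only as the mechanism that makes the simulation lemma go through --- by ensuring that the non-linear equations of $\mathcal{E}_{\mathbb{UPP}}$ can never apply at a position containing a $\red$-redex (Remark~\ref{rem:sim}) --- not as an imported Church-Rosser-modulo theorem. So your intuition about \emph{why} confinement is needed is exactly right, but the route the paper takes for the modulo part is more self-contained than the one you sketch, and in any case lies outside the scope of the present proposition.
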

\begin{proof}
Follows from the fact that the rewrite rules define an orthogonal combinatory rewrite system \cite{CRS}.
\end{proof}

The following basic property is similar to \cite[Lemma 4.1.6]{UPTS} and shows that $\red$ and $\simeq$ interact well.

\begin{prop}\label{prop:sim}
  The relation $\simeq$ is a simulation with respect to $\red$. Diagrammatically,
\[\begin{tikzcd}
	t & u \\
	{\exists t'} & {u'}
	\arrow[from=1-2, to=2-2]
	\arrow[dashed, from=1-1, to=2-1]
	\arrow["\simeq"{description}, draw=none, from=2-1, to=2-2]
	\arrow["\simeq"{description}, draw=none, from=1-1, to=1-2]
\end{tikzcd}\]
\end{prop}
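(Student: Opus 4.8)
The plan is to reduce the statement to a one-step-to-one-step diagram and then analyse how the $\simeq$-step and the $\red$-step overlap. I would let $\sim$ denote the one-step relation obtained by closing $\mathcal{E}$ and its symmetric variant under context and substitution, so that $\simeq$ is the reflexive-transitive closure of $\sim$, and first establish: if $t \sim u$ and $u \red u'$ in one step, then there is $t'$ with $t \red t'$ in one step and $t' \simeq u'$. Granting this, the proposition follows by a short induction on the length of a $\sim$-chain $u = w_{0} \sim w_{1} \sim \dots \sim w_{n} = t$ obtained from $t \simeq u$ using symmetry of $\simeq$: transporting $u \red u'$ across each step $w_{k} \sim w_{k+1}$ by the one-step diagram (and symmetry of $\sim$) gives $w_{k} \red w_{k}'$ with $w_{k}' \simeq u'$ for every $k$, and $k = n$ yields the required $t'$.

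For the one-step diagram, the two structural facts to record are the following. \textbf{(i)} A $\sim$-step only rewrites a \emph{level} (confined) subterm of a term, because all equations of $\mathcal{E}_{\mathbb{UPP}}$ are between confined terms; hence $t$ and $u$ have the same skeleton and differ only at one level-subterm occurrence. \textbf{(ii)} Levels are $\red$-normal: a level is a first-order term over $\nZero$, $\nSucc$, $\nMax$, so it contains no $\beta$- or $\beta_{c}$-redex, no $\delta$-redex (the confined constants carry no definition), and no redex of $\mathcal{R}_{\mathbb{UPP}}$ (whose left-hand sides have head $\Tm$ or $\nApp$). Moreover, since every rule of $\mathcal{R}_{\mathbb{UPP}}$ is left-linear and neither these rules nor $\beta$, $\beta_{c}$, $\delta$ constrain the internal structure of the level subterms of their redexes, fact (i) implies that the redex contracted in $u$ sits at a position of the common skeleton and that $t$ carries the matching redex, firable by the same rule.

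The argument then finishes with a routine case split on the position of the $\sim$-modified occurrence relative to the contracted redex. Because levels are $\red$-normal, this occurrence cannot contain the redex, so it is either disjoint from it --- in which case contracting the matching redex in $t$ leaves the modification untouched and gives $t' \sim u'$ --- or it lies strictly inside a subterm of the redex. If that subterm survives the contraction in at most one copy (for instance, inside the body $t_{1}$ of a $\beta$-redex $(x.t_{1})\,u_{1}$, or inside an argument a rewrite rule reuses at most once), we again obtain $t' \simeq u'$, with $t' = u'$ when the subterm is erased. The only expanding case is when the contraction duplicates that subterm --- typically the occurrence sits in the argument $u_{1}$ of a $\beta$-redex $(x.t_{1})\,u_{1}$ with $x$ occurring $m$ times in $t_{1}$, or likewise for $\beta_{c}$ --- and then $t'$ and $u'$ differ at $m$ positions, each by a single $\sim$-step, so $t' \simeq u'$ via an $m$-step chain. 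This duplication case is the only one requiring a little care; it rests precisely on $\simeq$ being a congruence closed under substitution, the same property that yields $t_{1}[x \mapsto u_{1}] \simeq t_{1}'[x \mapsto u_{1}']$ from $t_{1} \simeq t_{1}'$ and $u_{1} \simeq u_{1}'$. The main --- and only minor --- obstacle is thus not any individual step but keeping the position bookkeeping precise enough that the skeleton-preservation and redex-stability claims of facts (i) and (ii) are unambiguous; once those are stated carefully, the rest is immediate.
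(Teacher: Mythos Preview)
Your argument is correct, but the paper proceeds more directly. Instead of decomposing $\simeq$ into a one-step relation $\sim$ and doing a position-overlap analysis, the paper inducts on the rewrite context of $u \red u'$ while working with the full equivalence $\simeq$ throughout. The point is that confinement already gives you a structural decomposition of $\simeq$ at every node, not just at one leaf: since the equations of $\mathcal{E}_{\mathbb{UPP}}$ are between confined terms, $t \simeq u$ implies that $t$ and $u$ share the same outermost (non-confined) constructor with componentwise $\simeq$-related immediate subterms. Thus when $u = (x.v)\,w$ one has $t = (x.v')\,w'$ with $v \simeq v'$ and $w \simeq w'$, and the $\beta$-case is just $t \red v'[x\mapsto w'] \simeq v[x\mapsto w]$ by substitution stability of $\simeq$; similarly for $\beta_{c}$, the two $\mathcal{R}_{\mathbb{UPP}}$ rules, and $\delta$ (where $u=c$ forces $t=u$). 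The induction steps along the context are then trivial congruence arguments.

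What this buys: the paper never needs your one-step chain, never tracks positions, and never worries about duplication --- the duplication case that you single out is absorbed into the single appeal to substitution stability of the full $\simeq$. Your route is the generic commutation-lemma machinery one would use for arbitrary equational theories; it works here, but the paper's route exploits confinement more sharply and is correspondingly shorter.
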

\begin{proof}
  By induction on the rewrite context of $u \red u'$. The induction steps are easy, we only show the base cases.

\begin{enumerate}
  \item If $u \red u'$ with a $\delta$ rule, then $u $ is a non-confined constant, so $u = t$ and $t \red u'$.
  \item If $u = \Tm_{l_{1}}~\U_{l_{2}} \red \Ty_{l_{2}} =u'$, then we have $t = \Tm_{l'_{1}}~\U_{l'_{2}}$ with $l_{k} \simeq l'_{k}$ for $k=1,2$. Therefore, $t \red \Ty_{l_{2}'} \simeq \Ty_{l_{2}}=u'$.
  \item If $u = \nApp_{l_{1},l_{2}}~A_{1}~B_{1}~(\nLam_{l_{3}, l_{4}}~A_{2}~B_{2}~v)~w \red v~w = u'$, then  $t = \nApp_{l'_{1},l'_{2}}~A'_{1}~B'_{1}~(\nLam_{l'_{3}, l'_{4}}~A'_{2}~B'_{2}~v')~w'$ with $v \simeq v'$, $w \simeq w'$ and other relations that we will not need. Therefore, $t \red v'~w' \simeq v~w = u'$.
  \item If $u = (x. v) w \red v[x\mapsto w]$, then $t = (x.v')w'$ with $v \simeq v'$ and $w \simeq w'$. Therefore, $t \red v'[x\mapsto w']$ and $v'[x\mapsto w'] \simeq v[x\mapsto w]$ follows by stability under substitution.
  \item If $u = (i. v) l \red v[i\mapsto l]$, then $t = (i.v')l'$ with $v \simeq v'$ and $l \simeq l'$. Therefore, $t \red v'[i\mapsto l']$ and $v'[i\mapsto l'] \simeq v[i\mapsto l]$ follows by stability under substitution. \qedhere
        \end{enumerate}
      \end{proof}
\begin{rem}\label{rem:sim}
  We remark that the use of confinement was essential in the previous proof. Indeed, had $\nMax, \nSucc, \nZero$ been declared as regular constants, their associated equations would also have to be declared with regular variables, and we would have  $x \simeq x \nMax x$.  Then, this equation could be used in cases in which $x$ is instantiated by redexes: for instance, we~could~have\[
    \Tm_{\nZero}~\U_{\nZero} \simeq \Tm_{\nZero}~\U_{\nZero} \nMax \Tm_{\nZero}~\U_{\nZero} \red \Ty_{\nZero} \nMax \Tm_{\nZero}~\U_{\nZero}
  \]But then  $\Ty_{\nZero}$ is the only reduct of $\Tm_{\nZero}~\U_{\nZero}$, yet $\Ty_{\nZero} \simeq \Ty_{\nZero} \sqcup \Tm_{\nZero}~\U_{\nZero}$ does not hold, showing that Proposition \ref{prop:sim} fails in this setting. One could argue that this is not a problem because the term $\Tm_{\nZero}~\U_{\nZero} \nMax \Tm_{\nZero}~\U_{\nZero}$ is ill-typed, however in Proposition~\ref{prop:sim} we do not ask terms to be well-typed. This is because this property is the main lemma for showing Church-Rosser, which in turn is needed for proving subject reduction. So without having subject reduction at this point, we cannot yet rely on the fact that terms are well-typed.%

\end{rem}

      \newcommand\ared{\blacktriangleright}
      \newcommand\aredd{{\blacktriangleright}^{*}}
      \newcommand\aired{\blacktriangleleft}
      \newcommand\airedd{{}^{*}{\blacktriangleleft}}

  We will need the following simple property about abstract rewriting. Recall that an abstract equational rewrite system $(\ared,\sim)$ is given by a binary relation $\ared~ \subseteq X^{2}$ and an equivalence relation $\sim~ \subseteq X^{2}$. Then $(\ared,\sim)$ is said to be \textit{Church-Rosser modulo} if $x~ (\ared \cup \aired \cup \sim)^{*}~ y$ implies $x~ \aredd  \sim  \airedd ~y$, where we write juxtaposition for composition of relations and $\aired$ for the inverse of $\ared$. %

  \newcommand\textsfup[1]{\textsf{\textup{#1}}}

  \begin{prop}\label{prop:abstract-cr}
    Let $(\ared, \sim)$ be an abstract equational rewrite system. If $\ared$ is confluent and  $\sim$ is a simulation for $\ared$, then $(\ared, \sim)$ is Church-Rosser modulo.
  \end{prop}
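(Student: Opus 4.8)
The plan is not to manipulate arbitrary $({\ared}\cup{\aired}\cup{\sim})$-paths directly, but to show that the relation $R$ given by ``$x \mathrel{R} y$ iff there exist $x',y'$ with $x \aredd x'$, $y \aredd y'$ and $x' \sim y'$'' (that is, $R = {\aredd}\,{\sim}\,{\airedd}$, reading juxtaposition as composition of relations as in the statement) is reflexive, transitive, and contains each of $\ared$, $\aired$ and $\sim$. Since $({\ared}\cup{\aired}\cup{\sim})^{*}$ is by definition the least reflexive--transitive relation containing ${\ared}\cup{\aired}\cup{\sim}$, this yields $({\ared}\cup{\aired}\cup{\sim})^{*} \subseteq R$, which is precisely the Church-Rosser modulo property. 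Reflexivity of $R$ and the three inclusions are immediate by taking some of $\aredd$, $\sim$, $\airedd$ to be the identity (for instance $x \ared y$ gives $x \aredd y$, $y \sim y$, $y \aredd y$). So the whole content of the proof is the transitivity of $R$.

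Before that I would record one commutation fact: ${\sim}\,{\aredd} \subseteq {\aredd}\,{\sim}$, i.e. if $x \sim y$ and $y \aredd z$ then $x \aredd y'$ and $y' \sim z$ for some $y'$. This is a straightforward induction on the length of $y \aredd z$, each step invoking the simulation hypothesis once. Because $\sim$ is symmetric, this single orientation is all that is needed (and the inverse version ${\sim}\,{\airedd} \subseteq {\airedd}\,{\sim}$ would follow by taking inverses).

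Now suppose $x \mathrel{R} y \mathrel{R} z$. Unfolding, there are $a,b$ with $x \aredd a$, $y \aredd b$, $a \sim b$, and $c,d$ with $y \aredd c$, $z \aredd d$, $c \sim d$. The two inner reductions form a peak $b \airedd y \aredd c$, which by confluence of $\ared$ is closed by some $e$ with $b \aredd e$ and $c \aredd e$. Then I push the $\sim$-links down to $e$ using the commutation fact: from $a \sim b \aredd e$ we get $a'$ with $a \aredd a'$ and $a' \sim e$, so $x \aredd a'$ and $a' \sim e$; symmetrically, from $d \sim c \aredd e$ we get $d'$ with $z \aredd d'$ and $d' \sim e$. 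Finally $a' \sim e$ and $e \sim d'$ give $a' \sim d'$ by symmetry and transitivity of $\sim$, whence $x \aredd a'$, $z \aredd d'$ and $a' \sim d'$, i.e. $x \mathrel{R} z$.

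The argument is essentially routine; the only points requiring attention are applying confluence precisely to the peak that appears between the two $R$-steps (rather than anywhere else), and invoking the simulation hypothesis only in the orientation it provides — which, thanks to the symmetry of $\sim$, never causes difficulty here.
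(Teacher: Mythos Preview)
Your proof is correct. The underlying diagram-chasing is the same as the paper's: close a peak using confluence, then push the $\sim$-links past the new reductions using (iterated) simulation. The organisation differs, however. The paper argues by induction on the length of a $(\ared\cup\aired\cup\sim)$-path, extending an already-obtained $\aredd\,\sim\,\airedd$ shape by one extra step and doing a three-way case split on whether that step is $\ared$, $\aired$ or $\sim$. You instead package the target shape as a relation $R={\aredd}\,{\sim}\,{\airedd}$ and reduce everything to a single transitivity lemma for $R$, which avoids the case analysis and makes explicit the one auxiliary fact you need, the commutation ${\sim}\,{\aredd}\subseteq{\aredd}\,{\sim}$. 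Both approaches are routine; yours is a bit more conceptual and reusable, while the paper's is slightly more elementary in that it never needs the full transitivity of $R$, only the special case where one of the two $R$-steps is a single $\ared$, $\aired$ or $\sim$.
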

  \begin{proof}
    If $x ~(\ared \cup \aired \cup \sim)^{*}~ y$, then we have $x~(\ared \cup \aired \cup \sim)^{n}~y$ for some $n$. We prove the result by induction on $n$, the base case being trivial. For the inductive step, we have \[
      x~ (\ared \cup \aired \cup \sim)^{n} ~z~ (\ared \cup \aired \cup \sim) ~y
    \]  for some $z$. First note that by i.h. we have $x~\aredd \sim \airedd~z$. We now have three possibilities:
    \begin{enumerate}
      \item $z\ared y$ : We have $x ~\aredd \sim \airedd \ared~ y$, and so by confluence we have $x ~\aredd \sim \aredd~\airedd~ y $. Using the fact that $\sim$ is a simulation with respect to $\ared$, we then get $x ~\aredd \sim\airedd~ y$.
      \item $z \aired y$ : We have $x ~\aredd \sim \airedd \aired~ y$, and thus $x ~\aredd \sim \airedd~ y $.
      \item $z \sim y$ : We have $x ~\aredd \sim \airedd \sim y$, thus using the fact that $\sim$ is a simulation with respect to $\ared$, we get $x ~\aredd \sim\airedd~ y$. \qedhere
     \end{enumerate}

  \end{proof}

  \begin{cor}[Church-Rosser modulo]
    If $t \equiv u$ then $t \redd t' \simeq u' \iredd u$.
  \end{cor}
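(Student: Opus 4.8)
The plan is to obtain the corollary as an immediate instance of Proposition~\ref{prop:abstract-cr}, applied to the concrete abstract equational rewrite system $(\red,\simeq)$ on the set of (confined or regular) terms. So the first step is to verify the two hypotheses of that proposition in this instance. That $\simeq$ is an equivalence relation is immediate from its definition as the reflexive-symmetric-transitive closure of $\mathcal{E}$ by context and substitution. That $\red$ is confluent is exactly Proposition~\ref{prop:confluence} (recalling that, by the conventions fixed earlier, $\red$ abbreviates $\red_{\beta\beta_{c}\mathcal{R}\delta}$, and that the $\delta$ rules here are those of the ambient local signature $\Phi$). That $\simeq$ is a simulation with respect to $\red$ is exactly Proposition~\ref{prop:sim}.

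The second step is to reconcile the notation: $\equiv$ is defined as the reflexive-symmetric-transitive closure of $\red\cup\simeq$, and I would observe that this coincides with $(\red\cup\invred\cup\simeq)^{*}$. Indeed, $\simeq$ is already symmetric, so closing $\red\cup\simeq$ under symmetry adds nothing beyond the converse $\invred$ of $\red$, and closing further under reflexivity and transitivity yields precisely $(\red\cup\invred\cup\simeq)^{*}$. Hence $t\equiv u$ is literally the statement $t~(\red\cup\invred\cup\simeq)^{*}~u$.

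The final step is then just to invoke Proposition~\ref{prop:abstract-cr} with $\ared:=\red$ and $\sim:=\simeq$: it gives that $(\red,\simeq)$ is Church-Rosser modulo, i.e. $t~(\red\cup\invred\cup\simeq)^{*}~u$ implies $t~\redd\simeq\iredd~u$, which by the previous paragraph is exactly the desired conclusion $t\redd t'\simeq u'\iredd u$ for some $t',u'$. I expect no real obstacle in this corollary: all the substance lies in Propositions~\ref{prop:sim}, \ref{prop:confluence} and \ref{prop:abstract-cr}, and the role of the corollary is merely to package them — the only point requiring a moment's care is the (routine) identification of $\equiv$ with $(\red\cup\invred\cup\simeq)^{*}$, and the fact that Proposition~\ref{prop:sim} was deliberately proved without assuming well-typedness, so that it is legitimately available at this stage (before subject reduction).
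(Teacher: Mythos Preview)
Your proposal is correct and follows exactly the paper's approach: the paper's proof is simply ``Direct consequence of Propositions~\ref{prop:confluence} and~\ref{prop:sim} and~\ref{prop:abstract-cr}.'' Your additional unpacking of why $\equiv$ coincides with $(\red\cup\invred\cup\simeq)^{*}$ is a reasonable bit of bookkeeping that the paper leaves implicit.
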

  \begin{proof}
    Direct consequence of Propositions \ref{prop:confluence} and \ref{prop:sim} and \ref{prop:abstract-cr}.
  \end{proof}

  The previous corollary has two important consequences. First, in order to check $t \equiv u$ we do not need to employ matching modulo $\simeq$, but instead only regular syntactic matching. This is important because it means that in order to decide $\equiv$ we do not need to design a specific matching algorithm for $\simeq$, but only to decide $\simeq$ (which is indeed decidable by Theorem~\ref{cor:lvl-decidable}) and to show $\red$ to be strongly normalizing for well-typed terms (a property we conjecture to be true). Second, using Church-Rosser modulo we can prove subject reduction, a property that will be essential to show soundness of elaboration (in particular, it is used in Theorem~\ref{thm:elab-sound}).

  \begin{prop}[Subject Reduction]
    If $\Gamma \vdash t : A$ and $t \red t'$ then $\Gamma \vdash t' : A$.
  \end{prop}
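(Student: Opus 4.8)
The plan is to follow the standard pattern for subject reduction proofs in type theory with rewriting, relying on the Church-Rosser modulo corollary just established. First I would establish the auxiliary ingredients that feed into the generic subject reduction statement of Proposition~\ref{prop:basic-meta}. That proposition already gives subject reduction for $\delta$ unconditionally, subject reduction for $\beta$ and $\beta_c$ under the hypothesis that injectivity of dependent products holds, and subject reduction for $\mathcal{R}$ under the hypothesis that every rule of $\mathcal{R}_{\mathbb{UPP}}$ preserves typing. So the real work is to discharge these two hypotheses and then to lift from one-step reductions of the four kinds to the combined relation $\red = \red_{\beta\beta_c\mathcal{R}\delta}$.

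The key steps, in order: (1) Derive \emph{injectivity of dependent products} from Church-Rosser modulo. Since $(x:A)\to B \equiv (x:A')\to B'$, by the corollary both sides reduce via $\redd$ to a common $\simeq$-equivalent pair; since dependent product formers are not head-reducible (no rewrite rule, no $\beta$/$\beta_c$ redex has a $\to$ at the head) and $\simeq$ only relates confined terms, the reducts are again products whose components are related by $\equiv$, giving $A \equiv A'$ and $B \equiv B'$; the confined case is identical. (2) Show that each rewrite rule in $\mathcal{R}_{\mathbb{UPP}}$ \emph{preserves typing}. The rule $\Tm~l'~(\U~l) \red \Ty~l$ is straightforward: a well-typed instance forces the type of $\U~l$ to be $\Ty~(\nSucc~l)$, hence the annotation $l'$ to satisfy $\nSucc~l \simeq l'$ suitably, and both sides get type $\Type$. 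For the linearized $\beta$-like rule $\nApp~l~l'~A~B~(\nLam~l''~l'''~A'~B'~t)~u \red t~u$, I would use inversion on the typing of the left-hand side together with injectivity of dependent products to deduce $A \equiv A'$, $B \equiv B'$ (and the level arguments convertible), and then check that $t~u$ receives the expected type $\Tm~l'~(B~u)$ — this is exactly the argument already sketched informally when the theory was introduced ("$A$ becomes convertible to $A'$..."). (3) Conclude subject reduction for each of $\red_\beta$, $\red_{\beta_c}$, $\red_\mathcal{R}$, $\red_\delta$ by invoking Proposition~\ref{prop:basic-meta} with the hypotheses now discharged, and then for the union $\red$ by a trivial case split on which kind of step was taken (since $\red t' $ is a single step, it is a step of exactly one of the four relations).

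The main obstacle is step~(1)/(2): getting injectivity of dependent products genuinely off the ground, because it is mutually entangled with the preservation-of-typing argument for the $\nApp$/$\nLam$ rule, and one must be careful that Church-Rosser modulo is available \emph{before} subject reduction (which it is — the corollary above was proved without subject reduction, precisely the point stressed in Remark~\ref{rem:sim}). Once injectivity is in hand, the rest is routine inversion. I would also note in passing that the confinement discipline makes the $\simeq$-bookkeeping painless: since $\simeq$ never relates a redex to a non-redex, the Church-Rosser reducts of a product are literally products, with no further normalization needed.
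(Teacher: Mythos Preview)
Your proposal is correct and follows essentially the same route as the paper: derive injectivity of dependent products from Church-Rosser modulo, invoke Proposition~\ref{prop:basic-meta} for subject reduction of $\beta$, $\beta_c$ and $\delta$, and then show that the two rules of $\mathcal{R}_{\mathbb{UPP}}$ preserve typing via inversion and Church-Rosser modulo. One small remark: in step~(2) what you actually need for the $\nApp/\nLam$ rule is injectivity of the \emph{object-level} constants $\Tm$ and $\nPi$ (to go from $\Tm~(l\nMax l')~(\nPi~l~l'~A~B) \equiv \Tm~(l''\nMax l''')~(\nPi~l''~l'''~A'~B')$ to $A\equiv A'$, $B\equiv B'$), not the framework's $(x{:}A)\to B$; but this follows from Church-Rosser modulo by the same head-stability argument you sketched, so the plan goes through unchanged.
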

  \begin{proof}
    From Church-Rosser modulo we get the injectivity of the framework's dependent function type, so from Proposition \ref{prop:basic-meta} we conclude subject reduction of $\delta$, $\beta$ and $\beta_{c}$.

    By Proposition~\ref{prop:basic-meta} once again, to show subject reduction for $\mathcal{R}_{\mathbb{UPP}}$ it suffices to prove that all of its rewrite rules preserve typing, which easily follows from inversion of typing and Church-Rosser modulo. %
  \end{proof}

  \begin{rem}
    The proof of subject reduction relies on Church-Rosser modulo, which in turn is shown using Proposition~\ref{prop:sim}. As discussed in Remark~\ref{rem:sim}, the proof of this proposition crucially relies on the use of confinement, so it is less clear how to derive Church-Rosser modulo in a setting without it. Because $\mathcal{R}_{\mathbb{UPP}}$ is left-linear and there are no critical pairs between $\mathcal{R}_{\mathbb{UPP}}$ and $\mathcal{E}_{\mathbb{UPP}}$, nor between $\mathcal{R}_{\mathbb{UPP}}$ and itself, one possibility would be to apply \cite[Theorem 5.11]{mayr1998higher} to show  the restriction of $\equiv$ to strongly normalizing (s.n.) terms to be Church-Rosser modulo. However, to show  Church-Rosser modulo for the well-typed restriction of $\equiv$ and then subject reduction we would then need to show $\red$ to be s.n. for well-typed terms. The use of confinement is thus an interesting alternative to this option, as it allows us to establish the main correctness property of elaboration (Theorem~\ref{thm:elab-sound}) without relying on strong normalization.
  \end{rem}

\section{Universe-polymorphic elaboration}
\label{sec:alg}

We are now ready to define the (partial) transformation of a local signature $ \Phi $ to the theory $ \mathbb{UPP} $. As hinted at the end of Section~\ref{sec:firstlook}, our translation works by incrementally trying to elaborate each entry of $\Phi$ into a universe-polymorphic one in $\mathbb{UPP}$, such that at the end we get a local signature  $\Phi'$ well-formed in $\mathbb{UPP}$, if no errors are produced in the process.

In order to explain it, we now suppose that we have already translated a local signature $\Phi$ to a local signature $\Phi'$ well-formed in $\mathbb{UPP}$, and  try to translate a new entry either of the form $ c : A$ or $c : A := t$. To illustrate the steps of the process, we will make use of a running example. Note that, following our previously established convention, we keep some arguments implicit in order to improve readability.
\begin{exa}
  The last example of Section \ref{sec:firstlook} is the following local signature, which is well-formed in $\mathbb{I}$.
  \begin{align*}
&\textsf{id} : \Tm~(\nPi A : \U_{\Omega}. A \nTo A)\\
    &\textsf{id} := \nLam A. \nLam x. x\\
    &\textsf{id}' : \Tm~(\nPi A : \U_{\Omega}. A \nTo A)\\
  &\textsf{id}' := \textsf{id}_{\nApp}(\nPi A : \U_{\Omega}. A \nTo A)_{\nApp} \textsf{id}
\end{align*}
Let us call $\Phi$ the local signature containing only the first entry and suppose that it has already been translated, giving the following local signature $ \Phi' $.
\begin{align*}
    &\textsf{id} : (i : \nLvl) \to \Tm~(\nPi A : \U_{i}. A \nTo A)\\
  &\textsf{id} := i.\nLam A. \nLam x. x
\end{align*}
Therefore, as a running example, we will translate step by step the second entry $ \textsf{id}' $.
\end{exa}

\subsection{Schematic terms}

The source syntax of our elaborator will be a subset of the one defined by~$\mathbb{UPP}$ extended with $\Phi'$. The only allowed levels will be confined variables $i$, appearing as arguments of constant applications, and there should be no occurrences of confined abstractions $i.t$ or  confined function types $(i:A)\to B$ --- we henceforth call such terms \textit{schematic terms}.  Therefore, in order to translate proofs from $\mathbb{I}$ the first step is defining a translation to the syntax of schematic terms, given by Figure~\ref{fig:erase}. Note that the translation is not defined for terms of the form $t~l$ or $i.t$ or $(i: A) \to B$ because these are not used in the theory $\mathbb{I}$. Moreover, because the first step is erasing all sort information, this actually defines a translation starting from any theory defined by instantiating Figure \ref{fig:dk-pts} with a specification. Therefore, it can also be applied to theories using much more complex universe hierarchies, such as those of the proof assistants \textsc{Matita} or \textsc{Coq}.

\begin{figure}
\begin{minipage}{0.45\linewidth}
\begin{align*}
  &|x| := x\\
  &|t~u| := |t| |u|\\
  &|x.t| := x.|t|\\
  &|(x:A) \to B| := (x:|A|) \to |B|\\
  &|\Type| := \Type\\
  &|\Kind| := \Kind\\
  &|c| := c~i_{1}..i_{k} \quad\text{if } c : (j_{1}..j_{k} : \nLvl) \to A \in \Phi'
\end{align*}
\end{minipage}
\begin{minipage}{0.45\linewidth}
\begin{align*}
  &|\Ty_{s}| := \Ty~i\\
  &|\Tm_{s}| := \Tm~i\\
  &|\U_{s}| := \U~i\\
  &|\nPi_{s,s'}| := \nPi~i~i'\\
  &|\nLam_{s,s'}| := \nLam~i~i'\\
  &|\nApp_{s,s'}| := \nApp~i~i'
\end{align*}
\end{minipage}
\caption{Translation to the syntax of schematic terms\\ (each inserted level variable is assumed to be different from the previous ones)}
\label{fig:erase}
\end{figure}

Let us explain the translation of Figure \ref{fig:erase}. First, all of the constants in the definition of $\mathbb{I}$ are mapped to their correspondents in $\mathbb{UPP}$, except that they are also applied to fresh level variables which are to be solved during elaboration. Then, constants from $\Phi$ are translated as they are, except that they are also given fresh level variables in order to fill in the number of level arguments that they expect. Note that this  is necessary because the translation of the entries preceding the current one introduced new dependencies on level variables which were not present in $\mathbb{I}$.
\begin{exa}
  When applying $|-|$ to $\textsf{id}'$ we get the following entry.
  \begin{align*}
    &\textsf{id}' : \Tm~(\nPi A : \U_{i_{1}}. A \nTo A)\\
  &\textsf{id}' := (\textsf{id}~i_{2})_{\nApp}(\nPi A : \U_{i_{3}}. A \nTo A)_{\nApp} (\textsf{id}~i_{4})
  \end{align*}
  Note that the occurrences of $\U_{\Omega}$ have been replaced by $\U_{i}$ (which is just a notation for $\U~i$) for some fresh $i$. Moreover, because the type of $\textsf{id}$ in $\Phi'$ is $(i : \nLvl) \to \Tm~(\nPi A : \U_{i}. A \nTo A)$, each occurrence of $\textsf{id}$ is replaced by $\textsf{id}~i$ for some fresh $i$. Finally, note that because of implicit arguments we are hiding many new variables that were also inserted. For instance, the subterm $\nPi A : \U_{i_{1}}. A \nTo A$ is an implicit notation for $\nPi_{i_{5},i_{6}} A : \U_{i_{1}}. A \hspace{0.2em}\ctbmath{\leadsto}_{i_{7},i_{8}}\hspace{0.1em} A$. However, the term $(\textsf{id}~i_{2})_{\nApp}(\nPi A : \U_{i_{3}}. A \nTo A)_{\nApp} (\textsf{id}~i_{4})$ without implicit arguments would not even fit into a line, so for readability reasons we will not write it fully.
\end{exa}

\subsection{Computing constraints}

Our next step is then to define a type system for computing constraints. This is done by adapting the seminal work of Harper and Pollack~\cite{typechecking-with-universes}, with the difference that our system will be \textit{bidirectional}.

In regular bidirectional type systems, the typing judgment $\Gamma \vdash t : A$ is split into modes infer $\Gamma \vdash t \Rightarrow A$ and check $\Gamma \vdash t \Leftarrow A$. In mode infer we start with $\Gamma, t$ and we should find a type $A$ for $t$ in $\Gamma$, whereas in mode check we are given $\Gamma, t, A$ and we should check that $t$ indeed has type $A$ in $\Gamma$. Crucial in bidirectional typing is the proper bookkeeping of pre-conditions and post-conditions, which are summarized in the following table --- there, we mark inputs with $-$ and outputs with $+$.

\begin{center}
  \vspace{0.5em}
\begin{tabular}{lll}
Judgment & Pre-condition & Post-condition \\ \hline
 $\Gamma^{-} \vdash t^{-} \Rightarrow A^{+}$         & $\Gamma \vdash$     & $\Gamma \vdash t : A$      \\
  $\Gamma^{-} \vdash t^{-} \Leftarrow A^{-}$         & $\Gamma \vdash A:s$     & $\Gamma \vdash t : A$
\end{tabular}
\vspace{0.5em}
\end{center}

In order to refine these judgments with constraints, let us start with some preliminary definitions. A \textit{(level unification) problem} $\mathcal{C}$ is a set containing \textit{(level) constraints} of the form $l \peq l'$, referred to also as \textit{equations}. In the following definitions, let $\theta$ be a \textit{level substitution}, that is, one whose domain only contains confined variables. We write $\theta \vDash \mathcal{C}$ when  $l[\theta]\simeq l'[\theta]$ for all $l \peq l' \in \mathcal{C}$, in which case $\theta$ is called a \textit{unifier} (or \textit{solution}) for $\mathcal{C}$. Let $\Xi_{\theta}$ be a context containing $ j : \nLvl$ for each level variable appearing in $i[\theta]$, for each level variable $i$ inserted in the schematic terms. We then also write  $\Gamma \vdash_{\mathcal{C}} $ when $ \theta \vDash \mathcal{C}$ implies $ \Xi_{\theta},\Gamma[\theta]\vdash$ for all $\theta$, and $\Gamma \vdash_{\mathcal{C}} t : A$ when $ \theta \vDash \mathcal{C}$ implies $ \Xi_{\theta},\Gamma[\theta]\vdash t[\theta] : A[\theta]$ for all $\theta$.

Intuitively, just like the context $\Gamma$ in $\Gamma \vdash t :A$  allows us to state a typing judgment $t:A$ with typing hypotheses of the form $x : B \in \Gamma$, the set of constraints $\mathcal{C}$ in $\Gamma\vdash_{\mathcal{C}}t:A$ refines this with equational hypotheses of the form $l \simeq l'$. With this in mind, we can now give the new typing judgments in the following table. Compared with the previous table,  we now start with a set of constraints  $\mathcal{D}$, which guarantees that the pre-condition holds, and in the process we must also find constraints $\mathcal{C}$ ensuring that the post-condition holds.

\begin{center}
  \vspace{0.5em}
\begin{tabular}{lll}
Judgment & Pre-condition & Post-condition \\ \hline
 $\Gamma^{-} \uparrow \mathcal{D}^{-} \vdash t^{-} \Rightarrow A^{+}\downarrow \mathcal{C}^{+}$         & $\Gamma \vdash_{\mathcal{D}}$     & $\Gamma \vdash_{\mathcal{C}\cup\mathcal{D}} t : A$      \\
  $\Gamma^{-}\uparrow \mathcal{D}^{-} \vdash t^{-} \Leftarrow A^{-}\downarrow \mathcal{C}^{+}$         & $\Gamma \vdash_{\mathcal{D}} A:s$     & $\Gamma \vdash_{\mathcal{C}\cup\mathcal{D}} t : A$
\end{tabular}
\vspace{0.5em}
\end{center}

We now come to the definition of the bidirectional type system for computing constraints, given in Figure~\ref{fig:bidi} --- we mark inputs with $-$ and outputs with $+$. Note that it also relies on two new conversion judgments $A \equiv B \downarrow \mathcal{C}$ and $A \equiv_{\text{whnf}} B \downarrow \mathcal{C}$ used to compute a set of constraints needed for the conversion to hold. In the given rules, we write $t \red^{\text{wh}} u$ for the reduction of $t$ to a weak-head normal form (whnf) $u$, meaning that $t \redd{} u$ and, if $u \redd{}u'~a_1~\dots~a_k$ with $k$ being possibly $0$ and each of the $a_1,\dots,a_k$ being a regular or confined term, then $u'$ matches no rewrite rule left-hand side (including $\beta$, $\beta_\textsf{c}$ and $\delta$).

\begin{figure}\raggedright
  \fbox{$A^{-} \equiv B^{-} \downarrow \mathcal{C}^{+}$}
  \begin{mathpar}
    \inferrule
    { A \red^\text{wh} A'\\ B \red^\text{wh} B'\\ A' \equiv_{\text{whnf}} B' \downarrow \mathcal{C}}
    {A \equiv B \downarrow \mathcal{C}}
    \vspace{1em}
  \end{mathpar}
  \fbox{$A^{-} \equiv_{\text{whnf}} B^{-} \downarrow \mathcal{C}^{+}$}
  \begin{mathpar}
    \inferrule
    { t = x, c, s  }
    { t \equiv_\text{whnf} t \downarrow \emptyset }
    \and
    \inferrule
    { t \equiv t' \downarrow \mathcal{C} }
    { x.t \equiv_\text{whnf} x. t'\downarrow \mathcal{C}}
    \and
    \inferrule
    { A \equiv A' \downarrow \mathcal{C}_1 \\  B \equiv B' \downarrow \mathcal{C}_2 }
    { (x : A) \to B \equiv_\text{whnf} (x : A') \to B' \downarrow \mathcal{C}_1\cup \mathcal{C}_2 }
    \and
    \inferrule
    {t \equiv_\text{whnf} t' \downarrow \mathcal{C}_1 \\  u \equiv u' \downarrow \mathcal{C}_2 }
    { t~u \equiv_\text{whnf} t'~u' \downarrow \mathcal{C}_1\cup \mathcal{C}_2 }
    \and
    \inferrule
    {t \equiv_\text{whnf} t' \downarrow \mathcal{C}}
    { t~l \equiv_\text{whnf} t'~l' \downarrow \mathcal{C}\cup \{l \peq l'\}}
    \vspace{1em}
  \end{mathpar}
  \fbox{$\Gamma^{-}\uparrow \mathcal{D}^{-} \vdash t^{-} \Leftarrow A^{-}\downarrow \mathcal{C}^{+}$}
  \begin{mathpar}
    \inferrule[Switch]
  {\Gamma \uparrow \mathcal{D} \vdash t \Rightarrow A\downarrow \mathcal{C}_1\\ A \equiv B\downarrow \mathcal{C}_2}
  {\Gamma \uparrow \mathcal{D} \vdash t \Leftarrow B\downarrow \mathcal{C}_1 \cup \mathcal{C}_2}
  \and
  \inferrule[Abs]
  {C \red^\text{wh} (x : A) \to B \\\\\Gamma,x:A \uparrow \mathcal{D} \vdash t \Leftarrow B \downarrow \mathcal{C}}
  {\Gamma \uparrow \mathcal{D} \vdash x.t \Leftarrow C \downarrow \mathcal{C}}
  \vspace{1em}
\end{mathpar}
  \fbox{$\Gamma^{-} \uparrow \mathcal{D}^{-} \vdash t^{-} \Rightarrow A^{+}\downarrow \mathcal{C}^{+}$}
\begin{mathpar}
  \inferrule[Var]
  { x : A \in \Gamma }
  {\Gamma \uparrow \mathcal{D} \vdash x \Rightarrow A \downarrow \emptyset}
  \and
  \inferrule[Cons]
  { c : A \in \Sigma_{\mathbb{UPP}}, \Phi' \text{ or } c : A := t \in \Sigma_{\mathbb{UPP}}, \Phi'\\\\ A = (j_1..j_k : \nLvl) \to B}
  { \Gamma \uparrow \mathcal{D} \vdash c~i_1..i_k \Rightarrow B[\vec{j}\mapsto \vec{i}] \downarrow \emptyset }
  \\
  \inferrule[Sort]
  { }
  { \Gamma \uparrow \mathcal{D} \vdash  \Type \Rightarrow \Kind \downarrow \emptyset}
  \and
  \inferrule[Pi]
  {\Gamma \uparrow \mathcal{D} \vdash A \Leftarrow \Type  \downarrow \mathcal{C}_1 \\\\ \Gamma, x : A  \uparrow \mathcal{D} \cup\mathcal{C}_1\vdash B\Rightarrow C  \downarrow \mathcal{C}_2 \\ C \red^\text{wh} s}
  {\Gamma \uparrow \mathcal{D} \vdash (x : A) \to B \Rightarrow s \downarrow \mathcal{C}_1 \cup \mathcal{C}_2}
  \and
  \inferrule[App]
  { \Gamma \uparrow \mathcal{D} \vdash t  \Rightarrow C \downarrow \mathcal{C}_1\\ C \red^\text{wh} (x : A) \to B \\  \Gamma \uparrow \mathcal{D} \cup \mathcal{C}_1 \vdash u \Leftarrow A \downarrow \mathcal{C}_2}
  {\Gamma \uparrow \mathcal{D} \vdash t~u \Rightarrow B[x\mapsto u] \downarrow \mathcal{C}_1 \cup \mathcal{C}_2}
\end{mathpar}
\caption{Bidirectional type system for computing universe level constraints}
\label{fig:bidi}
\end{figure}

\begin{exa}
  In order to compute the constraints of the entry
  \begin{align*}
    &\textsf{id}' : \Tm~(\nPi A : \U_{i_{1}}. A \nTo A)\\
  &\textsf{id}' := (\textsf{id}~i_{2})_{\nApp}(\nPi A : \U_{i_{3}}. A \nTo A)_{\nApp} (\textsf{id}~i_{4})
  \end{align*}
  we first compute the constraints necessary for its type to be of type $\Type$:\[
    () \uparrow \emptyset \vdash \Tm~(\nPi A : \U_{i_{1}}. A \nTo A) \Leftarrow \Type \downarrow \mathcal{C}_{1}
  \]Then, once we know that the type is valid under the constraints $\mathcal{C}_{1}$, we can do the same with the term:\[
    ()\uparrow \mathcal{C}_{1} \vdash (\textsf{id}~i_{2})_{\nApp}(\nPi A : \U_{i_{3}}. A \nTo A)_{\nApp} (\textsf{id}~i_{4}) \Leftarrow \Tm~(\nPi A : \U_{i_{1}}. A \nTo A) \downarrow \mathcal{C}_2
  \]In the end we get the constraints\[
    \mathcal{C}_{1} \cup \mathcal{C}_{2} := \{\nSucc~i_{1}\peq i_{2}, i_{1}\peq i_{3}, i_{1}\peq i_{4},...\}
  \]where the hidden constraints concern level variables appearing in implicit arguments.
\end{exa}

We can show the soundness of the new type system by verifying that each rule locally preserves the invariants of the table --- this is the essence of the proof of Theorem~\ref{thm:elab-sound}. Before showing this, we first  need a lemma establishing the soundness of conversion checking.

\begin{lem}\label{lem:sound}
  Suppose that $A\equiv B \downarrow \mathcal{C}$ or $A\equiv_{\textup{whnf}} B \downarrow \mathcal{C}$. If $\theta \vDash \mathcal{C}$ then $A[\theta]\equiv B[\theta]$.
\end{lem}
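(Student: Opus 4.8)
The plan is to prove Lemma~\ref{lem:sound} by a straightforward induction on the derivation of the conversion judgment $A \equiv B \downarrow \mathcal{C}$ (resp. $A \equiv_{\textup{whnf}} B \downarrow \mathcal{C}$), treating the two judgment forms simultaneously since they are mutually recursive. Fix a level substitution $\theta$ with $\theta \vDash \mathcal{C}$. First I would handle the single rule for $A \equiv B \downarrow \mathcal{C}$: here $A \red^{\text{wh}} A'$, $B \red^{\text{wh}} B'$ and $A' \equiv_{\textup{whnf}} B' \downarrow \mathcal{C}$; since $\red^{\text{wh}}$ is an instance of $\redd$ and reduction is stable under substitution, we get $A[\theta] \redd A'[\theta]$ and $B[\theta] \redd B'[\theta]$, while the induction hypothesis gives $A'[\theta] \equiv B'[\theta]$; composing with $\equiv$ (which contains $\redd$ and its inverse) yields $A[\theta] \equiv B[\theta]$.

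\medskip

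Next I would go through the five rules for $\equiv_{\textup{whnf}}$. The reflexivity rule ($t = x, c, s$ with empty constraint set) is immediate since $t[\theta] \equiv t[\theta]$ and any $\theta$ satisfies $\emptyset$. The rules for abstractions, dependent products, and applications are all congruence cases: the constraint set is a union $\mathcal{C}_1 \cup \mathcal{C}_2$ (or just $\mathcal{C}$), so from $\theta \vDash \mathcal{C}_1 \cup \mathcal{C}_2$ we get $\theta \vDash \mathcal{C}_1$ and $\theta \vDash \mathcal{C}_2$; applying the induction hypotheses to the subderivations gives convertibility of the corresponding subterms under $\theta$, and then $\equiv$ being a congruence (closed under context) gives the result --- e.g. for $(x:A)\to B \equiv_{\textup{whnf}} (x:A')\to B'$ we get $A[\theta] \equiv A'[\theta]$ and $B[\theta] \equiv B'[\theta]$, hence $((x:A)\to B)[\theta] \equiv ((x:A')\to B')[\theta]$. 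The only genuinely new case is the last rule, $t~l \equiv_{\textup{whnf}} t'~l' \downarrow \mathcal{C} \cup \{l \peq l'\}$: here $\theta \vDash \mathcal{C} \cup \{l \peq l'\}$ gives both $\theta \vDash \mathcal{C}$ (so $t[\theta] \equiv t'[\theta]$ by the induction hypothesis) and $l[\theta] \simeq l'[\theta]$ by definition of $\vDash$ on the single equation $l \peq l'$; since $\simeq \subseteq \equiv$ and $\equiv$ is a congruence, $(t~l)[\theta] = t[\theta]~l[\theta] \equiv t'[\theta]~l'[\theta] = (t'~l')[\theta]$.

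\medskip

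I do not expect any real obstacle here: the lemma is essentially a bookkeeping argument showing that the constraint set collected by the conversion-checking rules is exactly what is needed to make each syntactic identification legitimate under $\theta$. The only points requiring a little care are (i) checking that $\red^{\text{wh}}$ reductions survive substitution by a level substitution --- which follows from stability of $\red$ under substitution --- and (ii) correctly splitting the hypothesis $\theta \vDash \mathcal{C}_1 \cup \mathcal{C}_2$ into $\theta \vDash \mathcal{C}_i$ and, in the application case, extracting $l[\theta] \simeq l'[\theta]$ from $\theta \vDash \{l \peq l'\}$. Both are immediate from the definitions of $\vDash$ and of $\equiv$ as the reflexive-symmetric-transitive closure of $\red \cup \simeq$ closed under context.
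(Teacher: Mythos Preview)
Your proposal is correct and follows exactly the approach the paper takes: a mutual induction on the two conversion judgments, with each case being either a congruence or the direct use of $\theta \vDash \{l \peq l'\}$. The paper simply says ``by an easy mutual induction'' without spelling out the cases, so your write-up is a faithful expansion of that one-line proof.
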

\begin{proof}
By an easy mutual induction on $A\equiv B \downarrow \mathcal{C}$ and $A\equiv_{\textup{whnf}} B \downarrow \mathcal{C}$.
\end{proof}

\begin{samepage}
\begin{thm}[Soundness of type system for computing constraints]\label{thm:elab-sound} Consider the rules of Figure~\ref{fig:bidi} where $\Phi'$ is an arbitrary  local signature well-formed in $\mathbb{UPP}$.
  \begin{itemize}
    \item If $\Gamma \vdash_{\mathcal{D}} A:s$ and $\Gamma \uparrow \mathcal{D} \vdash t \Leftarrow A \downarrow \mathcal{C}$ then $ \Gamma \vdash_{\mathcal{C}\cup\mathcal{D} } t  :A$
    \item If $\Gamma \vdash_{\mathcal{D}} $ and $\Gamma \uparrow \mathcal{D} \vdash t \Rightarrow A \downarrow \mathcal{C}$ then $ \Gamma \vdash_{\mathcal{C}\cup\mathcal{D}} t  :A$
  \end{itemize}
\end{thm}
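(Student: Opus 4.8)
The plan is to prove both statements simultaneously by mutual induction on the derivations of $\Gamma \uparrow \mathcal{D} \vdash t \Leftarrow A \downarrow \mathcal{C}$ and $\Gamma \uparrow \mathcal{D} \vdash t \Rightarrow A \downarrow \mathcal{C}$, following the structure of Figure~\ref{fig:bidi}. For each rule, I will assume the stated pre-condition (either $\Gamma\vdash_{\mathcal{D}}$ or $\Gamma\vdash_{\mathcal{D}} A:s$) and the induction hypotheses for the premises, and then verify the post-condition $\Gamma\vdash_{\mathcal{C}\cup\mathcal{D}} t:A$. Unfolding the definition of the constraint-relative judgments, this amounts to: fix an arbitrary level substitution $\theta$ with $\theta\vDash\mathcal{C}\cup\mathcal{D}$, and derive $\Xi_\theta,\Gamma[\theta]\vdash t[\theta]:A[\theta]$ using the ordinary typing rules of Figure~\ref{fig:typing-dk} together with the basic metaproperties (Proposition~\ref{prop:basic-meta}), subject reduction, and Lemma~\ref{lem:sound} for the conversion side-conditions. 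Since $\theta\vDash\mathcal{C}\cup\mathcal{D}$ implies $\theta\vDash\mathcal{C}'\cup\mathcal{D}$ for any $\mathcal{C}'\subseteq\mathcal{C}$, the induction hypotheses for premises (whose output constraints are subsets of $\mathcal{C}$, and whose input constraints are $\mathcal{D}$ or $\mathcal{D}\cup\mathcal{C}_1$) are immediately applicable to this same $\theta$.

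Next I would go through the rules case by case. For \textsc{Var}, \textsc{Cons}, and \textsc{Sort} the output constraint is $\emptyset$ and the conclusion is a direct instance of the corresponding \Dedukti{} rule (\textsc{Var}/\textsc{VarC}, \textsc{Cons}/\textsc{ConsC}, \textsc{Sort}); for \textsc{Cons} one also uses that $(c~i_1..i_k)[\theta]$ is well-typed because each $i_j[\theta]$ is a level, hence of type $\nLvl$ in $\Xi_\theta$, and applies \textsc{AppC} $k$ times. For \textsc{Pi}, the i.h. on the first premise gives $\Gamma[\theta]\vdash A[\theta]:\Type$ (so the pre-condition for the second premise's judgment, $\Gamma,x:A\vdash_{\mathcal{D}\cup\mathcal{C}_1}$, holds — this needs \textsc{ExtCtx}), then the i.h. on the second premise gives $\Gamma[\theta],x:A[\theta]\vdash B[\theta]:C[\theta]$, and since $C\red^{\text{wh}}s$ we have $C[\theta]$ reducing to $s$, so by subject reduction $\Gamma[\theta],x:A[\theta]\vdash B[\theta]:s$, and \textsc{Arrow} concludes. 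For \textsc{App}, the i.h. gives $\Gamma[\theta]\vdash t[\theta]:C[\theta]$ with $C[\theta]\redd (x:A[\theta])\to B[\theta]$ (after applying $\theta$ to $C\red^{\text{wh}}(x:A)\to B$), so by subject reduction $t[\theta]$ has that product type; the second i.h. (using pre-condition $\Gamma\vdash_{\mathcal{D}\cup\mathcal{C}_1}A:\Type$, obtained from validity applied to the product type) gives $\Gamma[\theta]\vdash u[\theta]:A[\theta]$, and \textsc{App} concludes $\Gamma[\theta]\vdash (t~u)[\theta]:B[\theta][x\mapsto u[\theta]]$, which is $(B[x\mapsto u])[\theta]$. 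For \textsc{Switch}, the i.h. gives $\Gamma[\theta]\vdash t[\theta]:A[\theta]$, Lemma~\ref{lem:sound} applied with $\theta\vDash\mathcal{C}_2$ gives $A[\theta]\equiv B[\theta]$, and the pre-condition supplies $\Gamma[\theta]\vdash B[\theta]:s$, so \textsc{Conv} concludes. For \textsc{Abs} (check mode), $C\red^{\text{wh}}(x:A)\to B$ plus the pre-condition $\Gamma\vdash_{\mathcal{D}} C:s$ and injectivity of products (a consequence of Church-Rosser modulo, available by subject reduction) give $\Gamma[\theta]\vdash A[\theta]:\Type$ and $\Gamma[\theta],x:A[\theta]\vdash B[\theta]:s'$; the i.h. then yields $\Gamma[\theta],x:A[\theta]\vdash t[\theta]:B[\theta]$, and \textsc{Abs} concludes (after converting $C[\theta]$ back to the product via \textsc{Conv}).

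The main obstacle I expect is the careful bookkeeping of pre-conditions across the mutually recursive calls — ensuring at each inductive step that the premise's input pre-condition (e.g. $\Gamma\vdash_{\mathcal{D}\cup\mathcal{C}_1} A:s$ in \textsc{App}, or $\Gamma,x:A\vdash_{\mathcal{D}\cup\mathcal{C}_1}$ in \textsc{Pi}) genuinely follows from what has been established so far, which repeatedly requires \emph{validity} and subject reduction from Proposition~\ref{prop:basic-meta} and the Church-Rosser corollary. The other delicate point is that the $\red^{\text{wh}}$ side-conditions in \textsc{Pi}, \textsc{App}, and \textsc{Abs} must be shown to commute with the substitution $\theta$: since $\theta$ only substitutes level variables and $\red$ is stable under substitution, $C\red^{\text{wh}}D$ gives $C[\theta]\redd D[\theta]$, which is all that is needed to invoke subject reduction — I would state this as a small auxiliary observation before the case analysis. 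Everything else is routine rule-by-rule verification.
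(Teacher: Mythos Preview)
Your proposal is correct and matches the paper's proof: mutual induction on the bidirectional derivations, case by case, invoking validity, subject reduction, inversion, and Lemma~\ref{lem:sound} exactly as the paper does. Two cosmetic corrections: in \textsc{Pi} and \textsc{App} the step you label ``subject reduction'' is really validity (to obtain $C[\theta]:s$) followed by subject reduction applied to the \emph{type} and then \textsc{Conv}, which the paper spells out explicitly; and in \textsc{Abs} the tool you need is inversion of typing (from $(x:A)\to B:s$ deduce $A:\Type$ and $B:s$), not injectivity of dependent products.
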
\end{samepage}
\begin{proof}
  By mutual induction on the elaborator judgments.

  \begin{itemize}
    \item Case \textsc{Var} : Let $\theta \vDash \emptyset \cup \mathcal{D}$. By hypothesis we have $\Xi_{\theta},\Gamma[\theta] \vdash$, and hence $\Xi_{\theta},\Gamma[\theta]\vdash x : A[\theta]$.
    \item Case \textsc{Sort} : Similar to the previous case.
    \item Case \textsc{Cons} : Let $\theta \vDash \emptyset \cup \mathcal{D}$. By hypothesis we have $\Xi_{\theta},\Gamma[\theta] \vdash$, and hence $\Xi_{\theta},\Gamma[\theta]\vdash c :  (j_{1}..j_{k} :\nLvl) \to B$. Because $i_{n}[\theta]$ is a level with free variables among $\Xi_{\theta}$, we then also have $\Xi_{\theta},\Gamma[\theta]\vdash i_{n}[\theta]:\nLvl$, for all $n=1..k$. So we get $\Xi_{\theta},\Gamma[\theta]\vdash c~i_{1}[\theta]..i_{k}[\theta] : B[\vec{j}\mapsto\vec{i}[\theta]]$, and because $B[\vec{j} \mapsto \vec{i}[\theta]] = B[\vec{j}\mapsto\vec{i}][\theta]$ we conclude.
    \item Case \textsc{Switch} : Let $\theta \vDash \mathcal{C}_{1} \cup \mathcal{C}_{2} \cup \mathcal{D}$. By hypothesis we have $\Gamma \vdash_{\mathcal{D}} B:s$ and thus $\Gamma\vdash_{\mathcal{D}}$, therefore by i.h. we have $\Gamma \vdash_{\mathcal{C}_{1} \cup \mathcal{D}} t : A$, from which we get  $\Xi_{\theta}, \Gamma[\theta]\vdash t[\theta]: A[\theta]$. Moreover, from Lemma~\ref{lem:sound} we also get $A[\theta]\equiv B[\theta]$. Finally, from $\Gamma \vdash_{\mathcal{D}} B:s$ we have $\Xi_{\theta},\Gamma[\theta]\vdash B[\theta]:s$, and therefore we conclude $\Xi_{\theta},\Gamma[\theta]\vdash t [\theta]:B[\theta]$ by rule \textsc{Conv}.
    \item Case \textsc{Pi} : Let $\theta \vDash \mathcal{C}_{1} \cup \mathcal{C}_{2} \cup \mathcal{D}$. By hypothesis we have $\Gamma \vdash_{\mathcal{D}}$ and thus $\Gamma \vdash_{\mathcal{D}}\Type:\Kind$, therefore by i.h. we have $\Gamma \vdash_{\mathcal{C}_{1} \cup \mathcal{D}} A:\Type$, from which we get  $\Xi_{\theta}, \Gamma[\theta]\vdash A[\theta]:\Type$ and $\Gamma,x:A \vdash_{\mathcal{C}_{1} \cup \mathcal{D}}$. Therefore, by i.h. once again we also have $\Gamma,x:A \vdash_{\mathcal{C}_{1} \cup \mathcal{C}_{2} \cup  \mathcal{D}} B : C$ and thus $\Xi_{\theta},\Gamma[\theta], x : A[\theta] \vdash B[\theta]:C[\theta]$.

          We now claim that $\Xi_{\theta},\Gamma[\theta],x : A[\theta] \vdash B[\theta] : s$. Indeed, by validity we have either $C[\theta] = \Kind$ or $\Xi_{\theta},\Gamma[\theta], x : A[\theta] \vdash C[\theta] : s'$ for some $s'$. If $C[\theta] = \Kind$ then we must have $s = \Kind$, and thus the claim follows from $\Xi_{\theta},\Gamma[\theta],x : A[\theta] \vdash B[\theta]:C[\theta]$. If $\Xi_{\theta},\Gamma[\theta],x : A[\theta] \vdash C[\theta] : s'$ for some $s'$, then by subject reduction we have  $\Xi_{\theta},\Gamma[\theta],x : A[\theta] \vdash s : s'$, in which case we must have $s = \Type$ and $s' = \Kind$. Therefore,  by the conversion rule with $C[\theta]\equiv \Type$ we get $\Xi_{\theta},\Gamma[\theta],x : A[\theta] \vdash B[\theta]:\Type$.

          Finally, we conclude $\Xi_{\theta},\Gamma[\theta]\vdash (x :A[\theta]) \to  B[\theta] : s$ from  $\Xi_{\theta}, \Gamma[\theta]\vdash A[\theta]:\Type$ and $\Xi_{\theta},\Gamma[\theta],x : A[\theta] \vdash B[\theta] : s$.

    \item Case \textsc{App} : By hypothesis we have  $\Gamma\vdash_{\mathcal{D}}$, therefore by i.h. we have $\Gamma \vdash_{\mathcal{C}_{1} \cup \mathcal{D}} t : C$.

          We claim that $\Gamma \vdash_{\mathcal{C}_{1} \cup \mathcal{D}} t : (x:A)\to B$. Indeed, let $\theta \vDash \mathcal{C}_{1} \cup \mathcal{D}$, in which case we have $\Xi_{\theta},\Gamma[\theta]\vdash t[\theta]:C[\theta]$. By validity we have  $\Xi_{\theta},\Gamma[\theta]\vdash C[\theta] : s$ for some $s$, so by subject reduction and $C[\theta]\red^{*}(x:A[\theta])\to B[\theta]$ we have $\Xi_{\theta},\Gamma[\theta]\vdash (x:A[\theta]) \to B[\theta] : s$. Applying conversion with $\Xi_{\theta},\Gamma[\theta]\vdash t[\theta]:C[\theta]$, we get $\Xi_{\theta},\Gamma[\theta] \vdash t[\theta] : (x:A[\theta])\to B[\theta]$.

          Using validity and inversion of typing, we can also derive $\Gamma \vdash_{\mathcal{C}_{1} \cup \mathcal{D}} A:\Type$ from $\Gamma \vdash_{\mathcal{C}_{1} \cup \mathcal{D}} t : (x:A)\to B$, and thus we can apply the i.h. once again to get $\Gamma\vdash_{\mathcal{C}_{1}\cup\mathcal{C}_{2}\cup\mathcal{D}}u:A$.

          Now let $\theta\vDash \mathcal{C}_{1} \cup \mathcal{C}_{2} \cup \mathcal{D}$. We thus have $\Xi_{\theta},\Gamma[\theta] \vdash t[\theta] : (x:A[\theta])\to B[\theta]$ and $\Xi_{\theta},\Gamma[\theta]\vdash u[\theta]:A[\theta]$, so by the application rule we get $\Xi_{\theta},\Gamma[\theta]\vdash t[\theta]~u[\theta]:B[\theta][x\mapsto u[\theta]]$. Because $B[\theta][x\mapsto u[\theta]] = (B[x\mapsto u])[\theta]$, the result follows.

    \item Case \textsc{Abs} : By hypothesis we have $\Gamma \vdash_{\mathcal{D}} C:s $, from which we derive $\Gamma \vdash_{\mathcal{D}} (x:A) \to B : s$ using subject reduction, and then $\Gamma,x:A \vdash_{\mathcal{D}}B:s$ using inversion.

          Now let $\theta \vDash\mathcal{C}\cup\mathcal{D}$. From $\Gamma \vdash_{\mathcal{D}} (x:A) \to B : s$ we get $\Xi_{\theta},\Gamma[\theta]\vdash (x:A[\theta]) \to B[\theta]:s$, thus by inversion we have $\Xi_{\theta},\Gamma[\theta]\vdash A[\theta]:\Type$ and $\Xi_{\theta},\Gamma[\theta],x:A[\theta]\vdash  B[\theta]:s$. By the i.h. we also get $\Gamma,x:A \vdash_{\mathcal{C}\cup\mathcal{D}} t : B$, and thus $\Xi_{\theta},\Gamma[\theta],x:A[\theta]\vdash t[\theta]:B[\theta]$, and therefore we conclude $\Xi_{\theta},\Gamma[\theta]\vdash x.t[\theta] : (x:A[\theta])\to B[\theta]$ by the abstraction rule. \qedhere
  \end{itemize}
\end{proof}

\begin{rem}
  One could also wonder if the computation of constraints always terminates, either with a valid set of constraints or with an error indicating the term cannot be elaborated. By supposing strong normalization for $\mathbb{UPP}$, and by checking at each step of the rules in Figure~\ref{fig:bidi} that the constraints are consistent, one could show termination of the algorithm by using a similar technique as in \cite{typechecking-with-universes}. As we do not investigate strong normalization of $\mathbb{UPP}$ in this paper --- and doing so would further deviate us from the goals of this work ---, we leave this for future work. However, as we will see in Section~\ref{sec:matita}, when using it in practice we were able to translate many proofs without non-termination issues.
\end{rem}

\subsection{Solving the constraints} Once the constraints are computed, the next step is solving them. However, as explained in Section \ref{sec:firstlook}, we do not want a numerical assignment of level variables that satisfies the constraints, but rather a general symbolic solution which allows the term to be instantiated later at different universe levels. This thus requires unification, but because levels are not purely syntactic entities, one needs to devise a unification algorithm specific for the equational theory of universe levels. For now, let us postpone this to the next section and assume we are given a (partial) function $ \textsc{Unify} $ which computes from a set of constraints $ \mathcal{C} $ a unifier $ \theta $.

Once a unifier for the constraints is found, we can just apply it to the entry and generalize over all free level variables $\vec{i}$, and get either $c : (\vec{i} : \nLvl) \to A[\theta]$ or $c : (\vec{i}: \nLvl) \to A[\theta] := \vec{i}. t[\theta]$. However, a last optimization can be made: let us write $\vec{i}_{A[\theta]}$ for the free level variables occurring in $A[\theta]$, and $\vec{i}_{t[\theta]\setminus A[\theta]}$ for the free level variables occurring in $t[\theta]$ but not in $A[\theta]$. Then we can reduce the number of level arguments of the entry by setting all the $\vec{i}_{t[\theta]\setminus A[\theta]}$ to $\nZero$, without changing the type $A[\theta]$ in the entry: we then get $c : (\vec{i}_{A[\theta]} : \nLvl) \to A[\theta]$ or $c : (\vec{i}_{A[\theta]}: \nLvl) \to A[\theta] := \vec{i}_{A[\theta]}. t[\theta][\vec{i}_{t[\theta]\setminus A[\theta]}\mapsto \nZero]$. While this does not impact the correctness of the elaboration, this optimization is still useful because reducing the number of level arguments empirically leads to unification problems that are easier to solve in practice.

\begin{exa}
  Recall that when calculating the constraints of entry $\textsf{id}'$ we found\[
\mathcal{C}_{1} \cup \mathcal{C}_{2} := \{\nSucc~i_{1}\peq i_{2}, i_{1}\peq i_{3}, i_{1}\peq i_{4},...\}
\]The algorithm of the next section is able to compute the unifier\[
  \theta = i_{1} \mapsto i_{4}, ~i_{2} \mapsto \nSucc~i_{4},~i_{3} \mapsto i_{4}, ...
\]We can now apply the unifier to the entry and generalize over the free level variables of the type, while mapping the other ones to $\nZero$, which gives at the end
\begin{align*}
    &\textsf{id}' : (i_{4}:\nLvl)\to \Tm~(\nPi A : \U_{i_{4}}. A \nTo A)\\
  &\textsf{id}' := i_{4}.(\textsf{id}~(\nSucc~i_{4}))_{\nApp}(\nPi A : \U_{i_{4}}. A \nTo A)_{\nApp} (\textsf{id}~i_{4})
  \end{align*}
  Note that in the resulting term, the two occurrences of $\textsf{id}$ are applied to different universe levels. This illustrates the importance of the use of universe polymorphism in the translation.
\end{exa}

Let us now show the final correctness theorem for elaboration.

\begin{thm}[Correctness of elaboration] Let $\Phi'$ be a local signature well-formed in $\mathbb{UPP}$, and $A, t$ schematic terms.
  \begin{itemize}
    \item Suppose $() \uparrow \emptyset \vdash A \Leftarrow \Type\downarrow \mathcal{C}_{1}$ and $() \uparrow \mathcal{C}_{1} \vdash t \Leftarrow A \downarrow \mathcal{C}_{2}$ and $\theta=\textsc{Unify}(\mathcal{C}_{1} \cup \mathcal{C}_{2})$. Then  $\Phi', c : (\vec{i}_{A[\theta]} : \nLvl) \to A[\theta] := \vec{i}_{A[\theta]}. t[\theta][\vec{i}_{t[\theta]\setminus A[\theta]} \mapsto \nZero]$ is well-formed in~$\mathbb{UPP}$.
    \item Suppose $() \uparrow \emptyset \vdash A \Leftarrow \Type\downarrow \mathcal{C}_{1}$ and $\theta=\textsc{Unify}(\mathcal{C}_{1})$. Then  $\Phi', c : (\vec{i}_{A[\theta]} : \nLvl) \to A[\theta]$ is well-formed in $\mathbb{UPP}$.
  \end{itemize}
\end{thm}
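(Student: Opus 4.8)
The plan is to obtain both statements as essentially immediate consequences of Theorem~\ref{thm:elab-sound}, the definition of the refined judgments $\Gamma \vdash_{\mathcal{C}} \cdot$, the fact that $\textsc{Unify}$ returns a unifier, and the basic metaproperties (Proposition~\ref{prop:basic-meta}). Recall that, since $\Phi'$ is already well-formed in $\mathbb{UPP}$, it suffices to show that the new entry is valid in $(\Sigma_{\mathbb{UPP}},\Phi',\mathcal{R}_{\mathbb{UPP}},\mathcal{E}_{\mathbb{UPP}})$, i.e. that $\vdash \vec{i}_{A[\theta]}.\,t[\theta][\vec{i}_{t[\theta]\setminus A[\theta]}\mapsto\nZero] : (\vec{i}_{A[\theta]}:\nLvl)\to A[\theta]$ in the first case, and that $\vdash (\vec{i}_{A[\theta]}:\nLvl)\to A[\theta] : s$ in the second.

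First I would discharge the precondition needed to invoke Theorem~\ref{thm:elab-sound}: $()\vdash_{\emptyset}\Type:\Kind$ holds trivially, since for every level substitution $\theta$ the context $\Xi_\theta$ consists only of declarations $j:\nLvl$, so $\Xi_\theta\vdash$ (by \textsc{EmptyCtx} and \textsc{ExtCtxC}, using $\vdash\nLvl:\Type$) and hence $\Xi_\theta\vdash\Type:\Kind$ by \textsc{Sort}. Applying the check-mode part of Theorem~\ref{thm:elab-sound} to $()\uparrow\emptyset\vdash A\Leftarrow\Type\downarrow\mathcal{C}_1$ then gives $()\vdash_{\mathcal{C}_1}A:\Type$. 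For the first statement I use this as the precondition of a second application of the check-mode part, now to $()\uparrow\mathcal{C}_1\vdash t\Leftarrow A\downarrow\mathcal{C}_2$, obtaining $()\vdash_{\mathcal{C}_1\cup\mathcal{C}_2}t:A$. Since $\theta=\textsc{Unify}(\mathcal{C}_1\cup\mathcal{C}_2)$ is a unifier we have $\theta\vDash\mathcal{C}_1\cup\mathcal{C}_2$, hence also $\theta\vDash\mathcal{C}_1$; unfolding the definitions of $\vdash_{\mathcal{C}_1\cup\mathcal{C}_2}$ and $\vdash_{\mathcal{C}_1}$ at this $\theta$ yields $\Xi_\theta\vdash t[\theta]:A[\theta]$ and $\Xi_\theta\vdash A[\theta]:\Type$. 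Because every inserted level variable occurs in $A$ or in $t$, the context $\Xi_\theta$ contains exactly one declaration $j:\nLvl$ per free level variable of $A[\theta]$ or of $t[\theta]$; as all these entries are independent, we may order $\Xi_\theta$ as $\vec{i}_{A[\theta]}:\nLvl,\ \vec{i}_{t[\theta]\setminus A[\theta]}:\nLvl$.

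Next I would cull the extra level variables and re-abstract. Since $\vdash\nZero:\nLvl$ in any valid context (rule \textsc{ConsC}), applying the substitution property (Proposition~\ref{prop:basic-meta}) once per variable of $\vec{i}_{t[\theta]\setminus A[\theta]}$ gives $\vec{i}_{A[\theta]}:\nLvl\vdash t[\theta][\vec{i}_{t[\theta]\setminus A[\theta]}\mapsto\nZero] : A[\theta]$ and $\vec{i}_{A[\theta]}:\nLvl\vdash A[\theta]:\Type$, using that none of the substituted variables occurs free in $A[\theta]$, so $A[\theta]$ is literally unchanged. From $\vec{i}_{A[\theta]}:\nLvl\vdash A[\theta]:\Type$ I build $\vdash (\vec{i}_{A[\theta]}:\nLvl)\to A[\theta]:\Type$ by iterating \textsc{ArrowC}, and combining this with the typing of the term I build $\vdash \vec{i}_{A[\theta]}.\,t[\theta][\vec{i}_{t[\theta]\setminus A[\theta]}\mapsto\nZero] : (\vec{i}_{A[\theta]}:\nLvl)\to A[\theta]$ by iterating \textsc{AbsC} (the side conditions $\vdash\nLvl:\Type$ and well-typedness of each codomain being supplied by the previous line). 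This is the required validity, so $\Phi',c:\dots:=\dots$ is well-formed in $\mathbb{UPP}$. For the second statement one stops right after $()\vdash_{\mathcal{C}_1}A:\Type$: with $\theta=\textsc{Unify}(\mathcal{C}_1)$ a unifier, $\Xi_\theta$ lists exactly $\vec{i}_{A[\theta]}:\nLvl$ (there is no $t$), so $\vec{i}_{A[\theta]}:\nLvl\vdash A[\theta]:\Type$, and iterating \textsc{ArrowC} gives $\vdash (\vec{i}_{A[\theta]}:\nLvl)\to A[\theta]:\Type$, i.e. the entry $c:(\vec{i}_{A[\theta]}:\nLvl)\to A[\theta]$ is valid.

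There is no deep obstacle here, given Theorem~\ref{thm:elab-sound}: the argument is an assembly of that soundness result, the metaproperties, and the two framework rules \textsc{ArrowC} and \textsc{AbsC}. The only points requiring a little care are (i) identifying $\Xi_\theta$ concretely with the telescope of free level variables remaining after applying $\theta$, partitioned as $\vec{i}_{A[\theta]}$ and $\vec{i}_{t[\theta]\setminus A[\theta]}$, and (ii) the bookkeeping observation that the substitution $[\vec{i}_{t[\theta]\setminus A[\theta]}\mapsto\nZero]$ leaves $A[\theta]$ untouched, which is exactly what makes the optimization of dropping those level arguments sound.
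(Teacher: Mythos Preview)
Your proof is correct and follows essentially the same approach as the paper's: invoke Theorem~\ref{thm:elab-sound} to obtain $()\vdash_{\mathcal{C}_1\cup\mathcal{C}_2}t:A$, instantiate with the unifier $\theta$ to get $\Xi_\theta\vdash t[\theta]:A[\theta]$, apply the substitution property to send the level variables not in $A[\theta]$ to $\nZero$, and then abstract the remaining ones. You supply more bookkeeping detail (the precondition $()\vdash_\emptyset\Type:\Kind$, the identification of $\Xi_\theta$ with the free level variables of $A[\theta]$ and $t[\theta]$, and the explicit rule names), but the structure is identical.
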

\begin{proof} We show the first point, the proof of the second one being similar. By Theorem~\ref{thm:elab-sound}, we have $() \vdash_{\mathcal{C}_{1}\cup \mathcal{C}_{2}} t: A$, so because $\theta \vDash \mathcal{C}_{1}\cup \mathcal{C}_{2}$ we get $\Xi_{\theta}\vdash t[\theta] : A [\theta]$. Consider a substitution mapping all level variables in $\Xi_{\theta}$ not in $A[\theta]$ to $\nZero$. By the substitution property, we get $\vec{i}_{A[\theta]} : \nLvl \vdash t[\theta][\vec{i}_{t[\theta]\setminus A[\theta]}\mapsto \nZero] : A[\theta]$, and then by abstracting $\vec{i}_{A[\theta]}$ we get\[
  () \vdash \vec{i}_{A[\theta]}. t[\theta][\vec{i}_{t[\theta]\setminus A[\theta]}\mapsto \nZero] : (\vec{i}_{A[\theta]}:\nLvl) \to A[\theta]
  \]Therefore, $\Phi', c : (\vec{i}_{A[\theta]}:\nLvl) \to A[\theta] := \vec{i}_{A[\theta]}. t[\theta][\vec{i}_{t[\theta]\setminus A[\theta]}\mapsto\nZero]$ is~well-formed~in~$\mathbb{UPP}$.
\end{proof}

\section{Solving universe level unification problems}
\label{sec:solving}

Our elaborator relies on an unspecified algorithm for universe level unification, which we now present. Before going any further, let us recall that a unifier $\theta$ for a problem $\mathcal{C}$ is said to be a \textit{most general unifier} (abbreviated as \mgu{}) when, for any other unifier $\tau$ of $\mathcal{C}$, there is a substitution $\theta'$ such that $i[\theta][\theta']\simeq i[\tau]$ for all $i$ appearing in $\mathcal{C}$. When studying unification in an equational theory, the first natural question that comes to mind is whether all solvable unification problems have a most general unifier. Our first important observation is that, in the case of the equational theory of levels used in \UPP, this property does not hold.\footnote{It follows that universe level unification is not \textit{unitary}~\cite{baader1994unification}. Whether it is \textit{finitary}, \textit{infinitary} or of \textit{type zero} is still to be determined, and left for future work.}

\begin{thm}\label{no-mgu}
Not all solvable problems of universe level unification have a most general unifier.
\end{thm}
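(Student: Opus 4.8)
The plan is to exhibit a single concrete equation (or small problem) over universe levels that is solvable but whose set of solutions has no single most general element. A natural candidate is the equation $i \nMax j \peq \nSucc~k$, or the closely related $\nSucc~i \peq j \nMax k$. Such an equation is clearly solvable — e.g. $i \mapsto \nSucc~k$, $j \mapsto \nZero$, or $i\mapsto \nZero$, $j\mapsto \nSucc~k$ — but the "top" of one side is a successor while the other side is a join, so intuitively there is no way to pick one of the two disjuncts uniformly. First I would fix notation: to say $\theta$ is more general than $\tau$ means there is $\theta'$ with $i[\theta\theta']\simeq i[\tau]$ for every variable $i$ in the problem, where $\simeq$ is the arithmetic equality on levels guaranteed by Proposition~\ref{prop:semanticlvl}. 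So it is legitimate to reason about levels semantically, reading each level term with variables in $\{i,j,k,\dots\}$ as a function $\mathbb{N}^{V}\to\mathbb{N}$ built from $\max$, $+1$ and the constant $0$; two level terms are $\simeq$ exactly when they denote the same function.

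The key steps, in order, are: (1) State the candidate problem $\mathcal{C}=\{\,\nSucc~i \peq j \nMax k\,\}$ and record two particular solutions, say $\sigma_1 = (i\mapsto k,\ j\mapsto \nSucc~k)$ and $\sigma_2 = (i\mapsto k,\ j\mapsto \nZero,\ k'\text{ fresh}\dots)$ — more carefully, $\sigma_2 = (i\mapsto \nMax(k,\ell),\ j\mapsto \nSucc~\nMax(k,\ell),\ \dots)$; I would actually pick the cleanest pair, e.g. for the equation $\nSucc~i \peq j \nMax k$ the solutions $\tau_1 = (j\mapsto \nSucc~i,\ k\mapsto \nZero)$ and $\tau_2 = (j\mapsto \nZero,\ k\mapsto \nSucc~i)$. (2) Argue that any unifier $\theta$ of $\mathcal{C}$ satisfies, semantically, $1 + \theta(i) = \max(\theta(j),\theta(k))$ as functions on the naturals over the remaining free variables. (3) Suppose for contradiction $\theta$ is a \mgu{}. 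Then there are $\theta_1',\theta_2'$ with $\theta\theta_t' \simeq \tau_t$ on $\{i,j,k\}$ for $t=1,2$. (4) Derive the contradiction by a case analysis / evaluation argument: because $\theta(j)$ and $\theta(k)$ are fixed level expressions, one of them — say $\theta(j)$ — must, after composition with $\theta_1'$, become $\nSucc~i$, forcing $\theta(j)$ to already "dominate" in a way incompatible with $\theta\theta_2' \simeq \tau_2$, where instead $\theta(k)$ must dominate. Concretely: evaluate at a point. Since $\tau_1(j) = 1+\tau_1(i)$ and $\tau_1(k)=0$ while $\tau_2(j)=0$ and $\tau_2(k)=1+\tau_2(i)$, and since $\theta$ is fixed, the functions $\theta(j),\theta(k)$ cannot both specialise (via substitution, which can only increase or rename, never truly decrease the value at a generic point) to these opposite profiles. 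The cleanest form of this is: a fixed level expression $L$ in the image of $\theta$ has the property that $L[\theta'] \geq L$ pointwise-ish — more precisely substitution into a $\{0,\max,+1\}$-term is monotone, so if $\theta(j)[\theta_2'] \simeq 0$ then $\theta(j)\simeq\nZero$, and then $\theta(j)[\theta_1']\simeq\nZero\neq\nSucc~i$, contradiction.

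I expect step (4) — making the "no uniform choice between the two disjuncts" intuition into a rigorous contradiction — to be the main obstacle. The subtlety is that a \mgu{} $\theta$ need not literally send $j$ to $\nZero$ or to $\nSucc~i$; it could send it to something in fresh variables, and the specialising substitutions $\theta_t'$ do the work. The right tool is a monotonicity/normal-form lemma for level expressions: every level term is $\simeq$-equal to one of the form $\max(n_0,\ n_1 + i_1,\ \dots,\ n_m + i_m)$ with distinct variables $i_r$ and coefficients $n_r\in\mathbb{N}$ (this is essentially the content behind Proposition~\ref{prop:semanticlvl}), and substitution acts by substituting into the $i_r$, which can only raise the "constant floor" and never delete a variable's contribution below what a fresh assignment forces. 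Using this normal form, I would show that no single such expression for $\theta(j)$ can be specialised both to $\nSucc~i$ (needs a variable contribution with a positive coefficient that survives) and to $\nZero$ (needs \emph{no} surviving variable contribution and zero floor), which is the contradiction. Finally, I would remark that this already refutes unitarity, matching the footnote, and — if desired — note that the same example reappears later when discussing which single equations \emph{do} admit a \mgu{}.
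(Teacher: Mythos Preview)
Your candidate equation $\nSucc~i \peq j \nMax k$ is exactly the one the paper uses, and the overall shape --- assume a \mgu{} $\theta$, instantiate by two concrete unifiers, derive a contradiction --- is also the same. But step (4) as you have it does not go through.

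The claim ``if $\theta(j)[\theta_2'] \simeq \nZero$ then $\theta(j)\simeq\nZero$'' is false: take $\theta(j)$ to be a fresh variable $x$; then $x[x\mapsto\nZero]\simeq\nZero$ while $x\not\simeq\nZero$. What the hypothesis $\theta(j)[\theta_2']\simeq\nZero$ actually buys you is only that $\theta(j)$ is \emph{flat} (contains no $\nSucc$, i.e.\ its canonical form has all coefficients equal to $0$). And a flat expression \emph{can} perfectly well be specialised to $\nSucc~i$: again $x$ with $x\mapsto\nSucc~i$. So your final contradiction ``no single expression for $\theta(j)$ can be specialised both to $\nSucc~i$ and to $\nZero$'' is simply not a contradiction.

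The missing idea is that you must exploit flatness of \emph{both} $\theta(j)$ and $\theta(k)$, and then play them off against the $\nSucc$ on the other side rather than against each other. Concretely: from $\tau_2=(j\mapsto\nZero,\,k\mapsto\nSucc~i)$ you get that $\theta(j)$ is flat, and symmetrically from $\tau_1=(j\mapsto\nSucc~i,\,k\mapsto\nZero)$ that $\theta(k)$ is flat. Now compose $\theta$ with the substitution sending every remaining variable to $\nZero$: since $\theta(j)$ and $\theta(k)$ are flat, both collapse to $\nZero$, so $(j\nMax k)[\theta][\_\mapsto\nZero]\simeq\nZero$, whereas $(\nSucc~i)[\theta][\_\mapsto\nZero]$ is at least $\nSucc~\nZero$. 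This contradicts $\theta$ being a unifier. That is precisely the paper's argument; your plan had the right two witnesses but drew the wrong conclusion from them.
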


\begin{proof}
  Consider the equation $  \nSucc~i_1 = i_2 \nMax i_3 $, which is solvable, and suppose it  had a \mgu{}~$ \theta $. Note that $ \theta_1 = i_{1}\mapsto \nZero,~ i_{2}\mapsto \nSucc~\nZero,~   i_{3}\mapsto\nZero $ is also  a unifier, thus for some~$ \tau $ we have $ i_3[\theta] [\tau] \simeq \nZero $. Therefore, there can be no occurrence of $ \nSucc $ in $ i_3[\theta] $. By taking $ \theta_2 = i_1 \mapsto \nZero, i_2 \mapsto \nZero, i_3 \mapsto \nSucc~\nZero $ we can show similarly that there can be no occurrence of $ \nSucc $ in $ i_2[\theta] $. But by taking the substitution $ \theta' = \_ \mapsto \nZero$ mapping all variables to $ \nZero $, we get $ (i_2 \nMax i_3)[\theta][\theta'] \simeq \nZero $, which cannot be equivalent to $ (\nSucc~i_1)[\theta][\theta'] $. Hence, $ \nSucc~i_1 = i_2 \nMax i_3 $ has~no~\mgu{}
\end{proof}

Therefore, one cannot expect to be able to compute a \mgu{} for all solvable problems of universe level unification. One can then also wonder if, by restricting to the fragment of problems generated by the elaborator, one can expect to recover the property that all solvable problems admit a \mgu{} The following result also answers this negatively.

\begin{thm}\label{thm:term-no-mgu}
There is a schematic term whose constraints computed by the elaborator are solvable but have no most general unifier.
\end{thm}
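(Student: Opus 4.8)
The plan is to exhibit a concrete schematic term, obtained by erasing (via Figure~\ref{fig:erase}) a carefully chosen signature entry that is well-formed in $\mathbb{I}$, and to show that the constraint set produced for it by the rules of Figure~\ref{fig:bidi} is solvable but has no \mgu{} The guiding idea is to modify the $\textsf{id}'$ example: there the constraint set was $\{\nSucc~i_1 \peq i_2,\, i_1 \peq i_3,\, i_1 \peq i_4,\dots\}$, which admits a \mgu{} precisely because the $\nSucc$ lands on a variable ($i_2$) that is otherwise unconstrained. The modification must instead arrange that the variable carrying the $\nSucc$ is forced, by the surrounding elaboration, to be a $\nMax$ of two further \emph{independent} level variables, so that (after propagating the trivial identifications) $\mathcal{C}$ is equivalent to a singleton of the shape $\{\nSucc~i \peq j \nMax k\}$ with $i,j,k$ pairwise distinct and not otherwise constrained --- which is exactly the unsolvable-for-\mgu{} situation established in Theorem~\ref{no-mgu}.

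Concretely I would proceed as follows. (1)~Write down the source entry $c : A := t$ in $\mathbb{I}$ and its erasure $|A|,|t|$, spelling out every inserted level variable, including those hidden by the implicit-argument convention (the interaction to be engineered is that a single convertibility check compares a subterm whose level is $\nSucc$ of a universe index against a dependent product, whose level is a $\nMax$). (2)~Run the elaborator, deriving $() \uparrow \emptyset \vdash |A| \Leftarrow \Type \downarrow \mathcal{C}_1$ and $() \uparrow \mathcal{C}_1 \vdash |t| \Leftarrow |A| \downarrow \mathcal{C}_2$, and read off $\mathcal{C} := \mathcal{C}_1 \cup \mathcal{C}_2$. (3)~Check that every element of $\mathcal{C}$ other than the distinguished equation $\nSucc~i \peq j \nMax k$ is of the harmless form $a \peq b$ (a variable-to-variable identification); eliminating these solved variables reduces $\mathcal{C}$, up to renaming, to the singleton $\{\nSucc~i \peq j \nMax k\}$, so $\mathcal{C}$ has a \mgu{} if and only if that singleton does. (4)~For solvability, exhibit one unifier explicitly (immediate once $\mathcal{C}$ is known); for the \mgu{} part, adapt the argument of Theorem~\ref{no-mgu}: a putative \mgu{} $\theta$ must unify $\nSucc~i \peq j \nMax k$, the unifiers sending $j$ (resp. $k$) to a $\nSucc$-free value force $j[\theta]$ (resp. $k[\theta]$) to be $\nSucc$-free, and then composing $\theta$ with the substitution mapping every variable to $\nZero$ contradicts $\nSucc~(i[\theta]) \simeq j[\theta] \nMax k[\theta]$.

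The main obstacle is steps~(2)--(3): constructing the source entry so that the elaborator genuinely emits the distinguished equation with three mutually independent variables, while verifying that none of the bookkeeping constraints --- in particular those coming from the implicit $\nPi_{\_,\_}$, $\nLam_{\_,\_}$, $\nApp_{\_,\_}$ annotations --- accidentally pins down $i$, $j$ or $k$ further and thereby restores a \mgu{} This is the delicate part, since one must force a $\ctb{U}$ (contributing the $\nSucc$) and a $\nPi$ (contributing the $\nMax$) to meet inside one convertibility check while keeping all other generated constraints inert. Once the right entry is pinned down, the remaining work is routine: the constraint computation is a direct application of the rules of Figure~\ref{fig:bidi} and of Lemma~\ref{lem:sound}/Theorem~\ref{thm:elab-sound}, and the no-\mgu{} argument is a straightforward adaptation of the one already given for Theorem~\ref{no-mgu}.
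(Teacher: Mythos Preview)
Your proposal is correct and takes essentially the same approach as the paper: generate the constraint $\nSucc~i \peq j \nMax k$ by forcing a conversion check between a universe $\ctb{U}$ (contributing the $\nSucc$) and a dependent product $\nPi$ (contributing the $\nMax$), then invoke Theorem~\ref{no-mgu}. The paper's concrete witness is the schematic term
\[
\nLam A : \U_{i_{1}}. \nLam B : \U_{i_{2}}. \nLam R : (\nPi C : \U_{i_{3}}. C \nTo C \nTo \U_{i_{4}}). R_{\nApp} \U_{i_{6}} {}_{\nApp} \U_{i_{7}} {}_{\nApp} (A \nTo B),
\]
elaborated in mode infer, where the application of $R_{\nApp}\U_{i_6}$ to both $\U_{i_7}$ and $A\nTo B$ produces exactly $\nSucc~i_7 \peq i_1 \nMax i_2$. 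Two minor points: the paper gives the schematic term directly rather than erasing an $\mathbb{I}$-entry, and not all auxiliary constraints are literally variable-to-variable (e.g.\ $\U_{i_6}:\Tm~\U_{i_3}$ yields $\nSucc~i_6\peq i_3$), but they all admit \mgu{}s and are eliminated first, so your reduction step~(3) goes through with that small adjustment.
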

\begin{proof}
  Consider the following term (once again, we reuse our convention of keeping some arguments implicit).
\begin{align*}
\nLam A : \U_{i_{1}}. \nLam B : \U_{i_{2}}. \nLam R : (\nPi C : \U_{i_{3}}. C \nTo C \nTo \U_{i_{4}}). R_{\nApp} \U_{i_{6}} {}_{\nApp} \U_{i_{7}} {}_{\nApp} (A \nTo B)
\end{align*}

If we try to elaborate it in mode infer in the empty context, we get a unification problem that can be simplified to $\{\nSucc~i_{7} \peq i_{1} \nMax i_{2}\}$, after solving some easy equations.\footnote{For instance, by using the algorithm of Figure~\ref{fig:unification}.} This is because the application of $R_{\nApp} \U_{i_{6}} : \Tm~(\U_{i_{6}} \nTo \U_{i_{6}} \nTo \U_{i_{4}})$ to $\U_{i_{7}}$ and $(A \nTo B)$ requires the last two to be in the same universe level, which are respectively $\nSucc~i_{7}$ and $i_{1}\nMax i_{2}$. This equation is solvable but, by Theorem~\ref{no-mgu}, does not admit a most general unifier.
\end{proof}
\begin{rem}
In the above proof, one can alternatively verify the calculation of constraints automatically in \textsc{Agda} by typechecking the code
\begin{align*}
&\textsf{test} : (A : \textsf{Set} ~\_) \to (B : \textsf{Set} ~\_) \to (R : (C : \textsf{Set} ~\_) \to C \to C \to \textsf{Set} ~\_) \to \textsf{Set} ~\_\\
&\textsf{test} = \lambda  A ~ B~ R \to  R~ (\textsf{Set} ~\_)~ (\textsf{Set} ~\_)~ (A \to B)
\end{align*}
which returns the error
\begin{center}
  \texttt{Failed to solve the following constraints: _0 $\sqcup$ _1 = lsuc _10}
\end{center}showing that \textsc{Agda}'s elaborator also simplifies the problem to find the same constraint.
\end{rem}

In other words, some schematic terms may not admit a most general universe-polymorphic instance, even when they admit some well-typed instances. A possible strategy would be to look not for a \mgu{}, but instead for a minimal set of incomparable unifiers, as is often done in the equational unification literature~\cite{baader1994unification}. However, this would not only require to duplicate each term being translated, one for each incomparable unifier, but this strategy would also risk of growing the output size exponentially. Indeed, a term using a previous translated entry that was duplicated $n$ times would then need to be elaborated multiple times, once with each of these $n$ variants.

Therefore, we instead insist in looking only for \mgu{}s, even if by Theorem~\ref{thm:term-no-mgu} this approach can fail when the problem is solvable but does not admit a \mgu{} To do this, we proceed as follows in this section.

Our main contribution, given in Subsection~\ref{subsec:characterization}, is a complete characterization of the equations $l \peq l'$ that admit a most general unifier. More precisely, our result says exactly when such an equation (1) admits a \mgu{}, in which case we also have an explicit description of one, (2) does not admit any unifier, or (3) admits some unifier but no most general one.

Our characterization yields an algorithm for solving equations which is complete, in the sense that we can always find some \mgu{} when the equation admits one. However, because we are interested in unification problems that may contain multiple equations, in Subsection~\ref{subsec:unification} we then apply this characterization in the design of a partial algorithm using a \textit{constraint-postponing} strategy~\cite{ziliani2017comprehensible, dowek1996unification, reed2009higher}: at each step, we look for an equation which admits a \mgu{} and eliminate it, while applying the obtained substitution to the other constraints. This can then bring new equations to the fragment admitting a \mgu{}, allowing us to solve them next. This is similar to how most proof assistants handle higher-order unification problems, by trying to solve the equations that are in the \textit{pattern} fragment, in the hope of unblocking some other ones in the process.

\subsection{Properties of levels} Before presenting our main results, we first start by reviewing some important properties about universe levels that will be useful in our proofs.

\begin{nota}
We adopt new notation conventions to improve the readability of large level expressions. In the following, we write $n + l$ for the level $\nSucc^{n}~l$, $n$ for the level $\nSucc^{n}~\nZero$, and we drop the \ctb{blue} color in $\sqcup$. For instance, the level $\nSucc~\nZero \nMax \nSucc~(\nSucc~i \nMax \nZero) $ will henceforth be written $1 \sqcup 1 + (1+i \sqcup 0)$ --- note that $+$ binds tighter then $\sqcup$, and that the left argument of $+$ is always a natural number, so this expression can be parsed unambiguously.
\end{nota}

In order to be able to compare levels syntactically, it is useful to introduce a notion of canonical form. A level is said to be in \textit{canonical form}~\cite{UPTS, guillaume} when it is of the form\[
  p \sqcup n_{1} + i_{1} \sqcup ... \sqcup n_{m} + i_{m}
\]with $n_{k}\leq p$ for all $ k=1..m$, and each variable occurs only once. In this case we call $p$ the \textit{constant coefficient}, and $n_{k}$ the \textit{coefficient of $i_{k}$}. We recall the following fundamental property, which appears in~\cite{UPTS, guillaume, blanqui22fscd}, and which we reprove here for completeness reasons.

\begin{thm}\label{thm:nf}
Every level is equivalent to a canonical form, which is unique modulo associativity-commutativity. Moreover, there is a computable function mapping each level to one of its canonical forms.
\end{thm}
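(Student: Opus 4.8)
The plan is to prove the three assertions in turn: existence of a canonical form, uniqueness modulo associativity-commutativity, and computability. I would start with existence, proceeding by structural induction on the grammar of levels $l ::= i \mid \nZero \mid \nSucc~l \mid l \nMax l'$. The base cases $\nZero$ (canonical form $0$, i.e.\ $p=0$ with no variables) and $i$ (canonical form $0 \sqcup 0 + i$, i.e.\ $p=0$, $n_1 = 0$, $i_1 = i$) are immediate. For $\nSucc~l$, by the induction hypothesis $l$ is equivalent to a canonical form $p \sqcup n_1 + i_1 \sqcup \dots \sqcup n_m + i_m$; using the equation $\nSucc~(i_1 \nMax i_2) \simeq \nSucc~i_1 \nMax \nSucc~i_2$ repeatedly, $\nSucc~l$ is equivalent to $(p+1) \sqcup (n_1+1) + i_1 \sqcup \dots \sqcup (n_m+1)+i_m$, which is canonical since $n_k \le p$ implies $n_k + 1 \le p+1$. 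For $l \nMax l'$, apply the induction hypothesis to get canonical forms for both, then merge: the result is the $\nMax$ of the two, which after using associativity and commutativity of $\nMax$ collects all the summands; when a variable $i$ occurs in both with coefficients $n$ and $n'$, the identities $i \nMax i \simeq i$ and $i \nMax \nSucc~i \simeq \nSucc~i$ (together with $i \nMax \nZero \simeq i$ and monotonicity, which follows since $a \nMax (a \nMax b) \simeq a \nMax b$) let us replace $n + i \sqcup n' + i$ by $\max\{n,n'\} + i$; finally the new constant coefficient is $\max\{p, p'\}$, and we must check every surviving variable coefficient is $\le \max\{p,p'\}$, which holds because each was $\le p$ or $\le p'$. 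This establishes existence.

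For uniqueness, I would appeal to Proposition~\ref{prop:semanticlvl} (referenced just before the theorem as guaranteeing that two levels are convertible exactly when arithmetically equivalent): interpret a level $l$ with variables among $i_1, \dots, i_r$ as the function $\llbracket l \rrbracket : \mathbb{N}^r \to \mathbb{N}$ obtained by reading $\nZero$ as $0$, $\nSucc$ as successor, $\nMax$ as $\max$, and show $l \simeq l'$ iff $\llbracket l \rrbracket = \llbracket l' \rrbracket$ as functions. Soundness (equivalent levels denote the same function) is a routine check that each of the six defining equations is a valid identity over $(\mathbb{N}, \max, \operatorname{succ}, 0)$. Completeness is exactly the existence part just proved, reducing the question to canonical forms; so it suffices to show that two canonical forms denoting the same function are equal up to reordering of the summands. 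Given a canonical form $p \sqcup n_1 + i_1 \sqcup \dots \sqcup n_m + i_m$, one recovers $p$ by evaluating at $(0,\dots,0)$, and recovers the coefficient $n_k$ of $i_k$ (when $i_k$ actually appears, i.e.\ is not dominated) by evaluating at a point where $i_k$ is sent to a large value $N \gg p$ and all others to $0$: the result is $\max\{p, n_k + N\} = n_k + N$, so $n_k$ is determined; a variable is absent from the canonical form precisely when it does not affect the function at all, which is an intrinsic property of $\llbracket l \rrbracket$. Hence the multiset $\{(n_1, i_1), \dots, (n_m, i_m)\}$ and the number $p$ are invariants of the function, giving uniqueness modulo associativity-commutativity.

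For computability, the existence proof is already constructive: reading it as a recursive procedure over the syntax of $l$ — with the merge step sorting variables, taking pointwise maxima of coefficients, and taking the max of the constant coefficients — yields a terminating algorithm (termination is by structural recursion on $l$) that outputs a canonical form. I expect the main obstacle to be the careful bookkeeping in the $l \nMax l'$ case of existence, namely verifying that after merging duplicate variables and combining the two constant coefficients the side condition $n_k \le p$ is preserved for every variable; this is where the interplay of the identities $i \nMax i \simeq i$, $i \nMax \nSucc~i \simeq \nSucc~i$ and $i \nMax \nZero \simeq i$ must be handled precisely, and where one must be sure no variable is silently dropped or kept with too large a coefficient. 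Everything else — the successor case, soundness of the denotational reading, and extracting invariants from $\llbracket l \rrbracket$ — is routine.
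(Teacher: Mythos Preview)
Your proposal is correct and close in spirit to the paper's proof, though the existence argument is organized differently: you proceed by structural induction on the level grammar, whereas the paper applies a global normalization procedure (first replacing every variable $i$ by $i \sqcup 0$, then pushing $\nSucc$ inward via $1+(l\sqcup l') \simeq 1+l \sqcup 1+l'$, then merging duplicates using $n+i \sqcup i \simeq n+i$, proved by induction on~$n$). The paper's upfront insertion of $i \sqcup 0$ is a neat trick that guarantees the constant coefficient dominates all variable coefficients without tracking that invariant through an induction; your version instead verifies the side condition $n_k \le p$ case by case, which is equally valid but slightly more bookkeeping. For uniqueness, the two arguments are essentially identical: distinguish canonical forms by evaluating at suitable points (you via the denotation $\llbracket - \rrbracket$, the paper via substitutions mapping the relevant variable to a large value and the others to~$0$).

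One caution: you frame the uniqueness part as ``appealing to Proposition~\ref{prop:semanticlvl}'', but in the paper that proposition is stated and proved \emph{after} the present theorem, and its completeness direction explicitly invokes canonical forms. Citing it as already established would therefore be circular. In fact you do not need it: your argument only uses the soundness direction ($l \simeq l' \Rightarrow \llbracket l\rrbracket = \llbracket l'\rrbracket$), which is an independent and trivial check of the six defining equations, together with the recovery-of-coefficients argument on canonical forms. So the substance of your proof is sound, but the reference to Proposition~\ref{prop:semanticlvl} should be dropped or rephrased as a forward remark rather than a dependency.
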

\begin{proof}
  Given a level $l$, we first replace each variable $i$ by $i \sqcup 0$ (which are convertible levels). Then, by repeatedly applying $1+(l \sqcup l') \simeq 1 + l\sqcup 1 + l'$, we get a level of the form $l_{1}\sqcup ... \sqcup l_{p}$, in which each $l_{k}$ is either of the form $n_{k}+i_{k}$ or $n_{k}$. Note that we can easily show $n + i \sqcup i \simeq n + i$ for all $n \in \mathbb{N}$, by induction on $n$. Using this equation, we can merge all constant coefficients, and then all coefficients of a same variable, by always taking the maximum between them. Because in the beginning we started by replacing each variable $i$ by $i \sqcup 0$, it follows that the constant coefficient of the resulting level must be greater or equal to all variable coefficients, hence it is in canonical form.

  To see that the canonical form is unique modulo associativity-commutativity, it suffices to note that if two canonical forms have different coefficients for a variable $i$, then by applying a substitution mapping $i$ to some $n$ large enough and the other variables to $0$ we get two levels which are not convertible, hence the canonical forms we started with could not have been convertible. Similarly, if the constant coefficients are different, it suffices to take the substitution mapping all variables to $0$, which then also yields non-convertible levels.
\end{proof}

\newcommand{\at}[1]{\langle #1 \rangle}

We hence get the following theorem, also in \cite{UPTS, guillaume, blanqui22fscd}.

\begin{cor}\label{cor:lvl-decidable}
The equational theory $\simeq$ is decidable for levels.
\end{cor}

\newcommand\varz{j_{\nZero}}

In view of Theorem \ref{thm:nf} and the notion of canonical form, we introduce the following notation: given a level $l$, we write $l\at{i}$ for the coefficient of $i$ in its canonical form, and set it to  $-\infty$ if $i\not\in\fv(l)$. We extend this notation to $l\at{\nZero}$, which denotes the constant coefficient of the canonical form of $l$. Note that from the definition of canonical forms, we always have $l\at{\nZero}\neq -\infty$, and $l\at{\nZero}\geq l\at{i}$ for all $i$, and $l\at{i} \neq -\infty$ only for finitely many $i$ --- and moreover, an assignment  $\mathcal{I}\cup \{\nZero\} \to \mathbb{N}\cup \{-\infty\}$ defines a valid canonical form exactly when these conditions are met.

In the following, let $\varz$ stand for either a variable $j$ or the constant $\nZero$.  Then Theorem \ref{thm:nf} says exactly that $l \simeq l'$ iff for all $\varz$ we have $l\at{\varz}=l'\at{\varz}$. This principle will be very useful when proving or disproving that two levels are equivalent.

The definition of $\simeq$ can now be justified by the following property. Given a function $\phi$ mapping each confined variable to a natural number, define the interpretation $\trans{l}_{\phi}$ of a level $ l $ by interpreting the symbols $ \nZero, \nSucc $ and $ \nMax $  as zero, successor and max, and by interpreting each variable $i$ by $\phi(i)$.

\begin{prop}\label{prop:semanticlvl}
We have $l_{1} \simeq l_{2}$ iff $ \forall \phi, \trans{l_{1}}_{\phi} = \trans{l_{2}}_{\phi}$.
\end{prop}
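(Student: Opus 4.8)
The plan is to prove the two implications separately, using the canonical forms provided by Theorem~\ref{thm:nf} as the bridge for the harder direction.

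For the left-to-right implication I would argue by induction on the derivation of $l_1 \simeq l_2$, that is, on the reflexive-symmetric-transitive closure by context and substitution of the six equations generating $\simeq$. Reflexivity, symmetry and transitivity are immediate, and closure under context and substitution is handled by two routine observations: that $\trans{\cdot}_\phi$ is compatible with the level constructors (so replacing a subterm by one with the same interpretation preserves the interpretation), and that $\trans{l[\theta]}_\phi = \trans{l}_{\phi'}$ where $\phi'(i) = \trans{\theta(i)}_\phi$ (so an instance of a defining equation is interpreted like the equation itself). The only content is therefore the base case: each defining equation of $\mathcal{E}_{\mathbb{UPP}}$ — associativity and commutativity of $\nMax$, distributivity $\nSucc~(i_1 \nMax i_2) \simeq \nSucc~i_1 \nMax \nSucc~i_2$, $i \nMax \nZero \simeq i$, $i \nMax \nSucc~i \simeq \nSucc~i$, and $i \nMax i \simeq i$ — becomes, after interpreting $\nZero, \nSucc, \nMax$ as zero, successor and $\max$, a true arithmetic identity (e.g.\ $\max(x,x+1) = x+1$ and $\max(x,x) = x$). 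Hence $\trans{l_1}_\phi = \trans{l_2}_\phi$ for every $\phi$.

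For the right-to-left implication, suppose $\trans{l_1}_\phi = \trans{l_2}_\phi$ for all $\phi$. By Theorem~\ref{thm:nf} each $l_k$ is equivalent to a canonical form $\hat l_k$, and by the implication just proved $\trans{\hat l_k}_\phi = \trans{l_k}_\phi$; so it suffices to show that two canonical forms with the same interpretation at every $\phi$ are syntactically equal modulo associativity-commutativity, for then $\hat l_1 \simeq \hat l_2$ and hence $l_1 \simeq \hat l_1 \simeq \hat l_2 \simeq l_2$. For this, I would recover all coefficients of the canonical form of a level $l$ from the function $\phi \mapsto \trans{l}_\phi$: taking $\phi$ constantly $0$ yields $l\at{\nZero}$ (since the interpretation of $p \sqcup n_1 + i_1 \sqcup \dots \sqcup n_m + i_m$ at the zero assignment is $p$, using $n_k \le p$); and taking $\phi_N$ which sends a fixed variable $i$ to $N$ and every other variable to $0$, the interpretation equals $N + l\at{i}$ once $N > l\at{\nZero}$ if $i \in \fv(l)$, and stays equal to $l\at{\nZero}$ for all $N$ otherwise — so $l\at{i}$ is determined in both cases. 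This is precisely the computation used in the uniqueness part of the proof of Theorem~\ref{thm:nf}. Hence $\trans{\hat l_1}_\phi = \trans{\hat l_2}_\phi$ for all $\phi$ gives $\hat l_1\at{\varz} = \hat l_2\at{\varz}$ for every $\varz$, and since a canonical form is determined up to associativity-commutativity by its family of coefficients, $\hat l_1$ and $\hat l_2$ agree modulo AC.

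I do not expect any genuine obstacle here: the forward direction is a bookkeeping induction in which every base case is a trivial arithmetic fact, and the backward direction is a direct reuse of the coefficient-extraction argument already established for Theorem~\ref{thm:nf}. The only mildly delicate point is making "for $N$ large enough" precise and checking that the data recovered from the semantics really is the coefficient family of a genuine canonical form, but both are settled by the characterization of valid canonical forms recalled just before the statement.
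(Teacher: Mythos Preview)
Your proposal is correct and follows essentially the same approach as the paper: an easy induction on $l_1 \simeq l_2$ for the forward direction, and for the converse reducing to canonical forms via Theorem~\ref{thm:nf} and recovering their coefficients by evaluating at suitably chosen valuations. You spell out more details (the substitution lemma for $\trans{\cdot}_\phi$ and the explicit choice of $\phi$), but the strategy is identical.
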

\begin{proof}

  Note that for each $l \simeq l' \in \mathcal{E}_{\mathbb{UPP}}$ we have $\trans{l}_{\phi}=\trans{l'}_{\phi}$ for all $\phi$, and thus the direction $\Rightarrow$ can be showed by an easy induction on $l_{1}\simeq l_{2}$.

  For the other direction, let us take the canonical forms $l_{1}'$ of $l_{1}$ and $l_{2}'$ of $l_{2}$. By the left to right implication, we have $\trans{l_{1}}_{\phi}=\trans{l_{1}'}_{\phi}$ and $\trans{l_{2}}_{\phi}=\trans{l_{2}'}_{\phi}$ for all $\phi$, hence $\trans{l_{1}'}_{\phi}=\trans{l_{2}'}_{\phi}$ for all $\phi$. By varying $\phi$ over suitable valuations we can show that $l_{1}'$ and $l_{2}'$ have the same constant coefficients, and that each variable appearing in one also appears in the other with the same coefficient. Therefore, $l_{1}'$ and $l_{2}'$ are equal modulo associativity-commutativity, and thus $l_{1} \simeq l_{2}$.
\end{proof}

In other words, $\simeq$ allows one to simplify level expressions which are semantically the same --- for instance, $1+ i \sqcup i \sqcup 0$ and $1+i$. This also shows that our definition of $\simeq$, which is also used in \cite{agda}, agrees with the one used in other works about universe levels~\cite{guillaume, UPTS, gaspard, blanqui22fscd}.

\subsection{Characterizing equations that admit a m.g.u}\label{subsec:characterization} With the preliminaries now set up, we can  move to the main contribution of this section: a characterization of the equations that admit a most general unifier, along with an explicit description of a \mgu{} in these cases. Our first step is to introduce a notion of canonical form for equations.

\begin{defi}
  An equation $ l_1 \peq  l_2 $ is said to be in canonical form if
  \begin{enumerate}
  \item Both $ l_1, l_2 $ are in canonical form.
  \item If $ i \in \fv(l_{1}) \cap \fv(l_2) $, then $ l_{1}\at{i} = l_{2}\at{i}$
  \item At least some coefficient in $l_{1}$ or $ l_{2}$ is equal to $ 0 $
  \end{enumerate}
\end{defi}

The main motivation for introducing this notion is the following result, stating that in our analysis it suffices to consider only equations in canonical form.

\begin{prop}
For all equations $l_{1}\peq l_{2}$, there is an equation  $l_{1}' \peq l_{2}'$ in canonical form, such that for all $\theta$, $l_{1}[\theta]\simeq l_{2}[\theta]$ iff $l_{1}'[\theta] \simeq l_{2}'[\theta]$.
\end{prop}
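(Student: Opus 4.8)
The plan is to transform an arbitrary equation $l_1 \peq l_2$ into one in canonical form by a sequence of solution-preserving rewrites, organized in three phases; since each individual rewrite will satisfy ``for all $\theta$, $l_1[\theta]\simeq l_2[\theta]$ iff the rewritten equation is satisfied by $\theta$'', composing finitely many of them gives the statement. \textbf{Phase 1} replaces $l_1$ and $l_2$ by their canonical forms, obtained via Theorem~\ref{thm:nf}; as $\simeq$ is stable under substitution this does not change the solution set, and condition~(1) now holds.

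\textbf{Phase 2} enforces condition~(2). The key lemma is: if $l_1 = m_1 + i \sqcup r_1$ and $l_2 = m_2 + i \sqcup r_2$ are both canonical with $i \notin \fv(r_1)\cup\fv(r_2)$ and $m_1 < m_2$, then for every $\theta$, $l_1[\theta]\simeq l_2[\theta]$ iff $r_1[\theta]\simeq l_2[\theta]$. I would prove this through Proposition~\ref{prop:semanticlvl}, arguing pointwise on each valuation $\phi$: writing $x=\trans{i[\theta]}_\phi$, $a=\trans{r_1[\theta]}_\phi$, $b=\trans{r_2[\theta]}_\phi$, the equality $\max(m_1+x,a)=\max(m_2+x,b)$ forces $a\ge m_1+x$ (otherwise the left side would be $m_1+x$, strictly below $m_2+x\le$ the right side), hence the left side equals $a$; conversely, if $a=\max(m_2+x,b)$ then $a\ge m_2+x\ge m_1+x$, so $\max(m_1+x,a)=a$ as well. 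Now, whenever condition~(2) fails there is a variable $i$ with $l_1\at{i}$ and $l_2\at{i}$ both finite and distinct; after possibly swapping the two sides (which preserves solutions, $\simeq$ being symmetric) we may assume $l_1\at{i} < l_2\at{i}$, and then, writing $l_1 = l_1\at{i} + i \sqcup r_1$ with $r_1$ the remaining summands of the canonical form (so $i\notin\fv(r_1)$), the lemma lets us replace $l_1\peq l_2$ by $r_1\peq l_2$. Both $r_1$ and $l_2$ are still in canonical form, the coefficients of all other variables on each side are unchanged, and $i$ no longer occurs on both sides; hence the finite set of variables witnessing a failure of condition~(2) strictly shrinks, and after finitely many such steps condition~(2) holds.

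\textbf{Phase 3} enforces condition~(3). If some coefficient (constant or variable) appearing in $l_1$ or $l_2$ is $0$ we are done. Otherwise every such coefficient is $\ge 1$, so we may write $l_k = 1 + \tilde{l}_k$ where $\tilde{l}_k$ decrements every coefficient of $l_k$; this is a genuine equality of canonical forms using $\nSucc~(a\sqcup b)\simeq\nSucc~a\sqcup\nSucc~b$, and since $(1+\tilde{l}_k)[\theta]=1+\tilde{l}_k[\theta]$ and $\nSucc$ is injective for $\simeq$ (again by Proposition~\ref{prop:semanticlvl}), replacing $l_1\peq l_2$ by $\tilde{l}_1\peq\tilde{l}_2$ preserves the solution set. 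Decrementing all coefficients uniformly preserves conditions~(1) and~(2) and strictly decreases the largest coefficient, so after finitely many steps some coefficient is $0$ and condition~(3) holds. Composing the rewrites of the three phases yields the required $l_1'\peq l_2'$.

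The main obstacle is Phase~2: finding the rewrite $m_1 + i \sqcup r_1 \peq m_2 + i \sqcup r_2 \leadsto r_1 \peq l_2$ and proving, via the semantic characterization of Proposition~\ref{prop:semanticlvl}, that it is both sound and complete for solutions, together with the termination argument (the measure being the number of variables violating condition~(2)). Phases~1 and~3 are routine manipulations of canonical forms; the only subtlety in the overall structure is that a Phase~2 rewrite can destroy condition~(3) (it removes the summand $m_1+i$, whose coefficient might be the witnessing $0$), so Phase~3 must be performed last.
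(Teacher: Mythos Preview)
Your proof is correct and follows essentially the same three-phase approach as the paper: put both sides in canonical form, then drop the smaller-coefficient occurrence of any shared variable (justified via the semantic characterization, Proposition~\ref{prop:semanticlvl}), then subtract off the common successor prefix. The only cosmetic difference is that the paper subtracts the minimal coefficient in one step rather than decrementing by $1$ iteratively, and your explicit remark that Phase~2 can destroy condition~(3) (so Phase~3 must come last) is a nice clarification that the paper leaves implicit.
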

\begin{proof}
  Let $l_{1}\peq l_{2}$ be any equation. We apply transformations so that properties (1)-(3) that define canonical forms are satisfied one by one, and we argue that they do not change the set of unifiers.
  \begin{enumerate}
    \item We put each level $l_{p}$ in canonical form $l'_{p}$. It is clear that this preserves the set of unifiers, as any level is convertible to its canonical form.

    \item If some variable $i$ appears in $l'_{1}$ and $l'_{2}$ with different coefficients, we remove it from the side with smaller coefficient, and we name the resulting equation $l''_{1}\peq l''_{2}$ --- this step is then repeated until condition (2) of the canonical form definition is met. By decomposing $l_{1}' \simeq l_{a} \sqcup n + i$ and $l_{2}' \simeq l_{b} \sqcup m + i$ with $n < m$ (or the symmetric), the correctness of this step follows from  $l_{1}'[\theta] \simeq l_{2}'[\theta]$ iff $l_{a}[\theta]\sqcup n + i[\theta] \simeq l_{b}[\theta] \sqcup m + i[\theta]$ iff $l_{a}[\theta] \simeq l_{b}[\theta] \sqcup m + i[\theta]$, where the last equivalence follows from the fact that\[
          \textsf{max}\{k_{a}, n + q\} = \textsf{max}\{k_{b}, m + q\} \quad \iff \quad k_{a} = \textsf{max}\{k_{b}, m + q\}
          \]for all $k_{a}, k_{b},n, m, q \in \mathbb{N}$ with $n < m$, and then by Proposition \ref{prop:semanticlvl}.

    \item Finally, if no coefficient in $l_{1}''$ or $l_{2}''$ is equal to zero, we subtract from all coefficients the value of the current minimal coefficient, and we name the resulting equation $l_{1}'''\peq l_{2}'''$. If we call this value $k$, then the correctness of this step follows from the fact that $l_{p}'' \simeq k + l'''_{p}$ for $p=1,2$, and so $l_{1}''[\theta]\simeq l_{2}''[\theta]$ iff $k + l_{1}'''[\theta]\simeq k+l_{2}'''[\theta]$ iff $l_{1}'''[\theta] \simeq l_{2}'''[\theta]$, where the last equivalence follows by applying Proposition~\ref{prop:semanticlvl}.
  \end{enumerate}
  It is clear that $l_{1}'''\peq l_{2}'''$ is in canonical form, and we have shown that each step of the transformation preserves the set of unifiers.
\end{proof}

\begin{exa}
  Consider the equation $ i\sqcup 1+(i \sqcup 1+j) \peq j \sqcup 2+i $ and let us show how it can be put in canonical form using the underlying algorithm of the above proof. First, we compute the level canonical forms of each side, yielding\[
    2 \sqcup 1+i \sqcup 2+j \peq 2 \sqcup 2+i \sqcup j
  \]As the variables $i$ and $j$ appear in the two sides with different coefficients, we then remove
  from each of the sides the occurrence with the smaller coefficient, yielding\[
    2 \sqcup  2+j \peq 2 \sqcup 2+i
  \]Finally, as the minimum among all coefficients is $ 2 $, we subtract this from all of them, giving\[
    0 \sqcup j \peq 0 \sqcup i
  \]
\end{exa}

\newcommand{\enix}[1]{[#1]}

We are now able to state the main theorem that we are going to show. In the following, if $k \in \mathbb{N}$ we write $\enix{k}$ for the set $\{1,...,k\}$, and we call an equation $l_{1}\peq l_{2}$ \textit{trivial} when $l_{1} \simeq l_{2}$. We also call $l_{2} \peq l_{1}$ the \textit{symmetric} of the equation $l_{1}\peq l_{2}$. Finally, if $\vec{i}= i_1...i_k$ is a list of level variables, we sometimes identify it with the level $i_1 \sqcup ...\sqcup i_k$.

\begin{thm}\label{thm:unif}
  A non-trivial equation has
  \begin{enumerate}
    \item[(A)] a most general unifier iff its canonical form (or its symmetric) is of the form
    \begin{enumerate}
      \item[(i)] $n \sqcup i \peq  l$ with $n < l\at{0}$, in which case $\theta = i\mapsto l$ is a \mgu{}
      \item[(ii)] $0 \sqcup \vec{i}_{0} \sqcup \vec{i}_{1} \peq  0 \sqcup\vec{i}_{0} \sqcup \vec{i}_{2}$ with $\vec{i}_{0}$, $\vec{i}_{1} $ and  $ \vec{i}_{2} $ disjoint, in which case a \mgu{} is given by
      \begin{align*}
        \theta =\quad & i_{0}^{k} \mapsto x_{k} \sqcup (\sqcup_{n \in \enix{p_1}} y_{k, n})\sqcup (\sqcup_{m \in \enix{p_{2}}} z_{k, m})  && (k \in \enix{p_{0}})\\
        &i^{n}_{1} \mapsto (\sqcup_{k \in \enix{p_{0}}} y_{k, n}) \sqcup (\sqcup_{m \in \enix{p_{2}}} v_{n, m}) && (n \in \enix{p_{1}})\\
        &i^{m}_{2} \mapsto (\sqcup_{k \in \enix{p_{0}}} z_{k, m}) \sqcup (\sqcup_{n \in \enix{p_{1}}} v_{n, m}) && (m \in \enix{p_{2}})
      \end{align*}
      where $p_{0}, p_{1}, p_{2}$ are the lengths of $\vec{i}_{0}$ and $ \vec{i}_{1}$ and $\vec{i}_{2}$ respectively, and where $\{x_{k}\}_{k \in \enix{p_{0}}}$,  $\{y_{k, n}\}_{k \in \enix{p_{0}}, n \in \enix{p_{1}}}$, $\{z_{k, m}\}_{k \in \enix{p_{0}}, m \in \enix{p_{2}}}$ and $\{v_{n, m}\}_{n \in \enix{p_{1}}, m \in \enix{p_{2}}}$ are disjoint sets of variables.
    \end{enumerate}
    \item[(B)] no unifier iff its canonical form (or its symmetric) is of the form $n \peq  l $ with $n < l\at{0}$
    \item[(C)] some unifier but no most general one iff its canonical form (or its symmetric) is not of any of the previous forms
  \end{enumerate}
\end{thm}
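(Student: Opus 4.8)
The plan is to reduce everything to equations in canonical form and then prove four implications from which the six biconditionals follow formally. By the proposition reducing each equation to a canonical one, it suffices to treat a non-trivial equation $l_1 \peq l_2$ that is already in canonical form; this reduction is harmless because the canonicalization steps neither delete a variable from the equation (step (2) deletes an occurrence from one side only, leaving it on the other) nor change the set of unifiers, so being a \mgu{} is preserved. I would then check that the three families in (A), (B), (C) partition the non-trivial canonical equations: forms A(i) and A(ii) are disjoint since A(i) forces $l\at{0} \ge 1$ on one side while A(ii) has all coefficients equal to $0$; and neither can be of form B, since form B requires one side to be a bare constant strictly dominated by the other side's constant coefficient, which is impossible when a side is $n \sqcup i$ (it contains a variable) and incompatible with both constant coefficients being $0$. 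Given this partition, the statement reduces to: (a) form A(i) $\Rightarrow$ the displayed $\theta$ is a \mgu{}; (b) form A(ii) $\Rightarrow$ the displayed $\theta$ is a \mgu{}; (c) form B $\Rightarrow$ no unifier; (d) form C $\Rightarrow$ some unifier exists but no \mgu{}

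Form B is immediate from Theorem~\ref{thm:nf}: the constant coefficient of $l[\theta]$ equals $\max(l\at{0}, \max_i (l\at{i} + (i[\theta])\at{0})) \ge l\at{0}$, so applying a substitution can never decrease it, and hence $n \peq l$ with $n < l\at{0}$ is unsolvable. For form A(i), that $\theta = i \mapsto l$ is a unifier is a short computation using $n < l\at{0}$ (and, if $i \in \fv(l)$, condition (2) of canonical form, which gives $l\at{i} = 0$). For most-generality, given an arbitrary unifier $\tau$ I take $\theta' := \tau$: for variables other than $i$ the required equivalence is trivial, and for $i$ one compares coefficients via Theorem~\ref{thm:nf} to see that $n \sqcup i[\tau] \simeq l[\tau]$ together with $l[\tau]\at{0} > n$ forces $i[\tau] \simeq l[\tau] = i[\theta][\theta']$.

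The technical heart is form A(ii). Here I would view a level as the finitely-supported map $\varz \mapsto l\at{\varz}$ into $\mathbb{N} \cup \{-\infty\}$, under which $\simeq$ becomes pointwise equality and $\sqcup$ becomes pointwise $\max$. Checking that the displayed $\theta$ is a unifier is then the bookkeeping observation that both sides of the equation rewrite to the join of all the fresh variables $x$, $y$, $z$, $v$. For most-generality, given a unifier $\tau$, write $a_k^{\varz} = i_0^k[\tau]\at{\varz}$, $b_n^{\varz} = i_1^n[\tau]\at{\varz}$, $c_m^{\varz} = i_2^m[\tau]\at{\varz}$; the unifier hypothesis says $\max(A^{\varz}, \max_n b_n^{\varz}) = \max(A^{\varz}, \max_m c_m^{\varz})$ for every $\varz$, where $A^{\varz} = \max_k a_k^{\varz}$ (also taking the constant $0$ into account when $\varz = \nZero$). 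I then define $\theta'$ by sending $x_k$ to $i_0^k[\tau]$, $y_{k,n}$ to the pointwise minimum of $i_0^k[\tau]$ and $i_1^n[\tau]$, $z_{k,m}$ to that of $i_0^k[\tau]$ and $i_2^m[\tau]$, and $v_{n,m}$ to that of $i_1^n[\tau]$ and $i_2^m[\tau]$ — each pointwise minimum of two levels being again a valid canonical form. One verifies $i_0^k[\theta][\theta'] \simeq i_0^k[\tau]$ trivially (the $x_k$ already supplies everything), while $i_1^n[\theta][\theta']\at{\varz}$ simplifies to $\min(b_n^{\varz}, \max(\max_k a_k^{\varz}, \max_m c_m^{\varz}))$, which equals $b_n^{\varz}$ exactly because the unifier equation guarantees the outer $\max$ is $\ge b_n^{\varz}$; the computation for $i_2^m$ is symmetric. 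I expect the main obstacle to be getting the degenerate cases right: when one or more of $\vec{i}_0, \vec{i}_1, \vec{i}_2$ is empty, the formula for $\theta$ involves empty joins (interpreted as $\nZero$), and then the verification for the affected component must instead use directly that, e.g., the only solutions of $\nZero \sqcup \vec{i}_1 \peq \nZero$ send each $i_1^n$ to $\nZero$.

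Finally, for form C I would proceed in two steps. \emph{A unifier exists}: if the two constant coefficients agree, then mapping every variable to $\nZero$ is a solution (each side then reduces to its constant coefficient); otherwise a short case analysis — using that on the side with the smaller constant coefficient the level is not a bare constant, as that would put the equation in form B — produces an explicit lifting substitution. \emph{No \mgu{} exists}: assuming a \mgu{} $\theta$, I case-split on the shape of the remaining canonical form and, in the style of the proof of Theorem~\ref{no-mgu}, exhibit finitely many solutions $\tau_1, \tau_2, \dots$ such that each requirement "$\theta$ is more general than $\tau_k$" pins down $\theta$'s action on certain variables (forcing a constant coefficient to be $0$, or forbidding an occurrence of $\nSucc$), after which composing $\theta$ with the substitution sending everything to $\nZero$ yields a contradiction. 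I expect this case analysis, together with the degenerate sub-cases of A(ii), to be the most delicate part of the argument.
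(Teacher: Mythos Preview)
Your approach is essentially the paper's: the reductions to canonical form, the arguments for A(i), A(ii), and B match Propositions~5.7, 5.8, and 5.9 almost verbatim (including the pointwise-minimum construction of $\theta'$ in A(ii)), and your partition argument is a correct repackaging of the decision tree the paper uses.

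The one place where your sketch underestimates the work is case~C. The paper does not dispatch all ``no \mgu{}'' cases by composing a putative \mgu{} with the all-zero substitution; that trick handles only the subcase $l_1\at{\nZero}\neq l_2\at{\nZero}$ with at least two variables on the smaller side. Two further subcases need separate arguments: when $l_1\at{\nZero}=l_2\at{\nZero}>0$, and when the smaller side has exactly one variable but with a nonzero coefficient. For these the paper first packages your ``pin down $\theta$'' observations into an auxiliary lemma (if some unifier sends $i$ to $0$ then $i[\tau]$ is flat; if some unifier sends $j$ strictly above several $i_p$ then a flat $j[\tau]$ must contain a variable absent from every $i_p[\tau]$; and any variable appearing in $l_1[\tau]$ must also appear in $l_2[\tau]$), and then the contradictions come not from composing with $\_\mapsto 0$ but from exhibiting a variable in $i[\tau]$ that cannot occur on the other side, or from composing with a substitution that sends one distinguished fresh variable to $1$ and the rest to $0$. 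So your strategy is right, but the final case analysis is a three-way split with genuinely different contradictions, not a uniform ``send everything to $\nZero$'' argument.
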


Before proving the result, let us consider some examples to see how it can be used.

\begin{exa}\label{exa:unif}~

  \begin{itemize}
    \item The equation $i_{0} \sqcup i_{1} \peq i_{0} \sqcup i_{2} $ has the canonical form\[
 0 \sqcup i_{0} \sqcup i_{1} \peq 0 \sqcup  i_{0} \sqcup i_{2}
          \]and therefore by point (A.ii) it admits the \mgu{}\[
          \theta = \{i_{0} \mapsto x \sqcup y \sqcup z,~i_{1}\mapsto y \sqcup v,~i_{2} \mapsto z \sqcup v \}
          \]
    \item The equation  $ i \sqcup 1 + (j \sqcup 2) \peq 1 + (2 \sqcup i \sqcup j) $ has the canonical form\[
          2\sqcup j  \peq 2 \sqcup i \sqcup j
          \]and therefore by point (C) it is solvable but admits no \mgu{}
    \item The equation $ i \sqcup 1 + (j \sqcup 1) \peq 1 + (2 \sqcup i \sqcup j) $ has the canonical form\[
          1\sqcup j \peq 2 \sqcup i \sqcup j
          \]and therefore by point (A.i) it admits the \mgu{}\[
          \theta = j \mapsto 2 \sqcup i \sqcup j
          \]
    \item The equation $ i \sqcup 1 + (j \sqcup 1) \peq 2 + (1 \sqcup i \sqcup j) $ has the canonical form\[
          0 \peq 1 \sqcup i \sqcup j
          \]and therefore by point (B) it admits no unifier.
  \end{itemize}

\end{exa}

\begin{figure}

  \includegraphics{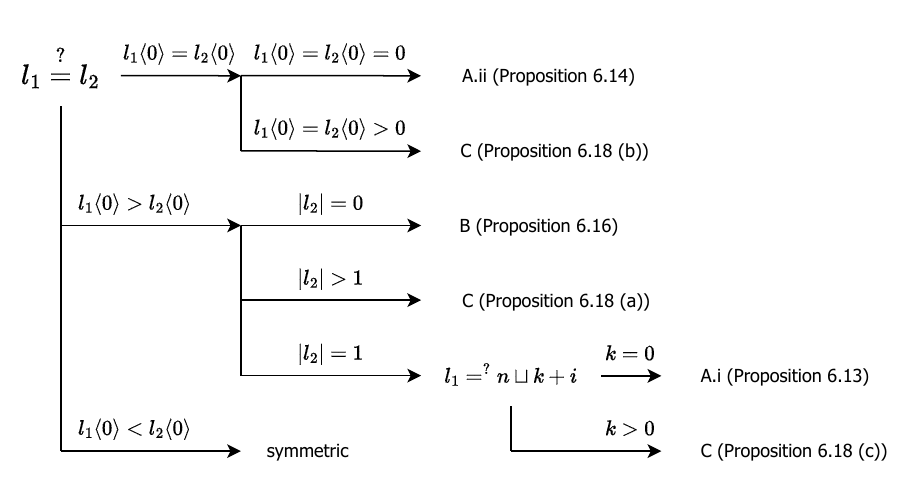}
  \caption{Structure of proof of Theorem~\ref{thm:unif}}
  \label{fig:proof}

\end{figure}

Let us now move to the proof of Theorem~\ref{thm:unif}.  Figure~\ref{fig:proof} shows its structure: we take a non-trivial equation in canonical form and consider its possible forms. Each leaf is annotated with the proposition associated with its proof, along with the case of Theorem~\ref{thm:unif} which we are in. We also write $|l_{2}|$ for the number of free variables occurring in $l_{2}$.

\subsubsection{Equations with \mgu{}s}

\begin{prop}\label{prop:mgu-1}
  The equation in canonical form $n \sqcup i \peq l $ with $n < l\at{\nZero}$ has the mgu $\tau = i \mapsto l$.
\end{prop}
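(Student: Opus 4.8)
The plan is to verify directly that $\tau = i\mapsto l$ is a unifier of $n\sqcup i\peq l$, and then that it is most general. For the first part, I would first observe that, since $n\sqcup i\peq l$ is in canonical form and $(n\sqcup i)\at{i}=0$, condition~(2) of the definition of canonical form forces $l\at{i}=0$ whenever $i\in\fv(l)$. Consequently $l[\tau]\simeq l$: this is immediate if $i\notin\fv(l)$, and otherwise, writing $l\simeq l_0\sqcup i$ with $i\notin\fv(l_0)$, it follows from the short computation $l[i\mapsto l]\simeq l_0\sqcup l\simeq l_0\sqcup l_0\sqcup i\simeq l$, using stability of $\simeq$ under substitution together with the idempotence and commutativity of $\sqcup$. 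It then remains to check $(n\sqcup i)[\tau]=n\sqcup l\simeq l$, which is an instance of the absorption principle that $n'\sqcup m\simeq m$ for every level $m$ with $n'\le m\at{\nZero}$ (since the canonical form of the numeral $n'$ is $n'$ itself, contributing $n'$ to the constant coefficient and nothing to any variable coefficient); here the hypothesis $n< l\at{\nZero}$ supplies exactly what is needed. Hence $(n\sqcup i)[\tau]\simeq l\simeq l[\tau]$, so $\tau$ is a unifier. All of these level equivalences are justified by reasoning through canonical forms via Theorem~\ref{thm:nf}, equivalently through the interpretations $\trans{-}_\phi$ via Proposition~\ref{prop:semanticlvl}.

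For most generality, I would take an arbitrary unifier $\sigma$, so that $n\sqcup i[\sigma]\simeq l[\sigma]$, and produce the substitution $\theta'=\sigma$. The crux is that substitution cannot decrease the constant coefficient: writing $l$ in canonical form as $l\at{\nZero}\sqcup l'$, where $l'$ gathers the variable summands, we get $l[\sigma]\simeq l\at{\nZero}\sqcup l'[\sigma]$ and hence $l[\sigma]\at{\nZero}\ge l\at{\nZero}$. Reading off the constant coefficient on both sides of $n\sqcup i[\sigma]\simeq l[\sigma]$ (Theorem~\ref{thm:nf}) therefore gives $\textsf{max}\{n,\,i[\sigma]\at{\nZero}\}=l[\sigma]\at{\nZero}\ge l\at{\nZero}>n$, which forces $i[\sigma]\at{\nZero}>n$, in particular $i[\sigma]\at{\nZero}\ge n$. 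By the absorption principle above, $n\sqcup i[\sigma]\simeq i[\sigma]$, so $i[\sigma]\simeq n\sqcup i[\sigma]\simeq l[\sigma]$. Now for the variable $i$ we have $i[\tau][\theta']=l[\sigma]\simeq i[\sigma]$, and for every other variable $j$ occurring in the equation $j[\tau][\theta']=j[\sigma]$ trivially, since $\tau$ acts only on $i$. Thus $\theta'$ witnesses that $\tau$ is more general than $\sigma$, and as $\sigma$ was arbitrary, $\tau$ is a most general unifier.

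I do not anticipate a genuine obstacle in this proof: the only delicate point is the potential ``occurs-check'' situation $i\in\fv(l)$, where it is exactly the canonical-form hypothesis $l\at{i}=0$ that keeps the substitution $i\mapsto l$ legitimate, because substituting $l$ for $i$ inside $l$ then leaves $l$ unchanged up to $\simeq$. Everything else reduces to elementary arithmetic on the coefficients of canonical forms, handled uniformly via Theorem~\ref{thm:nf}.
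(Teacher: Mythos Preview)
Your proof is correct and follows essentially the same approach as the paper: verify $\tau$ is a unifier directly, then show that any unifier $\sigma$ satisfies $i[\sigma]\simeq l[\sigma]$ via a coefficient argument, and conclude with $\theta'=\sigma$. You are in fact more careful than the paper on the unifier part, explicitly handling the case $i\in\fv(l)$ via condition~(2) of canonical forms, whereas the paper simply asserts this is easy; and for the most-general part you package the coefficient analysis into a single absorption step $n\sqcup i[\sigma]\simeq i[\sigma]$, while the paper treats variable and constant coefficients separately, but the content is the same.
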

\begin{proof}
  It is easy to verify that $\tau$ is a unifier. Now let $\theta$ be an arbitrary unifier and let us first show that $i[\theta]\simeq l[\theta]$. To do this we show $i[\theta]\at{\varz} = l[\theta]\at{\varz}$ for all $\varz$. Because the canonical forms of $i[\theta]$ and $n \sqcup i[\theta]$ can only differ on their constant coefficients, and because $n \sqcup i[\theta] \simeq l[\theta]$, then it follows that $i[\theta]\at{j} = l[\theta]\at{j}$ for all variables $j$. For the case $\varz = \nZero$, we have $\textsf{max}\{n, i[\theta]\at{\nZero}\} = l[\theta]\at{\nZero}$, and because $l\at{\nZero}> n$ then $l[\theta]\at{\nZero}>n$, so the only possibility is $i[\theta]\at{\nZero}=l[\theta]\at{\nZero}$.

  Now we can show that $\tau$ is more general than $\theta$: we have $j [\tau] [\theta] \simeq j [\theta] $ for all $j \in \fv(l) \cup \{i\}$. Indeed, the equation holds trivially for $i \neq j$, and for $i = j$ it follows from $l[\theta]\simeq i[\theta]$.
\end{proof}

\begin{prop}\label{prop:mgu-2}
The equation $0 \sqcup \vec{i}_{0} \sqcup \vec{i}_{1} \peq  0 \sqcup\vec{i}_{0} \sqcup \vec{i}_{2}$, with $\vec{i}_{0}$, $\vec{i}_{1} $ and $ \vec{i}_{2} $ disjoint, has the \mgu{}
\begin{align*}
\theta =\quad & i_{0}^{k} \mapsto x_{k} \sqcup (\sqcup_{n \in \enix{p_1}} y_{k, n})\sqcup (\sqcup_{m \in \enix{p_{2}}} z_{k, m})  && (k \in \enix{p_{0}})\\
                                  &i^{n}_{1} \mapsto (\sqcup_{k \in \enix{p_{0}}} y_{k, n}) \sqcup (\sqcup_{m \in \enix{p_{2}}} v_{n, m}) && (n \in \enix{p_{1}})\\
                                  &i^{m}_{2} \mapsto (\sqcup_{k \in \enix{p_{0}}} z_{k, m}) \sqcup (\sqcup_{n \in \enix{p_{1}}} v_{n, m}) && (m \in \enix{p_{2}})
                  \end{align*}
                  where $p_{0}, p_{1}, p_{2}$ are the lengths of $\vec{i}_{0}$ and $ \vec{i}_{1}$ and $\vec{i}_{2}$ respectively, and  $\{x_{k}\}_{k \in \enix{p_{0}}}$,  $\{y_{k, n}\}_{k \in \enix{p_{0}}, n \in \enix{p_{1}}}$, $\{z_{k, m}\}_{k \in \enix{p_{0}}, m \in \enix{p_{2}}}$ and $\{v_{n, m}\}_{n \in \enix{p_{1}}, m \in \enix{p_{2}}}$ are disjoint sets of variables.
\end{prop}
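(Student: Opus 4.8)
The plan is to prove the statement in two stages. First, that $\theta$ is a unifier: I would substitute $\theta$ into each side of $0 \sqcup \vec{i}_0 \sqcup \vec{i}_1 \peq 0 \sqcup \vec{i}_0 \sqcup \vec{i}_2$ and check, using only idempotence, commutativity and associativity of $\sqcup$ and neutrality of $\nZero$ (all valid by Proposition~\ref{prop:semanticlvl}), that both sides equal the single ``total join'' $\bigsqcup_{k} x_{k} \sqcup \bigsqcup_{k,n} y_{k,n} \sqcup \bigsqcup_{k,m} z_{k,m} \sqcup \bigsqcup_{n,m} v_{n,m}$: on the left, $\bigsqcup_{k} i_0^k[\theta]$ already contributes every $x_k$, $y_{k,n}$ and $z_{k,m}$, while $\bigsqcup_{n} i_1^n[\theta]$ re-contributes the $y$'s (absorbed by idempotence) and adds the $v$'s; the right side is symmetric, contributing the $z$'s a second time instead.

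For the second, harder stage, let $\tau$ be an arbitrary unifier and write $a_{k} := i_0^k[\tau]$, $b_{n} := i_1^n[\tau]$, $c_{m} := i_2^m[\tau]$; I must exhibit $\theta'$ with $i[\theta][\theta'] \simeq i[\tau]$ for every variable $i$ occurring in the equation, i.e. for $i \in \vec{i}_0 \cup \vec{i}_1 \cup \vec{i}_2$. The key device is the \emph{meet} $l \wedge l'$ of two levels, obtained by taking, coordinate by coordinate, the minimum of the coefficients of their canonical forms. By the characterization of valid coefficient assignments recorded right after Theorem~\ref{thm:nf} this is again a level (finiteness of the support is preserved, and $\min$ preserves both $\langle\nZero\rangle \neq -\infty$ and $\langle\nZero\rangle \geq \langle j\rangle$); moreover $l \wedge l' \leq l$ and $l \wedge l' \leq l'$ — writing $l_1 \leq l_2$ for $l_1 \sqcup l_2 \simeq l_2$ — and, since $\min$ distributes over $\max$ in $\mathbb{N} \cup \{-\infty\}$, one has $(\bigsqcup_{k}(a_k \wedge l))\at{\varz} = \min((\bigsqcup_k a_k)\at{\varz},\, l\at{\varz})$ for every $\varz$. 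I would then define $\theta'$ by $x_k \mapsto a_k$, $y_{k,n} \mapsto a_k \wedge b_n$, $z_{k,m} \mapsto a_k \wedge c_m$, $v_{n,m} \mapsto b_n \wedge c_m$. With this choice $i_0^k[\theta][\theta'] = a_k \sqcup \bigsqcup_n (a_k \wedge b_n) \sqcup \bigsqcup_m (a_k \wedge c_m) \simeq a_k$, since every joined term is $\leq a_k$; and $i_1^n[\theta][\theta'] = \bigsqcup_k (a_k \wedge b_n) \sqcup \bigsqcup_m (b_n \wedge c_m)$ has $\varz$-coefficient $\min(b_n\at{\varz},\, \max(A\at{\varz}, C\at{\varz}))$, where $A := \bigsqcup_k a_k$ and $C := \bigsqcup_m c_m$, so it equals $b_n\at{\varz}$ precisely when $b_n\at{\varz} \leq \max(A\at{\varz}, C\at{\varz})$; this inequality I would deduce, for each $\varz$, from the unifier condition on $\tau$ read off coordinatewise via Proposition~\ref{prop:semanticlvl} and Theorem~\ref{thm:nf}. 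The equation for each $i_2^m$ is verified symmetrically.

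The main obstacle is the base coordinate $\varz = \nZero$ together with the degenerate cases where some of $\vec{i}_0, \vec{i}_1, \vec{i}_2$ is empty. Since the constant coefficient of any level is $\geq 0$, the unifier condition at $\nZero$ only gives $b_n\at{\nZero} \leq \max(0, A\at{\nZero}, C\at{\nZero})$, and to drop the spurious $0$ one needs $\vec{i}_0$ or $\vec{i}_2$ non-empty (symmetrically, the $i_2^m$ equation needs $\vec{i}_0$ or $\vec{i}_1$ non-empty). So the proof must split: if $\vec{i}_0$ is empty and one of $\vec{i}_1, \vec{i}_2$ is empty too, the equation degenerates to $0 \peq 0 \sqcup \vec{i}_2$ or $0 \sqcup \vec{i}_1 \peq 0$, which by Theorem~\ref{thm:nf} forces every unifier to send each variable involved to $\nZero$ — exactly what $\theta$ does to them, as the relevant joins in its definition are then empty — so the empty substitution serves as $\theta'$; in all remaining cases ($\vec{i}_0$ non-empty, or both $\vec{i}_1$ and $\vec{i}_2$ non-empty) the computation above goes through unchanged. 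One warning worth stating explicitly in the write-up is that the meet $l \wedge l'$ must not be conflated with the pointwise-minimum function $\phi \mapsto \min(\trans{l}_\phi, \trans{l'}_\phi)$, which is in general not expressible as a level; only the coefficient-wise meet is used, and the entire verification is carried out at the level of coefficients.
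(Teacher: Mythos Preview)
Your proof is correct and follows essentially the same route as the paper's. The paper also first checks that $\theta$ is a unifier, then for an arbitrary unifier $\tau$ defines the witnessing substitution coefficient-wise by $x_k[\tau']\at{\varz}:=i_0^k[\tau]\at{\varz}$, $y_{k,n}[\tau']\at{\varz}:=\min\{i_0^k[\tau]\at{\varz},i_1^n[\tau]\at{\varz}\}$, etc.---which is exactly your meet $a_k\wedge b_n$---and verifies the three families of equalities by the same max/min distributivity argument. Your explicit introduction of $\wedge$ and your case split on empty $\vec{i}_0,\vec{i}_1,\vec{i}_2$ are more careful than the paper, which silently drops the constant $0$ when passing to coefficients and does not single out the case $p_0=p_2=0$; your treatment of that edge case is correct and arguably needed for the statement as written.
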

\begin{proof}
  It is easy to see that $\theta$ is a unifier: all introduced variables appear in both sides, with coefficient 0. Given a unifier $\tau$, define $\tau'$ by setting for each $\varz $
  \begin{align*}
    x_{k}[\tau']\at{\varz} &:= i^{k}_{0}[\tau]\at{\varz}\\
    y_{k, n}[\tau']\at{\varz} &:= \textsf{min}\{i^{k}_{0}[\tau]\at{\varz}, i^{n}_{1}[\tau]\at{\varz}\}\\
    z_{k, m}[\tau']\at{\varz} &:= \textsf{min}\{i^{k}_{0}[\tau]\at{\varz}, i^{m}_{2}[\tau]\at{\varz}\}\\
    v_{n, m}[\tau']\at{\varz} &:= \textsf{min}\{i^{n}_{1}[\tau]\at{\varz}, i^{m}_{2}[\tau]\at{\varz}\}
  \end{align*}

  Note that this assignment indeed defines for each $i'$ a canonical form $i'[\tau']$: the constant coefficient of $i'[\tau']$ is never equal to $-\infty$, it is always greater or equal than the variable coefficients of $i'[\tau']$, and $i'[\tau']\at{j}$ is different from $-\infty$ only for finitely many $j$. Let us now show that $i'[\tau] \simeq i'[\theta][\tau']$ for all $i'$ among $\vec{i}_0, \vec{i}_1, \vec{i}_2$.

  As $i^{k}_{0}[\tau]\at{\varz}$ is greater or equal than $ \textsf{min}\{i^{k}_{0}[\tau]\at{\varz}, i^{n}_{1}[\tau]\at{\varz} \}$ and $\textsf{min}\{i^{k}_{0}[\tau]\at{\varz}, i^{m}_{2}[\tau]\at{\varz}\}$ for all $n,m$, we have
  \begin{align*}
    i^{k}_{0}[\theta][\tau']\at{\varz} &= \textsf{max}(\{i^{k}_{0}[\tau]\at{\varz}\} \\
                                     &\hspace{2.2em}\cup \{\textsf{min}\{i^{k}_{0}[\tau]\at{\varz}, i^{n}_{1}[\tau]\at{\varz} \}\mid n \in \enix{p_{1}}\} \\
                                     &\hspace{2.2em}\cup \{\textsf{min}\{i^{k}_{0}[\tau]\at{\varz}, i^{m}_{2}[\tau]\at{\varz}\} \mid m \in \enix{p_{2}}\}) \\
                                     &= i^{k}_{0}[\tau]\at{\varz}
  \end{align*}
  for all $\varz$ and $k \in \enix{p_{0}}$. Therefore, we get $ i^{k}_{0}[\theta][\tau'] \simeq i^{k}_{0}[\tau] $ for all $k \in \enix{p_{0}}$.

  Because $\tau$ is a unifier, we have $0 \sqcup \vec{i}_{0}[\tau] \sqcup \vec{i}_{1}[\tau] \simeq  0 \sqcup\vec{i}_{0}[\tau] \sqcup \vec{i}_{2}[\tau]$, from which we get\[
    \textsf{max}\{\vec{i}_{0}[\tau]\at{\varz}, \vec{i}_{1}[\tau]\at{\varz} \} = \textsf{max}\{\vec{i}_{0}[\tau]\at{\varz}, \vec{i}_{2}[\tau]\at{\varz}\}
  \]for all $\varz$. Therefore, for every $n \in \enix{p_{1}}$, there is some $k \in \enix{p_{0}}$ st $ i^{k}_{0}[\tau]\at{\varz} \geq i^{n}_{1}[\tau]\at{\varz}$, or  there is some $m \in \enix{p_{2}}$ st $ i^{m}_{2}[\tau]\at{\varz}\geq i^{n}_{1}[\tau]\at{\varz}$. Hence, either we have $\textsf{min}\{i^{k}_{0}[\tau]\at{\varz}, i^{n}_{1}[\tau]\at{\varz}\} = i^{n}_{1}[\tau]\at{\varz}$ for some $k \in \enix{p_{1}}$, or we have $\textsf{min}\{i^{n}_{1}[\tau]\at{\varz}, i^{m}_{2}[\tau]\at{\varz}\} = i^{n}_{1}[\tau]\at{\varz}$  for some $m\in\enix{p_{2}}$. Therefore, we get
  \begin{align*}
    i^{n}_{1}[\theta][\tau']\at{\varz} &= \textsf{max}( \{\textsf{min}\{i^{k}_{0}[\tau]\at{\varz}, i^{n}_{1}[\tau]\at{\varz}\}\mid k \in \enix{p_{0}}\} \\
                                     &\hspace{2.2em}\cup \{\textsf{min}\{i^{n}_{1}[\tau]\at{\varz}, i^{m}_{2}[\tau]\at{\varz}\} \mid m \in \enix{p_{2}}\}) \\
                                     &= i^{n}_{1}[\tau]\at{\varz}
  \end{align*}
  for all $\varz$ and $n \in \enix{p_{1}}$.  Therefore, we get $ i^{n}_{1}[\theta][\tau'] \simeq i^{n}_{1}[\tau] $ for all $n \in \enix{p_{1}}$.

  Finally, a symmetrical reasoning shows $ i^{m}_{2}[\theta][\tau'] \simeq i^{m}_{2}[\tau] $ for all $m \in \enix{p_{2}}$.
\end{proof}

\begin{rem}
  Proposition~\ref{prop:mgu-2} shows that, when there is no occurrence of $\nSucc$ in the equation, it can be solved as an ACUI unification problem. Indeed, the \mgu{} given there is also a \mgu{} of the equation when seen as a unification problem in the theory ACUI~\cite{BAADER1988345}.
\end{rem}

\subsubsection{Unsolvable equations}

\begin{prop}\label{prop:no-unifier}
A non-trivial equation in canonical form has no solution iff it (or its symmetric) is of the form $m \peq  l$ with $m < l\at{\nZero}$.
\end{prop}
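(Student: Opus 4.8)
The plan is to establish the two directions separately, the implication ``excluded shape $\Rightarrow$ unsolvable'' being the short one. For it I would argue directly: if the equation is, up to symmetry, $m \peq l$ with $m \in \mathbb{N}$ and $m < l\at{\nZero}$, then for an arbitrary level substitution $\theta$ the left-hand side is unchanged, so $m[\theta]\at{\nZero} = m$, whereas the constant coefficient of $l[\theta]$ can only grow: writing $l$ in canonical form as $p \sqcup n_{1}+i_{1} \sqcup \dots \sqcup n_{r}+i_{r}$ with $p = l\at{\nZero}$, we get $l[\theta]\at{\nZero} = \textsf{max}\{p, n_{1}+i_{1}[\theta]\at{\nZero},\dots\} \geq p > m$. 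Hence $m[\theta] \not\simeq l[\theta]$ by Theorem~\ref{thm:nf}, and since $\theta$ was arbitrary the equation has no unifier.

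For the converse I would prove the contrapositive, building an explicit unifier for a non-trivial canonical equation $l_{1} \peq l_{2}$ that is not of the excluded shape. First I would observe that $l_{1}$ and $l_{2}$ cannot both be constants: non-triviality would then make one strictly smaller than the other, i.e.\ put the equation (or its symmetric) into the excluded shape. So at least one side has a free variable, and I would split into two cases. If one side, say $l_{2}$, is a constant $m$, then $l_{1}$ has a variable, and since the equation is not excluded we must have $m \geq l_{1}\at{\nZero}$; picking any $i \in \fv(l_{1})$, its coefficient $n = l_{1}\at{i}$ satisfies $n \leq l_{1}\at{\nZero} \leq m$, so the substitution sending $i$ to the constant $m-n$ and every other variable of $l_{1}$ to $\nZero$ collapses $l_{1}$ to $m$, hence unifies. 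If instead both sides contain a variable, I would pump up a variable on each side into a common fresh variable $j$: when $l_{1}$ and $l_{2}$ share a variable $i$, clause~(2) of the canonical-form definition guarantees it has the same coefficient $n$ on both sides, so mapping $i$ to $K+j$ (with $K$ a natural number larger than both constant coefficients) and all other variables to $\nZero$ makes the two sides equal; when $\fv(l_{1}) \cap \fv(l_{2}) = \emptyset$, I would pick $i \in \fv(l_{1})$ and $i' \in \fv(l_{2})$ with coefficients $n, n'$ and map $i$ to $(K+n')+j$ and $i'$ to $(K+n)+j$ (again $K$ large, all other variables to $\nZero$), so that $j$ receives coefficient $K+n+n'$ on both sides, and a short computation of constant coefficients confirms the two sides coincide.

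I expect the only genuinely delicate point to be the last subcase, with disjoint variable sets: a unifier is forced to make the image of a left variable and the image of a right variable contribute the \emph{same} coefficient to $j$, which is exactly why the two images must be shifted by the complementary amounts $n'$ and $n$. Beyond that, the work is routine bookkeeping with the $\at{\cdot}$ notation --- checking that the proposed assignments describe legitimate canonical forms (note the coefficient of $j$ ends up equal to the constant coefficient, which is allowed) and that, once $K$ exceeds every constant coefficient in sight, the $K$-shifted terms dominate all others. The symmetric clauses in the statement need no extra argument since every construction above is manifestly symmetric in $l_{1}$ and $l_{2}$.
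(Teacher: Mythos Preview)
Your proof is correct, and the overall structure---forward direction by direct inspection, converse by constructing explicit unifiers---matches the paper's. The one-side-constant case is handled identically.

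Where you diverge is in the ``both sides contain a variable'' case. You introduce a fresh variable $j$, split on whether $\fv(l_{1})$ and $\fv(l_{2})$ intersect, and push a suitably shifted copy of $j$ through a chosen variable on each side. This is correct, but more work than necessary. The paper avoids the fresh variable and the case split altogether: it picks any $i_{1}\in\fv(l_{1})$ and $i_{2}\in\fv(l_{2})$ (not necessarily distinct), sets $p=\max\{l_{1}\at{\nZero},l_{2}\at{\nZero}\}$, and takes $\theta = i_{1}\mapsto p-l_{1}\at{i_{1}},\ i_{2}\mapsto p-l_{2}\at{i_{2}},\ \_\mapsto 0$. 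Both sides then collapse to the constant~$p$. The possibility $i_{1}=i_{2}$ causes no conflict precisely because clause~(2) of the canonical-form definition forces $l_{1}\at{i_{1}}=l_{2}\at{i_{2}}$ in that situation---the same observation you exploit in your shared-variable subcase. So the paper's argument is shorter and uniform, while yours trades simplicity for a unifier that is not closed (its image contains a free variable), which is harmless here but buys nothing.
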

\begin{proof}
  It is clear that $m \peq  l$ with $m < l\at{\nZero}$ has no solution. For the other direction, we show that any equation not of this form has a solution.

  First note that if $l_{1}\peq l_{2}$ has variables in both sides then it is easy to build a solution. Indeed, if $i_{1} \in \fv(l_{1})$, $i_{2} \in \fv(l_{2})$ are (not necessarily distinct) variables, then $\theta = i_{1}\mapsto p - l_{1}\at{i_{1}}, i_{2}\mapsto p - l_{2}\at{i_{2}}, \_{}\mapsto 0$, where $p = \max \{l_{1}\at{\nZero},l_{2}\at{\nZero}\}$, is a solution --- note that this is also  well-defined in the case $i_{1}=i_{2}$, because for equations in canonical form this implies $l_{1}\at{i_{1}}=l_{2}\at{i_{2}}$.

  We can thus restrict our analysis to equations with one of the sides constant, of the form $ m \peq l$. Note that we can suppose that  $l$ has some variable: indeed, if $l$ is constant and $l\at{\nZero}=m$ then the equation is trivial, and if $l\at{\nZero} < m$ then its symmetric is of the form $m'=l'$ with $m' < l'\at{\nZero}$, and indeed has no solution.  Finally, it is easy to see that for $m \peq  l$ with $m \geq l\at{\nZero}$ and where $l$ has some variable $i$, we have the unifier $\theta = i \mapsto m- l\at{i}, \_{} \mapsto 0$.
\end{proof}

\subsubsection{Solvable equations not admitting a m.g.u}

The last ingredient for our proof is showing that in all other cases there is no \mgu{} In order to show this, we will use the following auxiliary lemma. In the following, we refer to a level not containing any occurrence of $\nSucc$ as~\textit{flat}.

\begin{lem}[Auxiliary lemma]\label{lemma:aux}
  Let $l_{1} \peq l_{2}$ be an equation admitting a unifier $\theta$ and a \mgu{} $\tau$.
  \begin{enumerate}
    \item If $i[\theta] = 0$ with $i \in \fv(l_1)\cup\fv(l_2)$ then $i[\tau]$ is flat.
    \item If  $j[\theta]=m>0$ with $j \in \fv(l_1)\cup\fv(l_2)$, and for $p=1..k$ we have $i_{p}[\theta] = n_{p}<m$ with $i_p \in \fv(l_1)\cup\fv(l_2)$, then if $j[\tau]$ is flat it must  contain  one variable not in any $i_{p}[\tau]$.
    \item If $i\in \fv(l_{1})$ and $j \in \fv(i[\theta])$, then for some $i' \in \fv(l_{2})$ we must have $j \in \fv(i'[\theta])$.
  \end{enumerate}
\end{lem}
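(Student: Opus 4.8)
The plan for all three parts is to exploit that, since $\tau$ is a \mgu{} and $\theta$ is a unifier, the definition of \mgu{} supplies a substitution $\theta'$ with $i'[\tau][\theta']\simeq i'[\theta]$ for every variable $i'$ of $l_1\peq l_2$, and then to reason about all the levels in play through their canonical forms (Theorem~\ref{thm:nf}) and their interpretations $\trans{\cdot}_\phi$ (Proposition~\ref{prop:semanticlvl}). Two elementary facts will be used throughout: coefficients in a canonical form are non-negative integers, so $\trans{l}_\psi \geq l\at{j}+\psi(j)$ whenever $j\in\fv(l)$; and a flat level --- equivalently, one whose canonical form has all coefficients $0$ --- that has no free variable is $\simeq\nZero$.

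For part (1) I would argue contrapositively: if $i[\tau]$ is not flat, then some coefficient of its canonical form is $\geq 1$, hence $\trans{i[\tau]}_\psi\geq 1$ for every $\psi$; composing with $\theta'$ (that is, taking $\psi(j):=\trans{j[\theta']}_\phi$) gives $\trans{i[\tau][\theta']}_\phi\geq 1$ for all $\phi$, contradicting $i[\tau][\theta']\simeq i[\theta]=\nZero$ via Proposition~\ref{prop:semanticlvl}. This case is immediate.

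Part (2) is where the actual work lies. Assuming $j[\tau]$ flat, from $j[\tau][\theta']\simeq j[\theta]=m>0$ the flat level $j[\tau]$ cannot be $\simeq\nZero$, so it has free variables $a_1,\dots,a_r$ and $j[\tau]\simeq\nZero\nMax a_1\nMax\dots\nMax a_r$, whence $\nZero\nMax a_1[\theta']\nMax\dots\nMax a_r[\theta']\simeq m$. Since the right-hand side is variable-free, no $a_s[\theta']$ can contain a variable, so each is a constant, and their maximum $m$ is attained by some $a_{s^{\ast}}[\theta']=m$. Finally, if $a_{s^{\ast}}$ occurred in $i_p[\tau]$ with coefficient $c\geq 0$, then $i_p[\tau][\theta']$ would have constant coefficient $\geq c+m\geq m>n_p$, contradicting $i_p[\tau][\theta']\simeq i_p[\theta]=n_p$; hence $a_{s^{\ast}}$ is a free variable of $j[\tau]$ occurring in no $i_p[\tau]$. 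This is the main obstacle: one has to notice first that flatness is exactly what prevents $j[\tau]$ from being a bare constant (so that there is a variable to exhibit at all), and then track the particular variable that $\theta'$ sends to $m$ and transport this information back across $\theta'$ to the $i_p[\tau]$.

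Part (3) needs only that $\theta$ is a unifier. From $l_1[\theta]\simeq l_2[\theta]$ and uniqueness of canonical forms (Theorem~\ref{thm:nf}) one gets $\fv(l_1[\theta])=\fv(l_2[\theta])$: if $j$ lay in one but not the other, taking $\phi(j)$ large would force the two interpretations apart, contradicting Proposition~\ref{prop:semanticlvl}. Now $i\in\fv(l_1)$ and $j\in\fv(i[\theta])$ give $j\in\fv(l_1[\theta])$ --- substitution places $j$ syntactically inside $l_1[\theta]$, and canonicalization never deletes a variable --- whence $j\in\fv(l_2[\theta])=\bigcup_{i'\in\fv(l_2)}\fv(i'[\theta])$, which is exactly the required conclusion. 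This part is routine once $\fv$-invariance under $\simeq$ is established.
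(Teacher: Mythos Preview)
Your proof is correct and follows essentially the same approach as the paper: in each part you use the factorization $i'[\tau][\theta']\simeq i'[\theta]$ provided by the \mgu{} property and then reason about canonical forms, just as the paper does. The only difference is presentational---you route the arguments through the semantic interpretation $\trans{\cdot}_\phi$ and Proposition~\ref{prop:semanticlvl}, whereas the paper argues more tersely and syntactically (e.g.\ ``if $i[\tau]$ contains an occurrence of $\nSucc$ then so does $i[\tau][\theta']$''); in part (2) you also make explicit the small step that each $a_s[\theta']$ must be closed, which the paper leaves implicit.
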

\begin{proof} We show each point separately.
\begin{enumerate}
  \item Because $\tau$ is a \mgu{}, for some $\theta'$ we have $i[\tau][\theta'] \simeq i [\theta]=0$, so if $i[\tau]$ contains an occurrence of $\nSucc$ then $i[\tau][\theta']$ will also contain one, and therefore will not be convertible to $0$.
  \item Because $\tau$ is a \mgu{}, for some $\theta'$ we have $j[\tau][\theta'] \simeq j[\theta]=m$ and $i_{p}[\tau][\theta'] \simeq i_{p}[\theta]= n_{p}$ for $p = 1..k$. Now if we suppose that $j[\tau]$ is flat, then the only way to have $j[\tau][\theta'] \simeq m >0$ is if some variable  $ i'$ in $j[\tau]$ is mapped to $m$ by $\theta'$. But because $i_p[\tau][\theta']\simeq n_p < m$, it is clear that $i'$ cannot appear in any of the $i_{p}[\tau]$.
  \item Follows from the fact that, if $\theta$ is a unifier, then the variables that appear in $l_{1}[\theta]$ must also appear in $l_{2}[\theta]$. \qedhere
\end{enumerate}
\end{proof}

\begin{prop}[Equations with no mgu]\label{prop:no-mgu} The following non-trivial equations in canonical form do not admit a \mgu{}:
\begin{enumerate}
  \item[(a)] $l_{1} \peq l_{2}$ with $|l_{2}|>1$ and $l_{1}\at{\nZero} > l_{2}\at{\nZero}$.
  \item[(b)] $l_{1} \peq l_{2}$ with $l_{1}\at{\nZero}=l_{2}\at{\nZero}>0$.
  \item[(c)] $l \peq n \sqcup k+i $ with $k >0$ and $n < l\at{\nZero}$.
\end{enumerate}
\end{prop}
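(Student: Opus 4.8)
The plan is to argue by contradiction in each of the three cases: assuming the equation (taken in canonical form) had a \mgu{}~$\tau$, we confront $\tau$ with a handful of \emph{explicit} unifiers and derive an impossibility. Two ingredients recur. First, a unifier that sends some variable $q$ of the equation to $\nZero$ forces, by Lemma~\ref{lemma:aux}(1), the level $q[\tau]$ to be \emph{flat}. Second, a ``probe'' unifier that sends a variable $q$ to a large constant $N$ forces, since $q[\tau]$ is flat, some $x_{0}\in\fv(q[\tau])$ to be sent by the witnessing substitution $\sigma$ (for which $q[\tau][\sigma]\simeq N$) to a level of constant coefficient $\geq N$; Lemma~\ref{lemma:aux}(3) then relocates $x_{0}$ into the $\tau$-image of a variable on the opposite side of the equation, after which a comparison of constant coefficients — justified by Theorem~\ref{thm:nf} — closes the case.

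\emph{Case (a).} Set $p_{1}=l_{1}\at 0 > l_{2}\at 0 = p_{2}$, so $p_{1}\geq 1$. For each $j\in\fv(l_{2})$, use $|l_{2}|>1$ to choose $j'\in\fv(l_{2})\setminus\{j\}$ and let $\theta_{j}$ send $j$ to $\nZero$, $j'$ to the constant $p_{1}-l_{2}\at{j'}$ (a natural number, since $l_{2}\at{j'}\leq p_{2}<p_{1}$) and every other variable to $\nZero$. Using condition~(2) of canonical forms one checks $l_{1}[\theta_{j}]\simeq p_{1}\simeq l_{2}[\theta_{j}]$, so $\theta_{j}$ is a unifier, hence $j[\tau]$ is flat. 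As this holds for all $j\in\fv(l_{2})$, the canonical form of $l_{2}[\tau]$ has constant coefficient exactly $l_{2}\at 0$, whereas that of $l_{1}[\tau]$ has constant coefficient $\geq l_{1}\at 0 > l_{2}\at 0$; by Theorem~\ref{thm:nf} this contradicts $l_{1}[\tau]\simeq l_{2}[\tau]$.

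\emph{Case (c).} Put $p=l\at 0$; canonicity forces $0<k\leq n<p$ and forces $l$ to contain a variable $u$ with $l\at u=0$, and then $u\neq i$. For $t\in\{0,\dots,p\}$ the substitution $\theta^{(t)}$ sending $u\mapsto t$, $i\mapsto p-k$ and all other variables to $\nZero$ is a unifier (using $p>n\geq k$, both sides rewrite to the constant $p$). From $\theta^{(0)}$ we get that $u[\tau]$ is flat. Since $p\geq 1$, the fact that $\tau$ is more general than $\theta^{(p)}$ gives a $\sigma$ with $u[\tau][\sigma]\simeq p$ and $i[\tau][\sigma]\simeq p-k$; flatness of $u[\tau]$ then provides $x_{0}\in\fv(u[\tau])$ with $(x_{0}[\sigma])\at 0=p$. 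By Lemma~\ref{lemma:aux}(3), since $\fv(l_{2})=\{i\}$ we have $x_{0}\in\fv(i[\tau])$, so the constant coefficient of $i[\tau][\sigma]$ is at least $i[\tau]\at{x_{0}}+(x_{0}[\sigma])\at 0\geq 0+p=p$; this contradicts $i[\tau][\sigma]\simeq p-k<p$.

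\emph{Case (b)} is the most involved, though it uses the same machinery. Write $p=l_{1}\at 0=l_{2}\at 0>0$. Now the substitution sending every variable of the equation to $\nZero$ is a unifier, so \emph{all} $w[\tau]$ are flat. Non-triviality forces $\fv(l_{1})\neq\fv(l_{2})$, and up to the symmetry of the equation we may assume there is $v\in\fv(l_{1})\setminus\fv(l_{2})$. If $l_{1}\at v<p$ for some such $v$, then $\theta^{(N)}$ sending $v\mapsto N$ and all other variables to $\nZero$ is a unifier for $N=p-l_{1}\at v\geq 1$, and the probe argument — with Lemma~\ref{lemma:aux}(3) producing a variable $w\in\fv(l_{2})$, necessarily $\neq v$, whose $\tau$-image contains the offending $x_{0}$, so that $\sigma$ must send $x_{0}$ both to a level of constant coefficient $N$ and (since $w$ lies in the $\nZero$-part of $\theta^{(N)}$) to $\nZero$ — is contradictory. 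Otherwise every variable exclusive to one side has coefficient $p$ there; one then shows that $\fv(v[\tau])=\emptyset$ unless $l_{2}$ contains a variable of coefficient $p$, the first alternative being refuted by exhibiting a unifier $\theta$ with $v[\theta]\neq\nZero$, and the remaining subcases (including the easy one where $l_{1}$ or $l_{2}$ is a constant, in which any unifier sends every relevant variable — being flat — to $\nZero$, while non-$\nZero$ unifiers exist) being closed by analogous probe arguments on a shared variable of coefficient $0$. Making this last analysis watertight is the main obstacle; everything else reduces to routine bookkeeping with canonical forms.
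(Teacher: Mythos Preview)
Your treatments of (a), (c), and the first half of (b) (where some one-sided variable has coefficient strictly below $p$) are correct and essentially match the paper's proof; the only cosmetic difference is that where the paper invokes Lemma~\ref{lemma:aux}(2) you unfold it into a direct constant-coefficient comparison, which works equally well.

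The genuine gap is the second half of (b), where every one-sided variable has coefficient exactly $p=l_{1}\at{\nZero}=l_{2}\at{\nZero}$. Your sketched dichotomy ``$\fv(v[\tau])=\emptyset$ unless $l_{2}$ contains a variable of coefficient $p$'' is never established, and in fact the first disjunct is always false: canonicity forces a \emph{shared} variable $j$ of coefficient $0$ (the constant and all one-sided coefficients equal $p>0$, so the mandatory zero coefficient must sit on a shared variable), and then $v\mapsto 1,\ j\mapsto p{+}1,\ \_\mapsto 0$ is a unifier with $v$ sent to a nonzero value, so $\fv(v[\tau])\neq\emptyset$ whether or not $l_{2}$ has a variable of coefficient $p$. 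The parenthetical ``easy'' case where one side is constant also cannot occur here (if $l_{2}$ were constant, every variable of $l_{1}$ would be one-sided of coefficient $p$, leaving no room for the required zero coefficient). So the case split dissolves and you have not reached a contradiction.

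The paper closes this subcase in one move using precisely the unifier $\theta_{2}=i\mapsto 1,\ j\mapsto p{+}1,\ \_\mapsto 0$ above (with $i$ your $v$). Lemma~\ref{lemma:aux}(2), applied with $i$ as the ``large'' variable (value $1$) and all variables other than $i,j$ as the ``small'' ones (value $0$), produces some $i'\in\fv(i[\tau])$ lying in no $j'[\tau]$ except possibly $j[\tau]$. Composing the unifier $\tau$ with $i'\mapsto 1,\ \_\mapsto 0$ then yields constant coefficient $l_{1}\at{i}+1=p{+}1$ on the side containing $i$, but only $p$ on the other side (there $i$ is absent and only $j$, of coefficient $0$, can see $i'$), contradicting $l_{1}[\tau]\simeq l_{2}[\tau]$. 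This is exactly the ``probe on a shared variable of coefficient $0$'' you gesture at, but the crucial extra ingredient is using Lemma~\ref{lemma:aux}(2) to isolate $i'$ away from \emph{all} other $j'[\tau]$, not just from the $\tau$-images of variables on the opposite side.
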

\begin{proof}
The structure of the proof is the same in all cases: we suppose the existence of a most general unifier $\tau$ which we use to obtain a contradiction.
\begin{enumerate}
  \item[(a)] Let $i, j$ be two different variables in $l_{2}$. By Lemma~\ref{lemma:aux} (1), the unifiers $\theta_{1}= i\mapsto l_{1}\at{\nZero}-l_{2}\at{i}, \_{}\mapsto 0$ and $\theta_{2}= j\mapsto l_{1}\at{\nZero}-l_{2}\at{j}, \_{}\mapsto 0$ show that  $i'[\tau]$ is flat for all $i'$. But then we have $l_{1}[\tau][\_{}\mapsto 0] \simeq l_{1}\at{\nZero}$ and $l_{2}[\tau][\_{}\mapsto 0] \simeq l_{2}\at{\nZero}$, and because $\tau$ is a unifier we must then have $l_{1}\at{\nZero}=l_{2}\at{\nZero}$, a contradiction with $l_{1}\at{\nZero}>l_{2}\at{\nZero}$.
  \item[(b)] First note that $\_{}\mapsto 0$ is a unifier, so by Lemma~\ref{lemma:aux} (1),  $i'[\tau]$ is flat for all $i'$. Because the equation is supposed to be in canonical form and non-trivial, some variable $i$ appears in only one side. Take such a $i$ with a minimal coefficient, which we henceforth call  $p$.

        If $p < l_{1}\at{\nZero} $, then the unifier $\theta_{1}= i \mapsto l_{1}\at{\nZero}-p, \_{}\mapsto 0$ shows, by Lemma~\ref{lemma:aux} (2),  that $i[\tau]$ contains a variable not in any $i'[\tau]$ with $i'\neq i$, a contradiction with Lemma~\ref{lemma:aux}~(3), as $i$ appears in only one side.

        Suppose now that $p = l_{1}\at{\nZero}$. Because the equation is in canonical form and the constant coefficient of each side is different from $0$, then some variable $j$ must appear with coefficient $0$. Moreover, because the minimal coefficient of a variable occurring in only one side is $p \neq 0$, it follows that $j$ must  appear in  both sides (both occurrences, of course, with coefficient $0$). Because $p = l_{1}\at{\nZero} > 0$, by Lemma~\ref{lemma:aux} (2) the unifier $\theta_{2}=i \mapsto 1, j \mapsto p+1, \_{}\mapsto 0$ shows that some variable $i' \in \fv(i[\tau])$ does not appear in any $j'[\tau]$ with $j'$ different from $j$ and $i$. Therefore, because $i'$ can only also occur in $j[\tau]$, and because the coefficient of $i$ is $p$ and the coefficient of $j$  is $0$,  by composing $\tau$ with $i' \mapsto 1, \_{}\mapsto 0$ we get $p + 1$ at the side in which $i$ occurs but $p$ at the other side, a contradiction.

  \item[(c)] Because we suppose the equation is in canonical form, some $j$ different from $i$ must occur in $l$ with coefficient $0$. By Lemma~\ref{lemma:aux} (1), the unifier $\theta_{1}=i \mapsto l\at{\nZero}-k, \_{}\mapsto 0$ shows that $j[\tau]$ is flat, and by Lemma~\ref{lemma:aux} (2) the unifier $\theta_{2}=i \mapsto l\at{\nZero}-k, j \mapsto l\at{\nZero}, \_{}\mapsto 0$ shows that some variable in $j[\tau]$ does not occur in $i[\tau]$, given that $l\at{\nZero}-k < l\at{\nZero}$. Because $i$ is the only variable that appears in the right side, this establishes a contradiction with Lemma~\ref{lemma:aux} (3).\qedhere
\end{enumerate}
\end{proof}

\subsubsection{Putting everything together}

\begin{proof}[Proof of Theorem~\ref{thm:unif}]

  We proceed as illustrated in Figure~\ref{fig:proof}. The case $l_{1}\at{\nZero}= l_{2}\at{\nZero}$ is covered by Proposition~\ref{prop:mgu-2} when $l_{1}\at{\nZero}=l_{2}\at{\nZero}=0$, and by Proposition~\ref{prop:no-mgu}~(b) when $l_{1}\at{\nZero}=l_{2}\at{\nZero}\neq 0$. In the case $l_{1}\at{\nZero}\neq l_{2}\at{\nZero}$ we suppose w.l.o.g. that $l_{1}\at{\nZero} > l_{2}\at{\nZero}$, the other case being symmetric. Then we branch on the number of variables occurring in $l_{2}$: the case of no variables is covered by Proposition~\ref{prop:no-unifier}, and the case of more than one variable is covered by Proposition~\ref{prop:no-mgu}~(a). For the case of exactly one variable, we branch on the coefficient of this only variable. If the coefficient is zero, the result follows from Proposition~\ref{prop:mgu-1}, otherwise it follows by Proposition~\ref{prop:no-mgu}~(c).
\end{proof}

\subsection{The unification algorithm}\label{subsec:unification} We can now apply Theorem~\ref{thm:unif} in the design of a partial algorithm for universe level unification. The \textit{configurations} of our algorithm are either of the form $\bot$, or $\mathcal{C};\theta$ where $\theta$ is idempotent and $\dom(\theta)$ and $\fv(\mathcal{C})$ are disjoint. Configurations are then rewritten according to the rules of Figure~\ref{fig:unification}, where $\mathcal{C}[\sigma] :=\{l_{1}[\sigma]\peq l_{2}[\sigma] \mid l_{1}\peq l_{2} \in \mathcal{C}\}$ and $\theta[\sigma]:=\{i \mapsto i[\theta][\sigma] \mid i \in \dom(\theta)\}$. We also define the free variables of a configuration $\mathcal{C};\theta$ by $\fv(\mathcal{C};\theta) := \fv(\mathcal{C}) \cup \dom(\theta) \cup \vrange(\theta)$, where $\vrange(\theta) := \cup_{i \in \dom(\theta)} \fv(i[\theta])$.

\begin{figure}[ht]
\begin{align*}
  &(\textsc{Solve}) &\{l_1 \peq  l_2\} \cup \mathcal{C} ; \theta &\leadsto \mathcal{C}[\sigma] ; \sigma\cup \theta[\sigma] &&\text{if }\sigma = \textsf{mgu}(l_1,l_2)\\
&(\textsc{Fail}) &\{l_1 \peq  l_2\} \cup \mathcal{C} ; \theta &\leadsto \bot &&\text{if $l_1$ and $l_2$ are not unifiable}
\end{align*}
\caption{Unification algorithm for universe levels}
\label{fig:unification}
\end{figure}

Theorem~\ref{thm:unif} is used in $(\textsc{Solve})$ to detect if some equation admits a \mgu{}, and in $(\textsc{Fail})$ to detect if some equation is unsolvable. We assume that for each $\sigma$ chosen in step (\textsc{Solve}) we have $\dom(\sigma)= \fv(l_1, l_2)$, which guarantees that $\fv(\mathcal{C};\theta)\subseteq \fv(\mathcal{C}';\theta')$ whenever $\mathcal{C};\theta\leadsto \mathcal{C}';\theta'$\footnote{Note that this property is not true in Robinson's unification algorithm, because the step eliminating a trivial equation $\{x\peq x\}\cup \mathcal{C}; \theta \leadsto \mathcal{C};\theta$ may decrease the set of free variables of the configuration.}. This assumption can always be satisfied by removing useless entries $i \mapsto l$ in~$\sigma $ for which $i \not \in \fv(l_1,l_2)$, and adding trivial entries $i \mapsto i'$ for $i \in \fv(l_1,l_2)\setminus\dom(\sigma)$ and $i'$~fresh, and the resulting substitution is still a \mgu{} We also suppose that the set $\vrange(\sigma)$ only contains fresh variables, which guarantees that steps preserve idempotency~of  $\theta$ and disjointness of $\dom(\theta)$ and $\fv(\mathcal{C})$. This assumption can always be satisfied by composing~$\sigma$~with a bijective renaming, as the composition of a \mgu{} with a bijective renaming is also a \mgu{}

The algorithm succeeds if it reaches a configuration of the form $\emptyset; \theta$, it fails if it reaches the configuration $\bot$ and it gets stuck if it reaches any other  configuration in which no rule applies. Moreover, the following straightforward result guarantees that the algorithm cannot run forever, so these are the only options.

\begin{prop}
The algorithm always terminates.
\end{prop}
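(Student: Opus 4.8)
The plan is to exhibit a termination measure in $\mathbb{N}$ that strictly decreases with every rewrite step that is not already a step into a terminal configuration. For a configuration of the form $\mathcal{C};\theta$ I would take the measure $\mu(\mathcal{C};\theta) := |\mathcal{C}|$, the number of equations in the problem. Both rules of Figure~\ref{fig:unification} have a left-hand side of the shape $\{l_1 \peq l_2\} \cup \mathcal{C};\theta$, so neither the configuration $\bot$ nor a configuration $\emptyset;\theta$ is reducible; these, together with the stuck configurations $\mathcal{C};\theta$ to which neither rule applies, are exactly the terminal ones.

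The key step is to check that each application of (\textsc{Solve}) strictly decreases $\mu$. Reading $\{l_1 \peq l_2\} \cup \mathcal{C}$ as a disjoint union (i.e. $l_1 \peq l_2 \notin \mathcal{C}$), the source configuration has $|\mathcal{C}| + 1$ equations. In the target $\mathcal{C}[\sigma];\sigma \cup \theta[\sigma]$, the equation part $\mathcal{C}[\sigma] = \{l_1[\sigma] \peq l_2[\sigma] \mid l_1 \peq l_2 \in \mathcal{C}\}$ is the image of the finite set $\mathcal{C}$ under a single map, hence $|\mathcal{C}[\sigma]| \le |\mathcal{C}| < |\mathcal{C}| + 1$; in particular, applying the level substitution $\sigma$ can never create new equations, since levels do not contain equations. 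So $\mu$ drops by at least one at every (\textsc{Solve}) step. The rule (\textsc{Fail}) produces $\bot$, which is terminal, so it can occur at most once and necessarily as the last step of any reduction sequence.

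Putting this together: from any initial configuration $\mathcal{C}_0;\theta_0$, every maximal reduction sequence consists of at most $|\mathcal{C}_0|$ applications of (\textsc{Solve}), possibly followed by a single (\textsc{Fail}) step, and is therefore finite; the algorithm thus always reaches a terminal configuration ($\emptyset;\theta$, $\bot$, or a stuck one). I do not anticipate a genuine difficulty here: the only point deserving a second of attention is the inequality $|\mathcal{C}[\sigma]| \le |\mathcal{C}|$, which holds simply because substitution acts on the levels inside the equations without ever changing their number — in contrast with Robinson-style first-order unification, where instantiation can blow up the size of terms (though, even there, not the count of equations). One could alternatively use the multiset of equation sizes as the measure, but the plain cardinality already suffices and keeps the argument shortest.
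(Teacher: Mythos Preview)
Your proposal is correct and matches the paper's own proof exactly: the paper also argues that each (\textsc{Solve}) step decreases the cardinality of $\mathcal{C}$ and that (\textsc{Fail}) leads to a final state. Your version simply spells out the inequality $|\mathcal{C}[\sigma]|\le|\mathcal{C}|$ and the treatment of terminal configurations in more detail than the paper's one-line justification.
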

\begin{proof}
Each step of $(\textsc{Solve})$ decreases the cardinality of $\mathcal{C}$, and a step $(\textsc{Fail})$ leads to a final state.
\end{proof}

In the following, we write $l_{1}\peq l_{2} \in\mathcal{C};\theta$ when either $l_{1}\peq l_{2}\in\mathcal{C}$ or $l_1=i$ and $l_2 = l$ for some $i\mapsto l\in\theta$. We then write $\tau\vDash\mathcal{C};\theta$ when $l_{1}[\tau]\simeq l_{2}[\tau]$ for every $l_{1}\peq l_{2}\in\mathcal{C};\theta$.  Finally, given substitutions $\tau, \tau'$ and a set of variables $X$, we write $\tau =_X \tau'$ if $i[\tau]=i[\tau']$ for all~$i\in X$.

\begin{lem}[Key lemma]\label{lem:key}
  Suppose $\mathcal{C}_{1}; \theta_{1} \leadsto \mathcal{C}_{2};\theta_{2}$. Then
  \begin{enumerate}
    \item  $\tau \vDash \mathcal{C}_{1}; \theta_{1}$ and $\dom(\tau) \subseteq \fv(\mathcal{C}_1;\theta_1)$ imply $\tau' \vDash \mathcal{C}_{2};\theta_{2}$ for some $\tau'$ with $\tau' =_{\fv(\mathcal{C}_1;\theta_1)} \tau$ and~$\dom(\tau')\subseteq\fv(\mathcal{C}_2;\theta_2)$
    \item $\tau \vDash \mathcal{C}_{2};\theta_{2}$ implies $\tau\vDash \mathcal{C}_{1};\theta_{1}$
  \end{enumerate}
\end{lem}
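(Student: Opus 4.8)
The plan is as follows. First, observe that the only rule of Figure~\ref{fig:unification} that can produce a step $\mathcal{C}_1;\theta_1\leadsto\mathcal{C}_2;\theta_2$ is $(\textsc{Solve})$, since $(\textsc{Fail})$ yields $\bot$; so throughout we may assume $\mathcal{C}_1=\{l_1\peq l_2\}\cup\mathcal{C}$, $\theta_1=\theta$, $\sigma=\textsf{mgu}(l_1,l_2)$, $\mathcal{C}_2=\mathcal{C}[\sigma]$, $\theta_2=\sigma\cup\theta[\sigma]$, together with the standing assumptions $\dom(\sigma)=\fv(l_1,l_2)$ and $\vrange(\sigma)$ fresh, and the configuration invariant $\dom(\theta)\cap\fv(\mathcal{C}_1)=\emptyset$. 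A short preliminary step collects the inclusions we will repeatedly use: $\dom(\sigma)\cap\dom(\theta)=\emptyset$ (so $\theta_2$ is a genuine substitution), $\fv(\mathcal{C}_1;\theta_1)\subseteq\fv(\mathcal{C}_2;\theta_2)$, and $\fv(\mathcal{C}_2;\theta_2)\setminus\fv(\mathcal{C}_1;\theta_1)\subseteq\vrange(\sigma)$ (every genuinely new variable is a fresh one contributed by $\sigma$).

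For point (2) I would first record the elementary fact that, since $\sigma\subseteq\theta_2$, any $\tau\vDash\mathcal{C}_2;\theta_2$ satisfies $i[\tau]\simeq i[\sigma][\tau]$ for $i\in\dom(\sigma)$, hence $m[\tau]\simeq m[\sigma][\tau]$ for every level $m$ (the case $i\notin\dom(\sigma)$ being trivial and $\simeq$ a congruence). Then each entry of $\mathcal{C}_1;\theta_1$ is discharged by a short $\simeq$-chain: the chosen equation via $l_1[\tau]\simeq l_1[\sigma][\tau]\simeq l_2[\sigma][\tau]\simeq l_2[\tau]$ (using that $\sigma$ unifies $l_1\peq l_2$); an equation $l_a\peq l_b\in\mathcal{C}$ via $l_a[\sigma]\peq l_b[\sigma]\in\mathcal{C}_2$; and an entry $i\mapsto i[\theta]$ of $\theta$ via $i\mapsto i[\theta][\sigma]\in\theta[\sigma]\subseteq\theta_2$.

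For point (1), since $\tau$ unifies $l_1\peq l_2$ and $\sigma$ is a \mgu{}, there is $\rho$ with $i[\sigma][\rho]\simeq i[\tau]$ for all $i\in\fv(l_1,l_2)$, and we may assume $\dom(\rho)\subseteq\vrange(\sigma)$. Freshness of $\vrange(\sigma)$ makes it disjoint from $\fv(\mathcal{C}_1;\theta_1)\supseteq\dom(\tau)$, so $\tau':=\tau\cup\rho$ is well defined, agrees with $\tau$ on $\fv(\mathcal{C}_1;\theta_1)$, and has $\dom(\tau')\subseteq\fv(\mathcal{C}_1;\theta_1)\cup\vrange(\sigma)\subseteq\fv(\mathcal{C}_2;\theta_2)$, as required. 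The technical heart is a substitution lemma: for every level $m$ with $\fv(m)\subseteq\fv(\mathcal{C}_1;\theta_1)$ one has $m[\sigma][\tau']\simeq m[\tau]$ — proved variable by variable, distinguishing whether the variable lies in $\dom(\sigma)$ (then $i[\sigma][\tau']=i[\sigma][\rho]\simeq i[\tau]$, exploiting that $\tau'$ and $\rho$ agree on $\vrange(\sigma)$) or not (then $\sigma$ fixes it and $\tau'$ agrees with $\tau$ on it). Granting this, $\tau'\vDash\mathcal{C}_2;\theta_2$ follows entry by entry exactly as in point (2): equations $l_a[\sigma]\peq l_b[\sigma]\in\mathcal{C}_2$ from $l_a[\tau]\simeq l_b[\tau]$; entries $i\mapsto i[\sigma]\in\sigma$ from $i[\tau']=i[\tau]\simeq i[\sigma][\tau']$; entries $i\mapsto i[\theta][\sigma]\in\theta[\sigma]$ from $i[\tau']=i[\tau]\simeq i[\theta][\tau]\simeq i[\theta][\sigma][\tau']$.

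I expect the main obstacle to be the variable bookkeeping in point (1): one must simultaneously use the freshness of $\vrange(\sigma)$ and the normalization $\dom(\sigma)=\fv(l_1,l_2)$ to guarantee that $\tau'$ is well defined, that it restricts to $\tau$ on $\fv(\mathcal{C}_1;\theta_1)$, that $\dom(\tau')$ stays inside $\fv(\mathcal{C}_2;\theta_2)$, and that the substitution lemma applies on precisely the variables occurring in the entries of $\mathcal{C}_2;\theta_2$. Once these side conditions are pinned down, the remaining manipulations of $\simeq$ are routine (and, if convenient, can be carried out semantically via Proposition~\ref{prop:semanticlvl}).
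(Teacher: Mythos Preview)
Your proposal is correct and follows essentially the same route as the paper: restrict to the $(\textsc{Solve})$ case, extract the factoring substitution from the \mgu{} property (the paper calls it $\theta'$, you call it $\rho$), take its union with $\tau$ to get $\tau'$, establish the key equation $i[\sigma][\tau']\simeq i[\tau]$ for all relevant variables by the same case split on $i\in\dom(\sigma)$ versus $i\notin\dom(\sigma)$, and then discharge each entry of the configurations. Your preliminary bookkeeping on domain inclusions and your explicit entry-by-entry verification are slightly more detailed than the paper's write-up, but the argument is the same.
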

\begin{proof}
  The only possible case is rule (\textsc{Solve}):\[
    \{l_{1} \peq  l_{2}\} \cup \mathcal{C} ; \theta \leadsto \mathcal{C}[\sigma] ; \sigma \cup \theta[\sigma]
  \]where $\sigma$ is a \mgu{} of $l_{1}$ and $l_{2}$. We show each point separately.

  \begin{enumerate}
    \item By hypothesis we have $l_{1}[\tau] \simeq l_{2}[\tau]$, so because $\sigma$ is a \mgu{} for $l_{1}$ and $l_{2}$ it follows that for some $\theta'$ we have $i[\sigma][\theta'] \simeq i[\tau]$ for  all $i \in \fv(l_1,l_2)$. In the following, we suppose wlog that $\dom(\theta')\subseteq\vrange(\sigma)$ --- otherwise we just take the restriction of $\theta'$ to $\vrange(\sigma)$, and the equation $i[\sigma][\theta'] \simeq i[\tau]$ still holds.

    We first claim that for some $\tau'$ we have $i[\sigma][\tau'] \simeq i[\tau]$ for all $i\in\fv(\{l_1\peq l_2\}\cup\mathcal{C};\theta)$. Because  $\dom(\tau)\subseteq\fv(\{l_1 \peq l_2\}\cup\mathcal{C};\theta)$ and $\dom(\theta')\subseteq\vrange(\sigma)$ and  $\vrange(\sigma)$ only contains fresh variables, we have $\dom(\theta')\cap\dom(\tau)= \emptyset$, allowing us to define $\tau':= \tau\cup\theta'$. If $i \in \fv(l_1,l_2)$, then  $i[\sigma]$ only contains fresh variables, and because $\tau'$ and $\theta'$ agree on fresh variables, we have $i[\sigma][\tau']=i[\sigma][\theta']\simeq i[\tau]$. Finally, if $i \in \fv(\{l_1\peq l_2\}\cup\mathcal{C};\theta)\setminus\fv(l_1,l_2)$ then because $\fv(l_1,l_2)=\dom(\sigma)$ we have $i[\sigma][\tau']=i[\tau']$, and because $\tau$ and $\tau'$ agree on non-fresh variables we have $i[\tau']=i[\tau]$.

    By the above claim, for each $l'_{1} \peq  l'_{2} \in \mathcal{C};\theta$ we have $l'_{p}[\tau] \simeq l'_{p}[\sigma][\tau']$ for $p=1,2$, so $\tau \vDash \mathcal{C};\theta$ implies $\sigma[\tau'] \vDash \mathcal{C}; \theta$ and thus $\tau' \vDash \mathcal{C}[\sigma];\theta[\sigma]$. Finally, the claim also implies $i[\sigma][\tau']\simeq i[\tau']$ for all $i \in \dom(\sigma)$, given that $i[\tau] = i[\tau']$ for $i$ not fresh. Hence, we conclude $\tau' \vDash \mathcal{C}[\sigma];\sigma \cup \theta[\sigma]$ as required.

    \item By hypothesis we have $i[\tau] \simeq i[\sigma][\tau]$ for all $i \in \dom(\sigma)$, and the equation trivially holds for $i \not \in \dom(\sigma)$, given that in this case $i[\sigma]=i$. Therefore, from $\tau \vDash \mathcal{C}[\sigma]; \theta[\sigma]$ we get  $\tau \vDash\mathcal{C};\theta$. Finally, because $\sigma$ unifies $l_{1}$ and $l_{2}$, we have $l_{1}[\sigma][\tau]  \simeq l_{2}[\sigma][\tau] $, and because $l_{p}[\tau]\simeq l_{p}[\sigma][\tau]$ for $p=1,2$, we get $l_{1}[\tau]\simeq l_{2}[\tau]$. We conclude $\tau \vDash \{l_{1}\peq l_{2}\}\cup\mathcal{C};\theta$. \qedhere
  \end{enumerate}
\end{proof}

The key lemma then leads to the correctness of the unification algorithm.

\begin{thm}[Correctness of unification]
If $\mathcal{C}; \emptyset \leadsto^{*}\emptyset;\theta$ then $\theta$ is a most general unifier for $\mathcal{C}$, and if $\mathcal{C};\emptyset \leadsto^{*} \bot$ then $\mathcal{C}$ has no unifier.
\end{thm}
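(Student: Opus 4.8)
The plan is to deduce both halves of the statement from the Key Lemma (Lemma~\ref{lem:key}) by iterating it along the reduction sequence, in the two directions respectively. Throughout I will use that any unifier of $\mathcal{C}$ may be restricted to $\fv(\mathcal{C})$ without ceasing to be a unifier, so that we may always assume $\dom(\tau)\subseteq\fv(\mathcal{C})=\fv(\mathcal{C};\emptyset)$ for the unifiers $\tau$ we work with; since the notion of "more general" only inspects the variables occurring in $\mathcal{C}$, a witness of generality for the restriction is also one for $\tau$ itself.

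For the success case, suppose $\mathcal{C};\emptyset\leadsto^{*}\emptyset;\theta$. I will first check that $\theta$ is a unifier of $\mathcal{C}$. Since the algorithm maintains idempotency of $\theta$, for every $i\mapsto l\in\theta$ we have $\fv(l)\cap\dom(\theta)=\emptyset$, hence $l[\theta]=l=i[\theta]$; thus $\theta\vDash\emptyset;\theta$ holds trivially. Iterating part (2) of Lemma~\ref{lem:key} backwards along the reduction sequence then gives $\theta\vDash\mathcal{C};\emptyset$, i.e.\ $\theta$ unifies $\mathcal{C}$. For most generality, let $\tau$ be any unifier of $\mathcal{C}$ with $\dom(\tau)\subseteq\fv(\mathcal{C})$, so $\tau\vDash\mathcal{C};\emptyset$ and the hypothesis of part (1) is met. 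Iterating part (1) forwards along the sequence — the domain conditions chain, since the conclusion $\dom(\tau')\subseteq\fv(\mathcal{C}_{2};\theta_{2})$ of one step feeds the hypothesis of the next — I obtain a substitution $\tau^{*}$ with $\tau^{*}\vDash\emptyset;\theta$ and $\tau^{*}=_{\fv(\mathcal{C})}\tau$. Now for an arbitrary variable $i$: if $i\mapsto l\in\theta$ then $i[\theta][\tau^{*}]=l[\tau^{*}]\simeq i[\tau^{*}]$ using $\tau^{*}\vDash\emptyset;\theta$; otherwise $i[\theta]=i$ and $i[\theta][\tau^{*}]=i[\tau^{*}]$. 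In particular, for every $i\in\fv(\mathcal{C})$ we get $i[\theta][\tau^{*}]\simeq i[\tau^{*}]=i[\tau]$, so $\tau^{*}$ witnesses that $\theta$ is more general than $\tau$. Hence $\theta$ is a most general unifier of $\mathcal{C}$.

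For the failure case, suppose $\mathcal{C};\emptyset\leadsto^{*}\bot$. The last step must be an instance of $(\textsc{Fail})$, so the sequence factors as $\mathcal{C};\emptyset\leadsto^{*}\{l_{1}\peq l_{2}\}\cup\mathcal{C}';\theta'\leadsto\bot$, where $l_{1}$ and $l_{2}$ are not unifiable (as detected via Theorem~\ref{thm:unif}). If $\mathcal{C}$ had a unifier $\tau$, then — assuming again $\dom(\tau)\subseteq\fv(\mathcal{C})$ — iterating part (1) of Lemma~\ref{lem:key} forwards yields some $\tau^{*}\vDash\{l_{1}\peq l_{2}\}\cup\mathcal{C}';\theta'$, and in particular $l_{1}[\tau^{*}]\simeq l_{2}[\tau^{*}]$, contradicting the non-unifiability of $l_{1}$ and $l_{2}$. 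Therefore $\mathcal{C}$ has no unifier.

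I do not expect a serious obstacle here: all the real content is already packed into Lemma~\ref{lem:key} and Theorem~\ref{thm:unif}, and what remains is essentially bookkeeping. The points that need a little care are keeping the domain conditions of Lemma~\ref{lem:key}(1) intact across the iteration, invoking idempotency of $\theta$ at the terminal configuration $\emptyset;\theta$, and the harmless reduction of an arbitrary unifier of $\mathcal{C}$ to one whose domain lies inside $\fv(\mathcal{C})$.
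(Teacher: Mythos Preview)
Your proposal is correct and follows essentially the same approach as the paper's own proof: both iterate part (2) of the Key Lemma backwards from $\emptyset;\theta$ (using idempotency of $\theta$) to show $\theta$ is a unifier, iterate part (1) forwards from a restricted arbitrary unifier to establish most generality, and handle the failure case by propagating a unifier forward to the configuration preceding $\bot$ to obtain a contradiction. Your version is in fact slightly more explicit about the domain-condition bookkeeping needed to chain part (1), which the paper leaves implicit.
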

\begin{proof}
  Suppose that $\mathcal{C}; \emptyset \leadsto^{*} \emptyset; \theta$. Because $\theta$ is idempotent, we have $\theta \vDash \emptyset; \theta$, so by iterating Lemma~\ref{lem:key} we get $\theta \vDash \mathcal{C};\emptyset$, showing that $\theta$ is a unifier for $\mathcal{C}$. To see it is a most general one, consider any other unifier $\tau$, and let $\tau'$ be its restriction to variables occurring in $\mathcal{C}$. Then by iterating Lemma~\ref{lem:key} we get $\tau'' \vDash \emptyset;\theta$ for some  $\tau'' =_{\fv(\mathcal{C})} \tau'$, hence $i[\tau''] \simeq i[\theta][\tau'']$ for all $i \in \dom(\theta)$. But because this equation also holds trivially for $i \not \in \dom(\theta)$, we get $i[\tau'']\simeq i[\theta][\tau'']$ for all $i$. Finally, because $\tau$ and $\tau'$ and $\tau''$ all agree on $\fv(\mathcal{C})$, then we get $i[\tau] \simeq i[\theta][\tau'']$ for all $i \in \fv(\mathcal{C})$.

  Now suppose that $\mathcal{C};\emptyset \leadsto^{*} \bot$. Then we have $\mathcal{C};\emptyset \leadsto^{*} \mathcal{C}';\theta' \leadsto \bot$.   If  $\tau$ is a unifier for $\mathcal{C}$, then by iterating Lemma~\ref{lem:key} with the restriction of $\tau$ to $\fv(\mathcal{C})$, we get a unifier for $\mathcal{C}'$. But if $\mathcal{C}';\theta' \leadsto \bot$, then $\mathcal{C}'$ must contain an unsolvable equation, a contradiction.
\end{proof}

\begin{rem}
We note that the correctness proofs do not rely on any specificity of the equational theory of universe levels, and therefore the algorithm of Figure~\ref{fig:unification}  can be used with any equational theory in which one can compute a \mgu{} for two terms when it exists.
\end{rem}

Because our algorithm uses Theorem~\ref{thm:unif}, which gives a complete characterization of the equations that admit a \mgu{}, it follows that our algorithm is complete for solving equations, in the sense that it can always find a \mgu{} for an equation that admits one. We can then wonder whether if it is also complete for problems that contain more than one equation. The following example shows that this is not the case.

\begin{exa}\label{exa:mgu}
  Consider the problem $\mathcal{C} := \{1+i_{0} \peq i_{2}\sqcup 1+i_{1},~ 1+i_{0} \peq  i_{1} \sqcup 1+i_{2}\}$. We can check that, according to Theorem~\ref{thm:unif}, both equations are solvable but admit no most general unifiers, so neither the step (\textsc{Solve}) nor (\textsc{Fail}) apply. Nevertheless, by combining both equations we get $i_{2}\sqcup 1 + i_{1}  \peq i_{1}\sqcup 1 + i_{2}$, whose canonical form is $0 \sqcup i_{1}\peq 0 \sqcup i_{2}$. Therefore, $\mathcal{C}$ is equivalent to $\mathcal{C} \cup \{0 \sqcup i_{1}\peq 0 \sqcup i_{2}\}$, a problem that can be solved by our algorithm, yielding the \mgu{} $\theta =i_{1}\mapsto 0 \sqcup i_{2},~i_{0} \mapsto 0  \sqcup  i_{2} $. It  follows that $\mathcal{C}$ also admits $\theta$ as a \mgu{}, yet our algorithm does not return any \mgu{}, showing it is not complete for problems with more than one equation.
\end{exa}

Moreover, Theorem~\ref{thm:term-no-mgu} shows that, even if our algorithm were complete, it would still get stuck in problems which are solvable but admit no \mgu{} In practice, it is very unsatisfying for the unification to get stuck, as this means that the whole predicativization algorithm has to halt. Thus, in order to prevent this, in our implementation we extended the unification with heuristics that are \textit{only} applied when none of the presented rules applies. Then, whenever the heuristics are applied, the computed substitution is still a unifier, but might not be a most general one. This means that the term which generated the unification problem  can still be translated to a valid term in $\mathbb{UPP}$, but the resulting term might not be a most general universe-polymorphic instance.

\section{\textsc{Predicativize}, the implementation}
\label{sec:predicativize}

In this section we present \textsc{Predicativize}, an implementation available at \url{https://github.com/Deducteam/predicativize/} of a variant of our algorithm.

Our tool is implemented on top of \DkCheck{} \cite{saillard15phd}, a type-checker for \Dedukti{}, and thus does not rely neither on the codebase of \Agda{}, nor on the codebase of any other proof assistant. Like in the case of \Universo{} \cite{thire}, we instrument \DkCheck{}'s conversion checker in order to implement the computation of level constraints.

Because the currently available type-checkers for \Dedukti{} do not implement rewriting modulo for equational theories other than AC (associativity-commutativity), we used Genestier's encoding of the equational theory of universe levels \cite{guillaume} in order to  define the theory $\mathbb{UPP}$ in a \DkCheck{} file.

We also note that for the moment the implementation lags behind the theory in various places, in particular by still using the older unification algorithm and the previous version of $\mathbb{UPP}$ proposed in our previous work~\cite{DBLP:conf/csl/FelicissimoBB23}.

To see how everything works in practice, one can run \texttt{make running-example} which translates our running example and produces  a \Dedukti{} file \texttt{output/running_example.dk}  and an \Agda{} file \texttt{agda_output/running-example.agda}. In order to test the tool with a more realistic example, the reader can also run \texttt{make test_agda}, which translates a proof of Fermat's little theorem from the \Dedukti{} encoding of \textsc{HOL} \cite{sttfa} to $\mathbb{UPP}$.

In the following, let us give a high-level description  of some of the practical differences with the theory presented until now. %

\paragraph*{User added constraints}

As we have seen, our transformation tries to compute the most general type for a definition or declaration to be typable. However, it is not always desirable to have the most general type, as shown by the following example.

\begin{exa}\label{succ}
  Consider the local signature
\begin{align*}
 \Phi = \textsf{Nat} : \Tm~\U_\square,~ \textsf{zero} : \Tm~\textsf{Nat},~ \textsf{succ} : \Tm~\textsf{Nat} \to \Tm~\textsf{Nat}
\end{align*}
 defining the natural numbers in $\mathbb{I}$. The translation of this signature by our algorithm is
\begin{align*}
  \Phi' = ~&\textsf{Nat} : (i : \nLvl) \to \Tm~\U_i,~ \textsf{zero} : (i: \nLvl) \to \Tm~(\textsf{Nat}~i),\\
                     &\textsf{succ} : (i ~j: \nLvl) \to \Tm~(\textsf{Nat}~i) \to \Tm~(\textsf{Nat}~j)
\end{align*}
However, we normally would like to impose $ i $ to be equal to $ j $ in the type of $ \textsf{succ} $, or even to impose $ \textsf{Nat} $ not to be universe-polymorphic.
\end{exa}

In order to solve this problem, we added to \textsc{Predicativize} the possibility of adding constraints by the user, in such a way that we can for instance impose $\textsf{Nat} $ to be in the bottom universe, or $ i = j $ in the type of the successor. Adding constraints can also help the unification algorithm, which can be particularly useful for simplifying unification problems when translating definitions that do not need to be universe-polymorphic. %

\paragraph*{Rewrite rules}
\label{subsec:rewrite}

The algorithm that we presented and proved correct covers two types of entries: definitions and constants. This is enough for translating proofs written in higher-order logic or similar systems, in which every step either poses an axiom or makes a definition or proof. However, when dealing with full-fledged type theories, such as those implemented by \textsc{Coq} or \textsc{Matita}, which also feature inductive types, it is customary to use rewrite rules to encode recursion and pattern matching~\cite{assaf,gaspard,thire}. If we simply ignore these rules when performing the translation, we would run into problems as the entries that appear after may need them to typecheck.

Therefore, our implementation  extends the presented algorithm to also translate rewrite rules. In order to do this, we use \DkCheck{}'s subject reduction checker to generate constraints and proceed similarly as in the algorithm. Because this feature is still experimental, this step requires user intervention in most cases. This is done by adding new constraints over the symbols appearing in the rules, in order for their translations to be less universe-polymorphic, which helps the algorithm. This part of the translation is yet to be formally defined, and its correctness is still to be proven. Nevertheless, it has been successfully used on the translation of \textsc{Matita}'s arithmetic library to \textsc{Agda}.

\paragraph*{Agda output}
\label{subsec:agda}

\Predicativize{} produces proofs in the theory $\mathbb{UPP}$, which is a subtheory of the one implemented by the \textsc{Agda} proof assistant. In order to produce proofs that can be used by \Agda{}, we also integrated in \Predicativize{} a translator that performs a simple syntactical translation from a \Dedukti{} file in the theory $\mathbb{UPP}$  to an \Agda{} file. For instance, \texttt{make test\_agda\_with\_typecheck} translates Fermat's Little Theorem proof from HOL to \Agda{} and typechecks it.

\section{Translating Matita's arithmetic library to Agda}
\label{sec:matita}

\begin{figure}[ht]
\begin{center}
  \includegraphics[width=\textwidth]{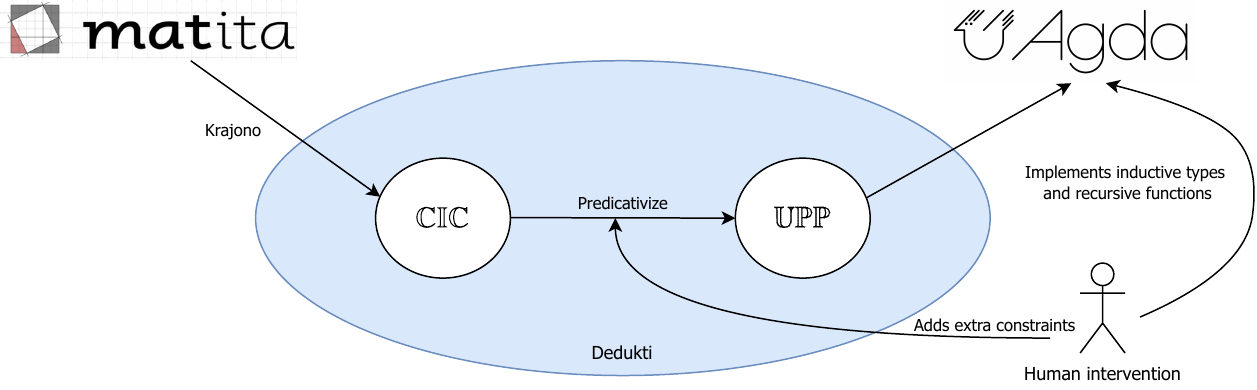}
\end{center}
\caption{Diagram representing the translation of \textsc{Matita}'s arithmetic library into \textsc{Agda}}
\label{matita}
\end{figure}

We now discuss how we used \textsc{Predicativize} to translate  \textsc{Matita}'s arithmetic library to \Agda{}. The translation is summarized in Figure \ref{matita}, where $\mathbb{CIC}$ stands for a \Dedukti{} theory defining the Calculus of Inductive Constructions, the underlying type theory of the \textsc{Matita} proof assistant.

\textsc{Matita}'s arithmetic library~\cite{matitaslib} was already available in \Dedukti{} thanks to  \textsc{Krajono}~\cite{assaf,matitadk}, a translator from \textsc{Matita} to the theory $\mathbb{CIC}$ in \Dedukti{}. Therefore, the first step of the translation was already done for us.

Then, using \textsc{Predicativize} we translated the library from $\mathbb{CIC}$ to $\mathbb{UPP}$. As the encoding of \textsc{Matita}'s recursive functions uses rewrite rules, their translation required some user intervention to add constraints over certain constants, as mentioned in the previous section. Moreover, in order to help the unification algorithm, we also added constraints for fixing the levels of many definitions which were only required to be at one universe. The list of all added constraints can be found in the file \texttt{extra_cstrs/matita.dk} in the implementation. These were obtained, for each of the concerned definitions, by looking at its (unconstrained) output and then adding equations involving some of its level variables --- similarly to how $i$ and $j$ can be equated in Example~\ref{succ}. Once this step is done, the library is known to be predicative, as it typechecks in $\mathbb{UPP}$.

We then used \textsc{Predicativize} to translate these files to \Agda{} files. However, because the rewrite rules in the \Dedukti{} files cannot be translated to \Agda{}, and given that they are needed for typechecking the proofs, the library does not typecheck directly. Therefore, to finish our translation we had to  define the inductive types and recursive functions manually in \Agda{}. To do this we first assembled the type formers and constructors, which had been translated simply as postulates, into inductive type declarations. This required us to add further constraints for some constants, for instance between $i$ and $j$ in the type of the successor (Example~\ref{succ}), in order to implement them as constructors of an inductive type.

With the inductive types defined, we could then define the recursive functions (like addition), which had been translated as postulates with no computational content. Thankfully, even if we cannot translate the rewrite rules from \Dedukti{} to \Agda{} in a way that is accepted by \Agda{}, we could still translate them as comments in the \Agda{} files. Then, instead of writing such functions from scratch, we could just adapt these comments into valid \Agda{} function declarations. We believe that this step, also needed in previous work~\cite{sttfa}, could be automated by better studying the translation between different representations of recursive functions. Nevertheless, because most of \textsc{Matita}'s arithmetic library is made of proofs, whose translation we do not need to change, automating it was not crucial in our case,  so we decided to leave this study for future work.

The result of the translation is available at
\begin{center}
  \url{https://doi.org/10.5281/zenodo.10686897}
\end{center}
and, as far as we know, contains the very first proofs in \Agda{} of Bertrand's Postulate and Fermat's Little Theorem. It also contains a variety of other interesting results such as the Binomial Law, the Chinese Remainder Theorem, and the Pigeonhole Principle. Moreover, this library typechecks with the \texttt{--safe} flag, attesting that it does not use any of \textsc{Agda}'s more exotic and unsafe features.

\begin{table}
  \centering
  \begin{tabular}{l|cccc}
  \toprule
   &  \textsc{Matita} & \textsc{Dedukti} $(\mathbb{CIC})$ & \textsc{Dedukti} $(\mathbb{UPP})$ & \textsc{Agda} \\\hline
  File size (in Kb)  &  67 & 640 &  570 &190\\
  \bottomrule
  \end{tabular}
  \caption{Comparison of (compressed) file sizes}
  \label{filesizes}
\end{table}

We conclude by discussing some statistics about the translation. The total translation time, from \textsc{Dedukti} $(\mathbb{CIC})$ to \textsc{Dedukti} $(\mathbb{UPP})$ and then to \textsc{Agda}, is about 32 minutes on a machine with an i7 processor. We also provide in Table~\ref{filesizes} a comparison of the file sizes in \textsc{Matita}, \textsc{Dedukti} (in both theories $\mathbb{CIC}$ and $\mathbb{UPP}$) and \textsc{Agda}. Here we chose to analyze their compressed sizes (using \texttt{.tar.xz}) to avoid discrepancies arising from administrative differences in the files and formats. As we see, the translation from \textsc{Matita} to \textsc{Dedukti} ($\mathbb{CIC}$) increases a lot the file sizes, which are multiplied by almost 10. This is not surprising, as the original proofs are done using tactics, that are compiled to proof terms when going to \textsc{Dedukti}. Moreover, the representation of terms in \textsc{Dedukti} is much more annotated and low-level than in commonly-used proof assistants, which also explains why some of this extra size is eliminated when going from \textsc{Dedukti} ($\mathbb{UPP}$) to \textsc{Agda}. Yet, the proofs in \textsc{Agda} are still much more low-level than their \textsc{Matita} counterparts given that they still use proof terms instead of tactics. Finally, we see that going from \textsc{Dedukti} ($\mathbb{CIC}$) to \textsc{Dedukti} ($\mathbb{UPP}$) only mildly alters the file sizes, which is not surprising since our translation does not drastically~change~the~terms.

\section{Conclusion}
\label{sec:conc}

We have proposed a transformation for sharing proofs with predicative systems. Our implementation allowed to translate many non-trivial proofs from \textsc{Matita}'s arithmetic library to \Agda{}, showing that our proposal  works well in practice.

Our solution is based on the use of universe-polymorphic elaboration. Even if elaboration algorithms are already well-studied in the literature, our proposal differs from most on the use of universe level unification, which is needed in our setting for handling universe polymorphism. Other proposals for universe-polymorphic elaboration such as \cite{typechecking-with-universes} and \cite{coq} avoid the use of universe level unification by allowing in their target languages for entries in the signature to come with associated sets of constraints, which are then verified locally at each use. This feature is however unfortunately not supported by \textsc{Agda}, the main target of our translation.

Our proposal thus required us to study the problem of universe level unification. In order to provide an algorithm for this problem, we first contributed with a complete characterization of which equations admit a \mgu{}, along with an explicit description of a \mgu{} when it exists. We then employed this characterization in the design of a unification algorithm, which is an improvement over our preliminary work~\cite{DBLP:conf/csl/FelicissimoBB23}. It is in particular able to solve all equations that admit a \mgu{}, whereas the algorithm of \cite{DBLP:conf/csl/FelicissimoBB23} was not --- for instance, it was not capable of solving the first equation of Example~\ref{exa:unif}. However some problems admitting a \mgu{} cannot be solved by our algorithm because they combine multiple equations, none of them admitting a \mgu{} (see Example~\ref{exa:mgu}). Even if our practical results show that our algorithm is already sufficiently powerful for our needs, one can wonder if a complete unification algorithm exists. We leave this interesting but difficult problem for future work.

\Agda{} also features an algorithm for solving level metavariables, but it does not seem to have been formally specified or proven correct in the literature, making it hard to provide a detailed comparison with our work. However, practical tests seem to suggest that our algorithm is an improvement. As an example, typechecking in \textsc{Agda} the entry
\begin{align*}
&\textsf{test} : (A : \textsf{Set} ~\_) \to (B : \textsf{Set} ~\_) \to (C : \textsf{Set} ~\_) \to (R : (D : \textsf{Set} ~\_) \to D \to D \to \textsf{Set} ~\_) \to \textsf{Set} ~\_\\
&\textsf{test} = \lambda  A ~ B~C~ R \to  R~ (\textsf{Set} ~\_)~ (A \to C)~ (A \to B)
\end{align*}
gives the error \texttt{Failed to solve the following constraints: _0 $\sqcup$ _1  = _0 $\sqcup$ _2}, however this constraint is solvable by our algorithm (see Example~\ref{exa:unif}). Therefore, our work could also be used to improve \textsc{Agda}'s unification algorithm.

For future work, we would also like to look  at possible ways of making \textsc{Predicativize} less dependent on user intervention. In particular, the translation of inductive types and recursive functions involves some considerable manual work. We thus expect improvements in this direction to be needed in order to translate larger proof libraries.

\section*{Acknowledgment}
  \noindent The authors would like to thank Ashish Kumar Barnawal for the very helpful discussions that led to this article, François Thiré for the help while developing \textsc{Predicativize}, Gilles Dowek for remarks about previous versions of this paper, Jesper Cockx and Vincent Moreau for discussions about universe levels and the anonymous reviewers of both CSL and LMCS for their very helpful comments and remarks.

\bibliographystyle{alphaurl}
\bibliography{ref}

\end{document}